  \providecommand\BibTeX{{
    \normalfont B\kern-0.5em{\scshape i\kern-0.25em b}\kern-0.8em\TeX}}}
\newcommand{\summarizeAdjustedMethod}{Parallel\-Hash\-Messaging\-Algorithm}
\newcommand{\enquote}[1]{``#1''}
\crefname{section}{Sect.}{Sect.}
\Crefname{section}{Section}{Sections}
\crefname{figure}{Fig.}{Fig.}
\Crefname{figure}{Figure}{Figures}
\crefname{example}{Ex.}{Ex.}
\crefname{example}{Example}{Examples}
\crefname{algorithm}{Alg.}{Alg.}
\Crefname{algorithm}{Algorithm}{Algorithms}
\crefname{lstlisting}{Listing}{Listings}
\newcommand{\images}{figures/}
\newcommand{\model}{FLUID}
\newcommand{\vertexHashIndex}{Vertex\-Update\-Hash\-Index}
\newcommand{\summarizeMethod}{Parallel\-Summarize}
\newcommand{\addSchema}{ADD-SG} 
\newcommand{\modifySchema}{MOD-SG} 
\newcommand{\deleteSchema}{DEL-SG} 
\newcommand{\addInstance}{ADD-PE} 
\newcommand{\modifyInstance}{MOD-PE} 
\newcommand{\deleteInstance}{DEL-PE}
\newcolumntype{L}[1]{>{\Centering\arraybackslash}p{#1}}
\newcolumntype{Y}[2]{*{#1}{>{\hsize=#2\hsize\centering\arraybackslash}X}}
\mathchardef\mhyphen="2D
\mathchardef\<="3C
\mathchardef\>="3E
\newcommand{\id}{\mathrm{id}}
\newcommand{\idrel}{\mathrm{id_{rel}}}
\newcommand{\PAY}{\mathrm{PAY}}
\newcommand{\PC}{\mathrm{PC}}
\newcommand{\OC}{\mathrm{OC}}
\newcommand{\OCtype}{\OC_{\mathrm{type}}}
\newcommand{\PCrel}{\PC_{\mathrm{rel}}}
\newcommand{\nsout}[1]{\Gamma^+(#1)}
\newcommand{\simS}{{\sim_s}}
\newcommand{\simP}{{\sim_p}}
\newcommand{\simO}{{\sim_o}}
\newcommand{\Vvs}{V_{\mathrm{vs}}}
\newcommand{\Evs}{E_{\mathrm{vs}}}
\newcommand{\Vpe}{V_{\mathrm{pe}}}
\newcommand{\Epe}{E_{\mathrm{pe}}}
\newcommand{\SigmaKey}{\Sigma_{\text{Key}}}
\newcommand{\SigmaValue}{\Sigma_{\text{Value}}}
\newcommand{\GLPG}{G_\text{LPG}}
\newcommand{\GRDF}{G_\text{RDF}}
\newcommand{\SGincr}{SG_{\mathrm{incr}}}
\newcommand{\SGbatch}{SG_{\mathrm{batch}}}
\newcommand{\idprev}{id_{\mathrm{prev}}}
\newcommand{\vsprev}{vs_{\mathrm{prev}}}
\newcommand{\GDB}{GDB}
\begin{document}
\lefthyphenmin=4
\righthyphenmin=4
\sloppy

\title[A Parallel Algorithm for Graph Summarization, Incremental Summarization, and Iterative $k$-Bisimulation]{A Parallel Algorithm for Graph Summarization and its Extension to Incremental Summarization and Iterative $k$-Bisimulation}

\title
[Parallel Algorithm for Graph Summaries and two Extensions to Incremental Summarization and $k$-Bisimulation]
{Time and Memory Efficient Parallel Algorithm for Structural Graph Summaries and two Extensions to Incremental Summarization and $k$-Bisimulation for Long $k$-Chaining}

\author{Till Blume}
\orcid{0000-0001-6970-9489}
\email{till.blume@de.ey.com}
\affiliation{
  \institution{ Ernst \& Young GmbH WPG – R\&D}
  \city{Berlin}
  \country{Germany}
}
\author{Jannik Rau}
\orcid{0000-0001-7764-6131}
\email{jannik.rau@uni-ulm.de}
\affiliation{
  \institution{Data Science and Big Data Analytics Group, Ulm University}
  \city{Ulm}
  \country{Germany}
}
\author{Ansgar Scherp}
\orcid{0000-0002-2653-9245}
\email{ansgar.scherp@uni-ulm.de}
\affiliation{
  \institution{Data Science and Big Data Analytics Group, Ulm University}
  \city{Ulm}
  \country{Germany}
}

\author{David Richerby}
\orcid{https://orcid.org/0000-0003-1062-8451}
\email{david.richerby@essex.ac.uk}
\affiliation{
  \institution{University of Essex}
  \city{Colchester}
  \country{UK}
}

\addtocontents{toc}{\protect\setcounter{tocdepth}{-1}}

\begin{abstract}
Most graph summarization algorithms are tailored to a specific graph summary model and were designed for a one-time batch computation of a summary. 
We developed a flexible parallel algorithm for graph summarization based on vertex-centric programming and parameterized message passing.
The base algorithm supports infinitely many structural graph summary models defined in a formal language.
An extension of the parallel base algorithm allows incremental graph summarization, \ie updating the summary as the graph evolves.\footnote{
\textbf{Note on extension of this paper to prior work:}
This paper is based on our earlier publication of the incremental graph summarization algorithm~\cite{DBLP:conf/cikm/BlumeRS20} published at CIKM 2020. 
In this paper, we make several significant extensions: 

First, we provide a formal proof of correctness for the incremental update algorithm, demonstrating that the incremental algorithm is guaranteed to produce the same summary as the batch-based summarization algorithm.

Second, although the incremental algorithm for graph summarization published at CIKM 2020 supports $k$-bisimulation for values of $k>1$, it requires nested data structures for the message passing.
Thus, in this paper we extend the base summarization algorithm by an efficient hash-based messaging mechanism to support scalable iterative computation of graph summaries based on $k$-bisimulation for arbitrary~$k$.
This new variant generates unique hashes at each iteration of a bisimulation computation and passes these to neighboring vertices instead of full summaries, a technique due to Schätzle \etal{}~\cite{DBLP:conf/sigmod/SchatzleNLP13}. 
Thus, we call this new algorithm the parallel hash-based graph summarization algorithm.

Third, we have significantly extended the experimental analysis of graph summary computation, on both synthetic and real-world graph datasets. 
We use datasets ranging from tens of millions of edges up to one billion edges, a significant increase over the original CIKM 2020 paper. 
Two new experiments are conducted for the incremental algorithm.
One experiment investigates the effect of parallelizing the incremental algorithm versus the batch-based algorithm on multiple compute cores. 
The second measures the memory overhead of the \vertexHashIndex{} data structure in the incremental algorithm.
Finally, a fourth experiment is added to the paper investigating the scalability of the new parallel hash-based graph summarization algorithm on datasets with up to one billion edges.
%
}
In this paper, we prove that the incremental algorithm is correct and show that updates are performed in time $\mathcal{O}(\Delta \cdot d^k)$, where $\Delta$ is the number of additions, deletions, and modifications to the input graph, $d$ the maximum degree, and $k$ is the maximum distance in the subgraphs considered. 
Although the iterative algorithm supports values of $k>1$, it requires nested data structures for the message passing that are memory-inefficient.
Thus, we extended the base summarization algorithm by a hash-based messaging mechanism to support a scalable iterative computation of graph summarizations based on $k$-bisimulation for arbitrary~$k$.
We empirically evaluate the performance of our algorithms using benchmark and real-world datasets.
The incremental algorithm almost always outperforms the batch computation.
We observe in our experiments that the incremental algorithm is faster even in cases when $50\%$ of the graph database changes from one version to the next.
The incremental computation requires a three-layered hash index, which has a low memory overhead of only $8\%$ ($\pm 1\%$) compared to its batch-based version.
Finally, the incremental summarization algorithm outperforms the batch algorithm even when fewer cores are available.
The iterative parallel $k$-bisimulation algorithm computes summaries on graphs with over $10$M edges within seconds.
We show that the algorithm scales and processes graphs of $100+\,$M edges within a few minutes while having a moderate memory consumption below of $150$~GB.
For the largest BSBM1B dataset with 1~billion edges, it computes $k=10$ bisimulation in under an hour, \ie we need only four to five minutes per iteration.
\end{abstract}

\maketitle

\section{Introduction}
\label{sec:introduction}

Structural graph summaries are condensed representations of graphs that preserve selected characteristics of the original graph~\cite{DBLP:journals/vldb/CebiricGKKMTZ19,DBLP:journals/csur/LiuSDK18}.
Different attempts have been made to classify existing graph summarization approaches~\cite{DBLP:journals/csur/LiuSDK18,DBLP:journals/corr/abs-2004-14794,DBLP:journals/pvldb/KhanBB17,DBLP:journals/vldb/CebiricGKKMTZ19}.
Structural graph summaries precisely capture specific structural features of the data graph~\cite{DBLP:journals/vldb/CebiricGKKMTZ19}.
Early examples of structural graph summaries include Goldman and Widom's DataGuides~\cite{DBLP:conf/vldb/GoldmanW97}, the representative objects of Nestorov \etal~\cite{DBLP:conf/icde/NestorovUWC97} and Milo and Suciu's T-indexes~\cite{DBLP:conf/icdt/MiloS99}.
Structural graph summaries are defined using various mathematical notations such as quotient graphs \citep{DBLP:conf/semweb/CebiricGM17} or $k$-bisimulation~\cite{DBLP:conf/sigmod/SchatzleNLP13,DBLP:conf/icde/KaushikSBG02}.
Other graph summary approaches include statistical summarization~\cite{DBLP:journals/vldb/CebiricGKKMTZ19}, based on frequent pattern mining, and neighborhood-preserving compression~\cite{DBLP:conf/www/ShinG0R19,DBLP:conf/kdd/KoKS20,tiptap2021}, which summarizes unlabeled graphs into supergraphs that approximate the number of neighbors.
We focus on structural graph summaries, as they are used in many real-life applications~\cite{DBLP:conf/sigmod/FanLLTWW11} such as cardinality estimations in graph databases~\cite{DBLP:conf/www/StefanoniMK18,DBLP:conf/icde/NeumannM11},
data exploration~\cite{lodex2015,loupe2015,DBLP:conf/semWeb/PietrigaGADCGM18,DBLP:conf/esws/SpahiuPPRM16a}, data visualization~\cite{DBLP:journals/vldb/GoasdoueGM20},
vocabulary term recommendations~\cite{DBLP:conf/esws/SchaibleGS16}, related entity retrieval~\cite{DBLP:conf/www/CiglanNH12}, and query answering in data search~\cite{DBLP:conf/kcap/GottronSKP13}.
However, we do not focus on a specific application: rather, we produce a general framework that can compute many different summaries, suited to many different tasks.

A general property of structural graph summaries is that they partition the set of vertices in a graph based on equivalences of subgraphs~\cite{DBLP:journals/tcs/BlumeRS21}.
To determine subgraph equivalences, only structural features are used, such as specific combinations of labels.
An example of a data graph is shown in the top-left of \cref{fig:schema-evolution}. 
The graph contains two subgraphs $G1$ and~$G2$, which originate from distinct sources $X$ and~$Y$.
Both graphs contain vertices labeled \textsf{Book}, \textsf{Subject}, and \textsf{Person}, and edges labeled \textsf{topic} and \textsf{author}.
One can define a graph summary (SG) using an equivalence relation~$\sim$ that summarizes vertices that have the same labels and are connected to vertices with the same labels by edges with the same labels.
If two equivalent (sub)graphs are found in the data graph, only a single new subgraph in SG is created that preserves the information about the equivalent (sub)graphs from the data graph.
For instance, since $v1$ and~$v7$ are equivalent under~$\sim$, they are summarized by the same subgraph $vs1$ in SG, starting with the its root vertex~$r1$. 
This graph summary SG is shown in the top right of \cref{fig:schema-evolution}.
It preserves the information about the combinations of graph labels (\textsf{Book}$-$\textsf{topic}$\rightarrow$\textsf{Subject} and \textsf{Book}$-$\textsf{author}$\rightarrow$\textsf{Person}) found in the data graph (GDB) at time $t$.
In addition, a so-called payload~\cite{DBLP:conf/kcap/GottronSKP13} can be added to the~SG (not shown in the figure).
This could be, \eg the number of equivalent subgraphs found (here, two) or a list of the subgraph sources (here, $X$ and~$Y$).
For different tasks, different payloads are used~\cite{DBLP:journals/tcs/BlumeRS21}.
Structural graph summaries are often an order of magnitude smaller than the input graph~\cite{DBLP:journals/vldb/CebiricGKKMTZ19}.
This allows tasks to be executed faster on the summary instead of the original graph.

\begin{figure*}[!t]
\centering
\includegraphics[trim={0cm 0cm 0cm 0.5cm},
clip,width=0.98\linewidth]
{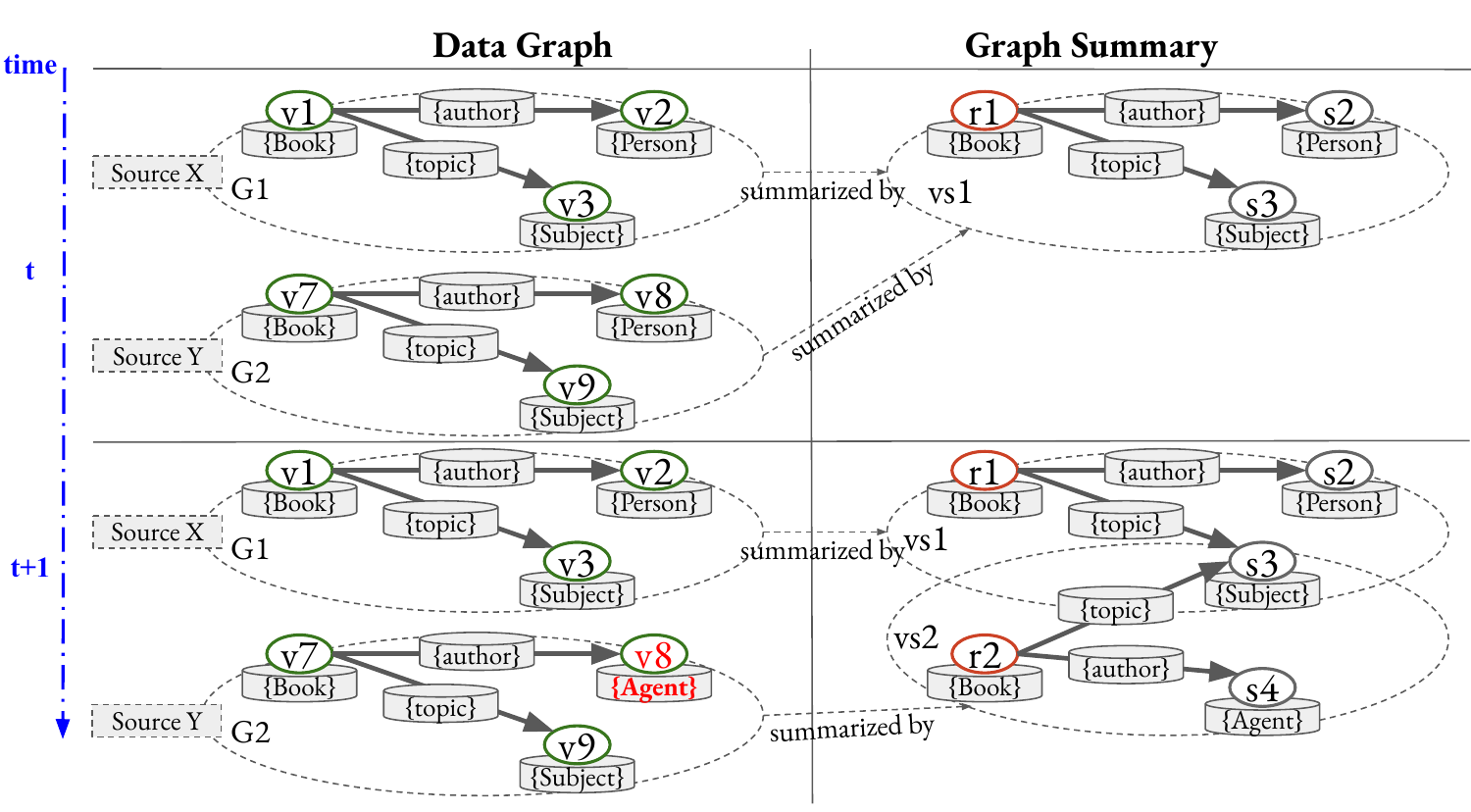}
\caption{\label{fig:schema-evolution}\label{fig:schema-example}Evolving graph database (GDB) from points in time $t$ to $t+1$ and part of its graph summary (SG) computed based on the SchemEX graph summary model~\cite{DBLP:journals/ws/KonrathGSS12}.
}
\end{figure*}

The data graph shown in the top-left of \cref{fig:schema-evolution} is observed at a time~$t$.
At time~$t + 1$, shown in the bottom-left of \cref{fig:schema-evolution}, the data graph has changed.
While vertex $v8$ was labeled \texttt{Person} at time~$t$, it is now labeled \texttt{Agent}.
This means, the structural graph summary needs to be updated.
Since $v1$ and $v7$ are no longer equivalent under $\sim$ they are now summarized by different subgraphs in the graph summary.
When the data graph evolves over time, it is often prohibitively expensive to recompute the structural graph summary from scratch, especially when only few changes occur in relation to the overall size of the data graph.
Thus, an algorithm is needed that can efficiently update previously computed structural graph summaries, including additions, modifications, and deletions of vertices and edges.
Graph summarization algorithms do not consider modifications and deletions~\cite{DBLP:journals/ws/KonrathGSS12,DBLP:conf/edbt/BaaziziLCGS17,DBLP:conf/icde/HegewaldNW06,DBLP:conf/edbt/GoasdoueGM19}, which we can expect to occur in graphs that evolve over time.
Our algorithm can update the graph summary without requiring a change log, \ie a list of changed vertices and edges from one version to the other of the GDB.
Especially for graphs on the Semantic Web, updates often do not provide a reliable change log~\cite{DBLP:conf/esws/KaferAUOH13,DBLP:conf/semWeb/DividinoGS15}.
This could be overcome by storing local copies of the data graphs~\cite{DBLP:conf/semWeb/DividinoGS15} but this is impractical for huge and distributed graphs like on the Semantic Web~\cite{DBLP:conf/semWeb/DividinoGS15,DBLP:journals/semweb/RietveldBHS17} and defeats the purpose of graph summarization to produce a significantly smaller representation.

\subsection{Contributions}
In~\cite{DBLP:conf/cikm/BlumeRS20}, we proposed a parallel algorithm for graph summarization, which uses message-passing in a vertex-centric approach, to compute and merge partial results of the graph summary~\cite{DBLP:conf/sigmod/MalewiczABDHLC10,DBLP:journals/semweb/StutzSB16}.
The algorithm is designed following a formal model that defines structural graph summaries using equivalence relations~\cite{DBLP:journals/tcs/BlumeRS21}. 
This base algorithm allows us to process large graphs in parallel and in a distributed computing environment.

To incrementally update a graph summary as the underlying graph changes over time (see \cref{fig:schema-evolution}) we extended the base algorithm and introduce a new, complex data structure called the \vertexHashIndex{}~\cite{DBLP:conf/cikm/BlumeRS20}.
The \vertexHashIndex{} is a three-layered hash data structure that stores links between vertices in the input graph, the graph summary, and the payload.
It contains the information needed to update the graph summary when changes occur.
This allows us to update only those parts of the graph summary that need to change.
The incremental algorithm automatically detects changes in the data graph including deletions and modifications of graph vertices.
To the best of our knowledge, there exists no other incremental structural graph summarization algorithm for ``truly'' evolving graphs, \ie one that automatically detects and supports additions, deletions, and modifications.
The only algorithms applicable for such evolving graphs are approaches that we call incremental subgraph indices and they typically require a change log as input~\cite{DBLP:conf/sigmod/FanLLTWW11, DBLP:journals/pvldb/MinPPGIH21} (see discussion in the related work in \cref{sec:related-work-incremental}).
In~\cite{DBLP:conf/cikm/BlumeRS20}, we also showed that all FLUID graph summaries can be updated in $\mathcal{O}(\Delta \cdot d^k)$, where $\Delta$~is the number of additions, deletions, and modifications in the input graph, $d$~is the maximum degree of the input graph, and $k$~is the maximum distance in the subgraphs considered.
Detecting changes, however, takes time linear in the size of the input graph.
If a change log is provided, this detection phase can be omitted.

In this paper, we provide a formal proof of correctness for the incremental update algorithm, demonstrating that the incremental algorithm is guaranteed to produce the same summary as the batch-based summarization algorithm.
Furthermore, although the incremental algorithm for graph summarization~\cite{DBLP:conf/cikm/BlumeRS20} supports $k$-bisimulation for values of $k>1$, it requires nested data structures for the message passing.
Thus, we extend in this paper the base summarization algorithm by an efficient hash-based messaging mechanism to support scalable iterative computation of graph summaries based on $k$-bisimulation for arbitrary~$k$.
This new variant generates unique hashes at each iteration of a bisimulation computation and passes these to neighboring vertices instead of full summaries, a technique due to Schätzle \etal{}~\cite{DBLP:conf/sigmod/SchatzleNLP13}. 
Thus, we call this new algorithm the parallel hash-based graph summarization algorithm.

We empirically evaluate the performance of our base algorithm, our incremental algorithm, and our parallel hash-messaging  algorithm on benchmark datasets and real-world datasets using representative summary models.
For the base algorithm and the incremental algorithm, we use the three summary models Attribute Collection~\cite{DBLP:conf/dexaw/CampinasPCDT12}, Class Collection~\cite{DBLP:conf/dexaw/CampinasPCDT12}, and SchemEX~\cite{DBLP:journals/ws/KonrathGSS12}.
For the parallel hash-messaging algorithm, we experiment with the graph summary model defined by the $k$-bisimulation of Schätzle \etal~\cite{DBLP:conf/sigmod/SchatzleNLP13}.
In our first experiment, originally published in~\cite{DBLP:conf/cikm/BlumeRS20}, we show that the incremental summarization algorithm almost always outperforms its batch counterpart.
We observe that the incremental algorithm is faster, even when about $50\%$ of the graph database changes from one version to the next~\cite{DBLP:conf/cikm/BlumeRS20}. 
In this first experiment, we use a fixed number of cores.

In this paper, we add three additional experiments:
Our second experiment systematically investigates the influence of the number of available cores on the performance of the algorithms.
The incremental summarization algorithm outperforms the batch summarization algorithm even when using fewer cores, even though the batch computation benefits more from parallelization than the incremental algorithm.
In a third experiment, we measure the memory consumption of our incremental algorithm and particularly the memory overhead that is induced by the \vertexHashIndex{}.
Using real-world datasets and four cores, we found that the incremental algorithm is on average $5$ to $44$~times faster while only producing a memory overhead of $8\%$ ($\pm 1\%$).
Finally, we empirically evaluate the hash messaging extension of our base algorithm by computing a $10$-bisimulation.
The smaller data graphs with $10+\,$M edges can be processed within seconds and the larger data graphs with $100+\,$M edges in less than five minutes.
To demonstrate the scalability of our approach, we additionally use a synthetic graph with $1\,$B edges. 
Our parallel hash-messaging algorithm computes the $10$-bisimulation on a synthetic graph with $1\,$B edges in less than an hour. The memory requirement is only five times as much as for a graph of around $100\,$M edges, despite being an order of magnitude larger.

\subsection{Outline}
We define the concepts of a graph database, a graph summary model, $k$-bisimulation and our data structure for representing graph summaries in \cref{sec:foundations}.
We also formally define the two graph models, namely the Resource Description Framework~\cite{w3c-rdf} and Labeled Property Graphs~\cite{DBLP:series/synthesis/2021Hogan}, and show that they can be converted to each other.
In \cref{sec:graph-summarization}, we first describe our parallel graph summarizing algorithm.
Subsequently, we first extend the base algorithm for incrementally updating graph summaries.
We proof correctness of our incremental algorithm and analyze its complexity.
Finally, we introduce the extension of the base algorithm using the hash messaging system.
This allows us to efficiently compute graph summaries based on $k$-bisimulation for large graphs and considering neighbors up to a distance of at least $k = 10$.
In \cref{sec:incremental-apparatus}, we describe the experimental apparatus, \ie the datasets, graph summary models, and our test system.
We evaluate our algorithm in extensive experiments.
First, we compare the performance of the incremental algorithm and the batch algorithm in \cref{sec:experiment-1} and discuss our findings w.r.t.\@ our complexity analysis.
In \cref{sec:experiment-2}, we evaluate the scalability of our parallelization approach using a different number of cores.
In \cref{sec:experiment-3}, we evaluate the memory consumption of our incremental algorithm and particularly the memory overhead introduced by the \vertexHashIndex{} and discuss the results.
We demonstrate the scalability of the hash messaging algorithm on large synthetic and real-world graphs, which are reported and discussed in \cref{sec:experiment-4}. 
Finally, we discuss related work in \cref{sec:related-work}, before we conclude.

\section{Graph Models and Graph Summarization}
\label{sec:foundations}
There are two major competing frameworks to model graph data: the Resource Description Framework (RDF) and Labeled Property Graphs (LPGs). 
RDF graphs are directed, edge-labeled multi-graphs. 
Intuitively, RDF represents all information as edges, \eg properties of vertices are edges in the graph and types of vertices are linked via the property \texttt{rdf:type}.
RDF supports special kinds of vertices, \ie blank nodes and literals~\cite{DBLP:series/synthesis/2021Hogan}.
LPGs do not have special vertices.  
LPGs allow vertex labels (types) and allow property information to be attached to both vertices and edges, in the form of key--value pairs~\cite{DBLP:series/synthesis/2021Hogan}.
In Sections~\cref{sec:graph-models-rdf} and~\cref{sec:graph-models-lpg}, we formally define these two graph models and discuss their differences.
For a more extensive discussion, see~\cite{DBLP:series/synthesis/2021Hogan}.
It is common knowledge that each can be transformed into the other~-- for example, both can be translated into relational databases~\cite{DBLP:series/synthesis/2021Hogan}. 
In \cref{sec:transformation}, we give an example transformation between RDF graphs and LPGs, which is also used in our experimental evaluation to transform RDF datasets.
This allows us to use the term ``data graph'', which subsumes RDF graphs and LPGs.
Without loss of generality, we will focus on LPGs for the remainder of this paper.
Subsequently, we define the concept of graph summarization in \cref{sec:graph-summary-model} and describe the key concepts of the formal language FLUID used to flexibly define different graph summary models~\cite{DBLP:journals/tcs/BlumeRS21}.
In \cref{sec:intro-bisimulation}, we introduce the concept of bisimulation and its relation to graph summarization.
Finally, we define the data structure for FLUID's structural graph summaries and analyze its data complexity.

\begin{figure}[t!]
    \centering
\begin{minipage}[b]{0.45\linewidth}
    \centering
    \small
   \begin{align*}  
  \{&(v_1,\texttt{rdf:type},\texttt{Proceedings}),\\
    &(v_2,\texttt{rdf:type},\texttt{Person}),\\
    &(v_1,\texttt{author},v_2),\\
    &(v_1,\texttt{title},\text{\enquote{Graph Database}}),\\
    &(v_2,\texttt{name},\text{\enquote{Max Power}}) \}
	\end{align*}  
\end{minipage}
\vspace{1em}
\begin{minipage}[b]{0.45\linewidth}
    \centering 
    \includegraphics[scale=.6,trim={8cm 10.2cm 5cm 3.4cm}, clip=true]{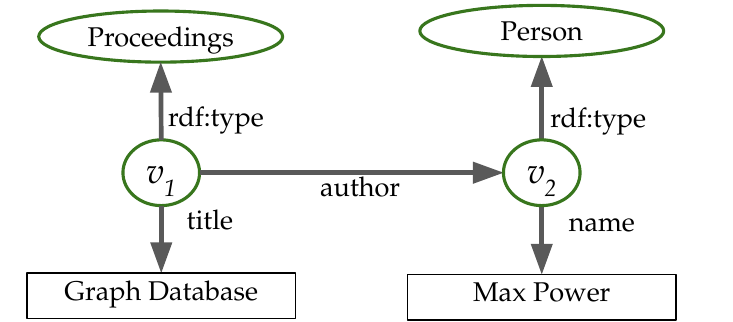}
\end{minipage}
\caption{\label{fig:rdf-example-introduction} Simple RDF graph to demonstrate the relation between a set of triples (top) and its visualization as a graph (bottom).}
\end{figure}

\subsection{RDF Graphs}
\label{sec:graph-model-rdf}
\label{sec:graph-models-rdf}
An RDF graph~\cite{w3c-rdf} is a set of triples $(s, p, o)$, with a subject~$s$, predicate~$p$, and object~$o$.
Each triple denotes a directed edge from the subject vertex~$s$ to the object vertex~$o$, the edge being labeled with the predicate~$p$.
RDF graphs distinguish three kinds of vertices: referents, represented by International Resource Identifiers (IRIs)~\cite{rfc-iri}; blank nodes; and literals. 

Formally, an RDF graph is defined as $G \subseteq (V_I\cup V_{B}) \times P \times (V_I \cup V_{B} \cup L)$, where $V_I$ denotes the set of IRIs, 
$P \subseteq V_I$ the set of predicates,
$V_{B}$ the set of blank nodes, 
and $L$ the set of literals (represented by strings)~\cite{w3c-rdf}.
IRIs conceptually correspond to real-world entities and are globally unique: an IRI may be included in more than one RDF graph, but this corresponds to stating different facts about the same real-world entity. 
In contrast, blank nodes are only locally defined, within the scope of a specific RDF graph, to serve special data modeling tasks.
Through skolemization, blank nodes can be turned into Skolem IRIs, which are globally unique~\cite[Section~3.5]{w3c-rdf}. 
Literal vertices are finite strings of characters from a finite alphabet such as Unicode~\cite{w3c-rdf}.
Thus, two literal vertices which are term-equal (\ie are the same string)~\cite[Section~3.3]{w3c-rdf} are the same vertex.
IRIs, blank nodes, and literals have different roles in RDF, but this distinction is not relevant in this work and we treat them equally.

Predicates $p \in P$ act as edge labels. 
The RDF standard includes the predicate \texttt{rdf:type}, which is used to simulate vertex labels: the triple $(s, \texttt{rdf:type}, o)$ denotes the vertex~$s$ having label~$o$. 
Such vertex labels are called RDF types and we write $V_C \subseteq V_I$ for the set of all RDF types~\cite{DBLP:journals/tcs/BlumeRS21}.
This indirect representation of vertex labels is a design decision of RDF and contrasts with the direct use of vertex labels in labeled property graphs~\cite{DBLP:series/synthesis/2021Hogan}.
Edges in an RDF graph that are not labeled with \texttt{rdf:type} are called RDF properties.
An example RDF graph is shown in \cref{fig:rdf-example-introduction}, as a set of triples (top) and as a graph (bottom).
The vertex $v_1$ has the type set $\lbrace \text{\texttt{Proceedings}}\rbrace$ and the property set $ \lbrace \text{\texttt{author, title}}\rbrace$.
The vertex $v_2$ has the type set $\lbrace \text{\texttt{Person}}\rbrace$ and the property set $\lbrace \text{\texttt{name}}\rbrace$.
Also, $v_1$ has predicates $\lbrace \texttt{author,}$ $\texttt{title,}$ $\texttt{rdf:type}\rbrace$ and outgoing neighbors $\nsout{v_1} = \lbrace \texttt{Proceedings}, v_2, \text{\enquote{Graph Database}}\rbrace$.

A special characteristic of RDF graphs is their support for semantic labels, which allows the inference of implicit information. 
Such semantic labels are from ontologies, where semantic relationships between types and properties are denoted, \eg with predicates from the RDF Schema vocabulary.
RDF Schema (RDFS) and its entailment rules are standardized by the W3C~\cite{w3c-rdf-schema}.
A comprehensive overview of these rules is presented in~\cite{DBLP:books/daglib/0028543}.
For example, the semantics of a triple $(p, \texttt{rdfs:subPropertyOf}, p')\in G$ is that, for any subject vertex~$s$ with $(s,p,o)\in G$, we can infer the existence of the additional triple $(s,p',o)$.
This means that, when using RDF Schema inference, each vertex may have more types and properties in its type set and property set, respectively. 
We briefly discuss the role of RDFS inferencing in the context of graph summarization in Section~\ref{sec:graph-summary-model}.
A more exhaustive discussion is found in \cite{DBLP:journals/tcs/BlumeRS21} and in \cite{DBLP:journals/vldb/GoasdoueGM20}.
Finally, RDF supports the definition of named graphs~\cite{w3c-rdf}.
Following Harth \etal~\cite{DBLP:conf/semweb/HarthUHD07}, we formalize named graphs by extending each triple $(s,p,o)$ to a tuple $((s, p, o), d)$ or a quad $(s, p, o, d)$, where $d$ denotes the name of the data source from which the triple originated~\cite{w3c-rdf-dataset-semantics}.

\begin{table*}[t!]
{\renewcommand\arraystretch{1.0}
\setlength{\tabcolsep}{6pt}
\small  
\caption{\label{tab:notation} Overview of the most important graph notations. The first group of symbols is used to define labeled property graph databases, the second group of symbols to define the graph summary, and the third group comprised general graph notations.}
\begin{tabularx}{\textwidth}{p{.2cm} p{5cm} X}
\toprule
& \textbf{Symbol} & \textbf{Explanation} \\
\midrule
& GDB & Labeled Property Graph database $(V,E,\mathcal{G}, \ell_G)$\\
& $V$, $E$ & Vertices and edges of a GDB \\
& $\mathcal{G}$ & Multi-set $\{G_1, \ldots, G_n\}$ of graphs in GDB \\
& $G \in \mathcal{G}$ & LP Graph $G = (V_G, E_G, \ell_V, \ell_E, \ell_P)$\\ 
& $V_G \subseteq V$ & Vertices appearing in graph $G$ \\
& $E_G \subseteq E$ & Edges appearing in graph $G$ \\
& $\ell_V\colon V \rightarrow \mathcal{P}(\Sigma_V)$ & Labeling function for all vertices in GDB\\
& $\ell_E\colon E \rightarrow \mathcal{P}(\Sigma_E)$ & Labeling function for all edges in GDB\\
& $\ell_P\colon V \cup E \rightarrow \mathcal{P}(\SigmaKey \times \SigmaValue)$ & Property function for all vertices and edges in GDB\\
& $\ell_{G}\colon G \rightarrow \Sigma_G$ & Labeling function for all graphs in GDB\\
\midrule
& $SG$ & $SG$ is a graph summary $SG = (\Vvs \cup \Vpe, \Evs \cup \Epe, \ell_V, \ell_E, \ell_P)$\\
& $vs \subseteq SG$ & $vs$ is a vertex summary, which is a subgraph of $SG$ \\
& $\Vvs$ & Vertices containing schema information in $SG$ \\
& $\Vpe$ & Vertices containing payload information in $SG$ \\
& $\Evs$ & Edges containing schema information in $SG$ \\
& $\Epe$ & Edges connecting payload information in $SG$ \\
\midrule
& $\Gamma\colon V \rightarrow \mathcal{P}(V)$ & The set of $v$'s neighbors \\ 
& $\Gamma^+\colon V \rightarrow \mathcal{P}(V)$ & The set of $v$'s outgoing neighbors\\
& $\Gamma^-\colon V \rightarrow \mathcal{P}(V)$ & The set of $v$'s incoming neighbors\\
& $d, d^+, d^-$ &  Degree, outdegree, and indegree in G\\
\bottomrule
\end{tabularx}
}
\end{table*}
 
\begin{figure}[t]
    \centering
\begin{minipage}[b]{0.45\linewidth}
    \centering
    \small
   \begin{align*}  
    &\ell_V(v_1) \rightarrow \{\texttt{Proceedings}\}\\
    &\ell_V(v_2) \rightarrow \{\texttt{Person}\}\\
    &\ell_E((v_1,v_2)) \rightarrow \{\texttt{author}\}\\
    &\ell_E((v_1,v_3)) \rightarrow \{\texttt{title}\}\\
    &\ell_E((v_2,v_4)) \rightarrow \{\texttt{name}\}\\
    &\ell_P(v_3) \rightarrow \{(\texttt{literal}, \text{Graph Database})\}\\
    &\ell_P(v_4) \rightarrow \{(\texttt{literal}, \text{Max Power})\}\\
	\end{align*}  
\end{minipage}
\vspace{1em}
\begin{subfigure}[b]{0.45\linewidth}
    \centering 
    \includegraphics[width=1\columnwidth,trim={8cm 7.5cm 5cm 4.75cm}, clip=true]{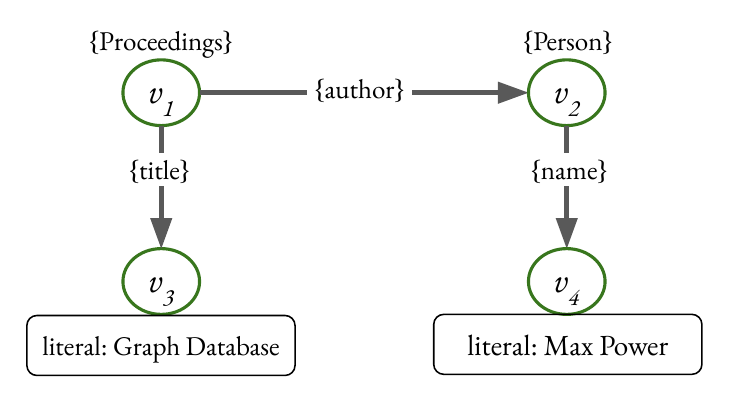}
\end{subfigure}
\caption{\label{fig:lpg-example-introduction} Labeled Property Graph to demonstrate the relation between the label functions of vertices and edges (top) and its visualization as graph (bottom).}
\end{figure}

\subsection{Labeled Property Graphs}
\label{sec:graph-model-lpg}
\label{sec:graph-models-lpg} 

There are a variety of formal definitions of labeled property graphs (LPGs), which emphasize different aspects~\cite{DBLP:conf/www/CiglanNH12,DBLP:journals/csur/AnglesABHRV17}.
In general, LPGs are defined as graphs in which vertices and edges are labeled and can have key--value properties. 
Thus, we define an LPG as $\GLPG = (V,E,\ell_V,\ell_E,\ell_P)$ with vertices $V$ and edges~$E$.
Furthermore, we define finite alphabets 
$\Sigma_V$ (vertex labels), 
$\Sigma_E$ (edge labels),
$\SigmaKey$ (property keys), and $\SigmaValue$ (property values).

We define labeling functions $\ell_V$ and~$\ell_E$ for vertices and edges.
The first function $\ell_V\colon V \rightarrow \mathcal{P}(\Sigma_V)$ maps each vertex to zero or more labels from the finite alphabet~$\Sigma_V$.
The second function $\ell_E\colon E \rightarrow \mathcal{P}(\Sigma_E)$ maps each edge to zero or more labels from the finite alphabet~$\Sigma_E$.
Furthermore, $\ell_P\colon V \cup E \rightarrow \mathcal{P}(\SigmaKey \times \SigmaValue)$ maps each vertex~$v$ and each edge~$e$ to zero or more key--value pairs $\SigmaKey \times \SigmaValue$, \ie the properties.

In our LPG definition, edges are directed, \ie $(v, w) \neq (w, v)$ for all distinct $v, w \in V$.
We can represent the undirected edge~$vw$ by the pair of directed edges $(v,w)$ and $(w,v)$.
In a directed graph, we might have edges $(v,w)$ and $(w,v)$ and, in this case, we might have $\ell_E(v,w)\neq \ell_E(w,v)$.
However, in an undirected graph (where all edges are undirected), we will always have $\ell_E(v,w) = \ell_E(w,v)$.

Vertex labels are often used to represent type information and edge labels to denote specific relationships between vertices.
Information about a specific relationship can be added via key--value properties. 
Key--value properties can also be added to vertices. 
This can be used to state a property of a vertex~$v$ as an alternative to creating a new vertex~$w$ and an edge relating $w$ to~$v$~\cite{DBLP:series/synthesis/2021Hogan}. 
For example, in \cref{fig:lpg-example-introduction}, we represent the same graph from \cref{fig:rdf-example-introduction} as LPG.
We simulate RDF literals by creating vertices $v_3, v_4$ and labeling them with their literal values as key--value properties. 
This way, except for the \texttt{rdf:type} predicates, \cref{fig:rdf-example-introduction} and \cref{fig:lpg-example-introduction} have the same edges. 
Alternatively, one could add the title \enquote{Graph Database} and the name \enquote{Max Power} as a key--value properties to vertex~$v_1$ and~$v_2$, respectively.

Notably, the definition of LPGs deliberately allows for vertices that have no edges, though it is not possible to have an edge without two vertices as its endpoints. 
Vertices can be labeled and can have key--value properties.
Thus, they can add valuable information to a graph even without incoming or outgoing edges. 

LPGs can be used in multi-modal databases, as all kinds of information~-- including complete documents~-- can be attached as key--value properties.
In particular, provenance information can be attached as key--value properties to all vertices and edges.
However, when working with physically and logically distributed graphs, such as the Semantic Web, precise statements about the origin of certain information are desirable. 
To formalize this, we introduce the notion of a Labeled Property Graph Databases, which we refer to as GDB. 

We define a GDB as a multiset of labeled property graphs, which may share vertices and edges.
A multiset of graphs allows data replication, which is important when graph databases are distributed such as in the Semantic Web.
Formally, we define a graph database $GDB = (V, E, \mathcal{G}, \ell_G)$, where $V$~is the set of vertices, $E \subseteq V \times V$ is the set of edges, $\mathcal{G} = \{G_{1}, \ldots, G_{n} \}$ is a multiset of LPGs, and $\ell_G$ is a labeling function for graphs $G \in \mathcal{G}$.
Each labeled property graph $G \in \mathcal{G}$ is a tuple $G = (V_G, E_G, \ell_V, \ell_E, \ell_P)$ with $V_G \subseteq V$ and $E_G \subseteq E$.
We define the labeling function $\ell_G\colon G \rightarrow \Sigma_G$ to map each instance of each graph $G \in \mathcal{G}$ to a single label from the finite alphabet $\Sigma_G$.
This graph label is the name of the graph. 
The sets $V$ and~$E$ are not multisets: all vertices and edges are uniquely identified within the GDB, \eg by IRIs on the Web.
The individual graphs $G \in \mathcal{G}$ are not necessarily connected, \ie they may contain multiple disjoint components.
Two graphs $G_i$ and $G_j$ with $i \neq j$ can have common vertices, \ie it is possible that $V_{G_i}\cap V_{G_j}\neq \emptyset$.
This is a design decision that allows vertices and edges in a graph $G$ to be implicitly labeled with the graph label of $\ell_G(G)$. 
If vertices and edges appear in multiple graphs, they consequently have multiple graph labels. 

In principle, a GDB can be used exactly like an LPG, as presented above. 
In this case, vertex labels denote types, edge labels classify relationships, and key--value properties are used for information only relevant to a single vertex or edge.
When we use a GDB, we label the different graphs contained in the multiset $\mathcal{G}$. 
Following our implied semantics, labeling $G$ is equivalent to adding the property \enquote{graph name} to all vertices and edges contained in this graph $G$. 
Thus, GDBs offer a shortcut to defining  labeling and/or naming in LPGs to ease notation when general statements about a set of vertices and edges are made. 
Note, though, that an LPG with a \enquote{graph name} property is not formally equivalent to an GDB, as it allows inconsistencies that cannot occur in an GDB. 
For example, in an GDB, for every edge $(x,y)$ in, say, $G_1$, the vertices $x$ and~$y$ must also be in~$G_1$. 
However, an LPG could give $x$, $y$ and $(x,y)$ arbitrary values of the graph name property.

\subsection{Transformation between RDF Graphs and LPGs}
\label{sec:transformation}

After having defined RDF graphs and LPGs, we show that they can be transformed into each other.
We give an example of such a transformation, which is adequate for our purposes, while acknowledging that other transformations exist.

Regarding the transformation from RDF to LPG, consider the RDF graph $\GRDF \subseteq (V_I\cup V_{B}) \times P \times (V_I \cup V_{B} \cup L)$. 
We show how to convert it to an equivalent LPG $\GLPG = (V,E,\ell_V,\ell_E,\ell_P)$. The intuition is that RDF types are represented in $\GLPG$ as vertex labels, predicates become edge labels, and literal vertices in $\GRDF$ have their string values attached as  key--value pairs. 
The vertex set~$V$ of $\GLPG$ is the following subset of $\GRDF$'s vertices: all vertices that appear in subject position in triples; along with all vertices that appear in object position, except RDF types. 
This gives $V\subseteq V_I\cup V_B$. The edge set~$E$ of $\GLPG$ is the set of all pairs $(s,o)$ such that $o\in V$ and $\GRDF$ contains a triple $(s,p,o)$ for some~$p$. 
Note that, if a triple $(s,p,o)$ exists, then $s\in V$ by construction, but there may be triples $(s,p,o)$ where $o\in V_C$ and, therefore, $o\notin V$ of the LPG.
It remains to define the labeling functions $\ell_V$, $\ell_E$, and~$\ell_P$. For each vertex $v\in V$, we set $\ell_V(v)$ to be the set of $v$'s RDF types in $\GRDF$. 
For each edge $(s,o)\in E$, we set $\ell_E(s,o) = \{p\in P\mid (s,p,o)\in \GRDF\}$. 
Finally, for each $s\in L$, we set $\ell_P(s)$ to be the set that contains the key--value property $(\texttt{literal}, s)$. 
The reader may verify that $\Sigma_V = V_C$, $\SigmaValue = L$, and $\Sigma_E \cup \SigmaKey = P \setminus \{\texttt{rdf:type}\}$.

The mapping from LPGs to RDF is similar. To translate an LPG into an RDF graph, vertex labels become RDF types, and edge labels become predicates. 
Each vertex with the \texttt{literal} key--value property is transformed to an RDF literal.
Any other key--value property $(k,\mathrm{val})$ of vertex~$s$ becomes a triple $(s,k,v)$, where $v\in L$ is a literal vertex representing~$\mathrm{val}$. 
RDF does not support properties of relationships.
If our LPG contains an edge $(x,y)$ with key--value properties, we add a new vertex~$z$ (following~\cite{DBLP:series/synthesis/2021Hogan}), 
replace the edge $(x,y)$ with $(x,z)$ and $(z,y)$ and add the key--value properties to~$z$.
This is done by adding edges to $z$ with the labels from the keys and literals with the values.
Subsequently, we translate the resulting graph to RDF as described above.

\extended{We have demonstrated that we can transform LPGs and RDF graphs into each other.
LPGs are supported by modern graph databases and are popular in industry and industry related-research~\cite{DBLP:conf/sigmod/TianXZPTSKAP20,orientdb}.
For example, Apache Spark GraphX~\cite{spark:graphx} and Pregel~\cite{DBLP:conf/sigmod/MalewiczABDHLC10} operate on LPGs rather than on RDF graphs.
Since we can transform RDF graphs into LPGs, we can use existing graph processing frameworks for RDF datasets.}

\subsection{Formal Language to Define Structural Graph Summary Models}
\label{sec:graph-summary-model}
\label{sec:gsm}

Structural graph summarization is the task of finding a condensed representation $SG$ (short for ``summary graph'') of an input graph $G$ such that selected characteristics of the original graph are preserved in $SG$~\cite{DBLP:journals/tcs/BlumeRS21,DBLP:journals/vldb/CebiricGKKMTZ19}.
Intuitively, structural graph summarization means that we can conduct specific tasks~-- \eg counting the vertices with a specific type label~-- on the structural graph summary $SG$ instead of~$G$.
The fundamental idea of structural graph summaries is that the task can be completed much faster on the graph summary than on the original graph.

To compute a structural \textbf{graph summary} $SG$ for a given data graph~$G$, we partition the data graph into disjoint sets of vertices. 
We partition the vertices based on equivalence of the subgraphs around them~\cite{DBLP:journals/tcs/BlumeRS21}. 
Equivalence relations describe any graph partitioning in a formal way.
We call the respective subgraphs containing the information necessary to determine the equivalence of two vertices the \textbf{schema structure} of the vertices. 
Which \textbf{features} of the input graph are considered in determining equivalence of schema structures is defined by the \textbf{graph summary model}. 
For different tasks, different features of the summarized vertices are of interest, \eg the number of summarized vertices for cardinality computation or the data source 
for data search.
This information about the summarized vertices is called the \textbf{payload}.
Formally, a structural graph summary model is a $3$-tuple of a data graph $G$, an equivalence relation~$\sim$, and a set of payload elements $\PAY$.

\begin{definition}
\label{def:index}
A \textbf{structural graph summary model} is a tuple $(G, \sim, \PAY)$, where $G$~is the data graph, $\sim \subseteq V \times V$ is an equivalence relation over the vertices in~$G$, and $\PAY$ is a set of payload elements.
Hence, the equivalence classes in~$\sim$ define a partition of the vertices in the graph~$G$.
\end{definition}

A simple example of a graph summary model would be label equality, \ie two vertices are considered equivalent iff they have the same set of labels.
Dependent on the application, one might want to summarize a graph w.r.t. different graph summary models.

\subsubsection{Simple and Complex Schema Elements}

We use our formal language FLUID to define graph summary models as equivalence relations. We summarize FLUID here; for a detailed definition, see~\cite{DBLP:journals/tcs/BlumeRS21}.
The language defines schema elements and parameterizations, which specify different equivalence relations~$\sim$.
One parameterization, the \textit{chaining parameterization}, is of special interest in the context of this work, as it enables summary models such as $k$-bisimulation~\cite{DBLP:journals/vldb/CebiricGKKMTZ19,DBLP:journals/tcs/BlumeRS21,DBLP:conf/sigmod/SchatzleNLP13,DBLP:conf/icde/KaushikSBG02}.
Chaining is described in detail below, while the other five parameterizations are briefly summarized. 
The basic building blocks of summary models in FLUID are \textbf{Simple Schema Elements}~(SSE) and \textbf{Complex Schema Elements}~(CSE).
The three Simple Schema Elements summarize vertices~$v$ based on $\ell_V(v)$, $\ell_E(v,w)$, and/or neighboring vertex identifiers $w$ with $w \in \Gamma(v)$.

\begin{definition}[Simple Schema Elements; from~\cite{DBLP:journals/tcs/BlumeRS21}]
\label{def:sse}
The three simple schema elements are:
\begin{enumerate}
    \item \textbf{Object Cluster} (OC) compares types and vertex identifiers of all neighboring vertices: two vertices $v$ and~$v'$ are equivalent iff $\ell_V(v) = \ell_V(v')$ and $\Gamma(v) = \Gamma(v')$.

\item \textbf{Predicate Cluster} (PC) compares labels: $v$ and~$v'$ are equivalent iff
(i) their \emph{vertex} label sets are both empty or both non-empty and (ii) they have the same labels on their outgoing edges: specifically, $\{\ell(v,w)\mid w\in\Gamma(v)\} = \{\ell_E(v',w')\mid w'\in\Gamma(v')\}$.\footnote{The definition is more intuitive for RDF graphs as noted in~\cite{DBLP:journals/tcs/BlumeRS21}, as only condition~(ii) is needed. When an LPG is translated to an RDF graph, vertex labels are implemented as edges with property \texttt{rdf:type}, so property~(i) becomes ``$v$~has an edge with property \texttt{rdf:type} iff $v'$~does'', and this is already covered by condition~(ii).}

\item \textbf{Predicate--Object Cluster} (POC) combines PC and OC: $v$ and~$v'$ are equivalent iff, $\ell_V(v) = \ell_V(v')$, $\Gamma(v) = \Gamma(v')$, and $\ell_E(v,w) = \ell_E(v',w)$ for all $w\in \Gamma(v)$.
\end{enumerate}
\end{definition}

Observe that SSEs only consider local information about a vertex, \ie its neighbors, the labels of its outgoing edges, or the combination of the two.
FLUID provides a Complex Schema Element (CSE)~\cite{DBLP:journals/tcs/BlumeRS21} to extend this by one step: CSEs allow vertices to be summarized based on their neighbors' neighbors and their neighbors' outgoing edges.
CSEs can be nested to define summaries using vertices at any chosen distance from $v$ and~$v'\!$, and a common pattern of nesting, which generalizes $k$-bisimulation, is implemented by the chaining parameterization (Definition~\ref{def:chaining-parameterization}).

CSEs define a new equivalence relation~$\sim$ by using a tuple of three existing equivalence relations, \ie $CSE := (\sim^{s}$, $\sim^{p}$, $\sim^{o})$.
The equivalence relation~$\sim^{s}$ defines the local schema structure of the vertex~$v$.
$\sim^{o}$~defines the local schema structure of neighbors $w \in \Gamma(v)$.
Intuitively, $\sim^{p}$ defines how the local schema structures of $v$ and~$w$ are connected. 

\begin{definition}[Complex Schema Element; from~\cite{DBLP:journals/tcs/BlumeRS21}]
\label{def:cse}
A Complex Schema Element consists of three equivalence relations and is defined as $CSE := (\sim^s, \sim^p, \sim^o)$.
Two vertices $v, v'$ are considered equivalent, iff 
\begin{gather}
    v \sim^s v', \\
    {
    \forall w\in \Gamma(v): \; \exists w'\in\Gamma(v') \text{ with } \ell_E(v,w) \sim^p \ell_E(v',w') \text{ and } w \sim^o w', \text{ and vice versa.}
    }
\end{gather}
\end{definition}

An example of a CSE, which not only takes local information into account, is given by $(T, id, PC) = (T, id, (T, id, T))$.
It considers vertices as equivalent iff they have the same outgoing edge labels and have neighbors with the same outgoing edge labels.

Introducing the identity relation~$id = \{(v,v) \mid v \in V\}$ and tautology relation~$T = V \times V$, we can represent the three SSEs as CSEs~\cite{DBLP:journals/tcs/BlumeRS21}.
\begin{align*}
    & OC = (T, T, id), \\
    & PC = (T, id, T), \\
    & POC = (T, id, id).
\end{align*}

\subsubsection{Parameterizations} 
One can specialize FLUID's simple and complex schema elements using parameterizations~\cite{DBLP:journals/tcs/BlumeRS21}.
As mentioned before, the chaining parameterization is of special interest, as it enables computing $k$-bisimulations of a graph~$G$ and hence increases the considered neighborhood for determining vertex equivalence~\cite{DBLP:journals/tcs/BlumeRS21}.
The chaining parameterization has a parameter~$k$ that recursively applies CSEs to depth~$k$; the resulting CSE is denoted by $CSE_k$. 

\begin{definition}[Chaining parameterization (from~\cite{DBLP:journals/tcs/BlumeRS21})]
\label{def:chaining-parameterization}
The chaining parameterization $cp(CSE, k)$ takes a complex schema element~$CSE := (\sim^s, \sim^p, \sim^o)$ and a chaining parameter~$k \in \mathbb{N}_{>0}$ and returns an equivalence relation~$CSE_k$ that corresponds to recursively applying CSE to a distance of $k$~hops.
$CSE_k$ is defined inductively as follows:
\begin{align*}
    & CSE_1 = (\sim^s, \sim^p, \sim^o), \\
    & CSE_{k+1} = (\sim^s, \sim^p, CSE_k).
\end{align*}
\end{definition}

The remaining parameterizations are, briefly, as follows.
The label parameterization restricts the edges considered for the summaries to edges with labels defined in a given set~$P_l$ an ignores those with labels not in~$P_l$.
The label parameterization can be used, \eg to consider types in RDF graphs, as they are represented as vertex identifiers and attached to vertices with edges labeled \texttt{rdf:type}.
As vertex types are commonly used, \ie vertex labels in our GDB definition, we write $\OCtype{}$ for OC with the label parameterization $P_l = \lbrace \text{\texttt{rdf:type}} \rbrace$, \ie the \textbf{type cluster}.
Analogously, the \textbf{property cluster} $\PCrel$ is defined as the label parameterized $\PC$ with $P_l = P \setminus \lbrace \text{\texttt{rdf:type}} \rbrace$.
The set parameterization has as parameter a set of labels or vertex identifiers~$S$.
It forces, in addition to the equivalence of vertex and/or edge labels, that all labels are also contained in~$S$.
The direction parameterization allows us to consider only outgoing edges, incoming edges, or both.
The inference parameterization enables ontology reasoning such as RDFS (see \cref{sec:graph-model-rdf}) using a vocabulary graph.
The vocabulary graph stores all hierarchical dependencies between vertex labels (types) and edge labels (properties) denoted by ontologies present in the graph database.
The instance parameterization allows vertices to be merged when they are labeled as equivalent.
The latter parameterization on vertices is commonly known as \texttt{owl:sameAs} inference~\cite{DBLP:books/daglib/0028543,DBLP:journals/tcs/BlumeRS21}.

Note that FLUID's schema elements and parameterizations for inference do not stipulate when inference actually happens.
In the context of graph summarization, the inference is either conducted before summarization or after summarization~\cite{DBLP:journals/tcs/BlumeRS21}.
Generally, these two approaches are equivalent~\cite{DBLP:conf/semweb/LiebigVOM15,DBLP:journals/vldb/GoasdoueGM20,DBLP:conf/aaai/GlimmKT17}, but inference on the graph summary may require multiple iterations over the graph summary~\cite{DBLP:journals/tcs/BlumeRS21}.
In this work, we do not further consider the parameterizations for RDFS and \texttt{owl:sameAs} inference and we assume that inference has been performed before conducting the summarization.
For the interested reader, we discuss practical implications of design choices for inference in the empirical analysis of~\citep{DBLP:journals/tcs/BlumeRS21} and evaluate structural graph summaries using RDF Schema inference in~\cite{DBLP:journals/rpjdi/ScherpB21}.
In \citep{DBLP:journals/vldb/GoasdoueGM20}, the condition to a so-called shortcut to inferencing on the graph summary is discussed and empirically evaluated as well. 

\subsection{Bisimulation}
\label{sec:intro-bisimulation}

Bisimulation originates in labeled transition systems~\cite{Bisimulation:2009}, which can be though of as edge-labeled graphs.
The vertices of the graph are the states of the transition system and the directed edge $(u,v)$ with label~$\ell$ corresponds to a transition of type~$\ell$ from state~$u$ to state~$v$.
Bisimulation defines an equivalence relation on states such that equivalent states have transitions of the same types, to equivalent states.
Forward bisimulation is defined using outgoing edges and backward bisimulation is defined using incoming edges~\cite{DBLP:journals/vldb/CebiricGKKMTZ19,DBLP:journals/tcs/BlumeRS21}.
More formally, we can define forward bisimulation as follows. All vertices are $0$-bisimilar. For $k>1$, $u$ and~$v$ are $k$-bisimilar if, and only if,
\begin{enumerate}
\item they are $(k-1)$-bisimilar, and
\item for every edge $(u,u')$ with label~$\ell$, there is an edge $(v,v')$, also with label~$\ell$, such that $u'$ and~$v'$ are $(k-1)$-bisimilar.
\end{enumerate}
Backward $k$-bisimulation is defined in the same way but considering incoming edges $(u',u)$ and $(v',v)$.
Variants of (backward- or forward-) bisimulation may incorporate vertex labels by requiring that $0$-bisimilar vertices have the same labels.
As such, bisimulation is understood in this work as a form of graph summarization.
In fact, the formal language FLUID introduced in \cref{sec:graph-summary-model} for defining graph summarization models can also be used to define graph summaries based on bisimulation.
We will show this in \cref{sec:iterative-bisimulation}.

Efficient algorithms for bisimulation have been developed by Paige and Tarjan~\cite{DBLP:journals/siamcomp/PaigeT87}, Kaushik \etal~\cite{DBLP:conf/icde/KaushikSBG02} and Sch\"atzle \etal~\cite{DBLP:conf/sigmod/SchatzleNLP13}, and others.
Milo and Suciu's T-index summaries~\cite{DBLP:conf/icdt/MiloS99} and the work of Kaushik \etal are based on backward $k$-bisimulations.
Conversely, the Extended Property Paths of Consens \etal~\cite{DBLP:journals/pvldb/ConsensFKP15}, the SemSets model of Ciglan \etal~\cite{DBLP:conf/www/CiglanNH12} and the work of Sch\"atzle \etal~\cite{DBLP:conf/sigmod/SchatzleNLP13} are based on forward $k$-bisimulation.
Other graph summaries based on bisimulation include~\cite{DBLP:journals/ws/KonrathGSS12,DBLP:journals/vldb/GoasdoueGM20,DBLP:journals/tkde/TranLR13}.

\subsection{Data Structure for Representing Graph Summaries}
\label{sec:data-structure}

Let $GDB=(V,E,\mathcal{G}, \ell_G)$ be a graph database with label functions $\ell_V$, $\ell_E$, and~$\ell_P$ and let $\sim$ be an equivalence relation over~$V$.

\begin{definition}
The \textbf{graph summary} with respect to the model $(GDB, \sim, \PAY)$ is a labeled graph $SG = (\Vvs\cup\Vpe, \Evs\cup\Epe, \ell_V, \ell_E, \ell_P)$, where $\Evs\subseteq\Vvs\times\Vvs$ and $\Epe\subseteq\Vvs\times\Vpe$.
Here, the subscripts \enquote{$\mathrm{vs}$} and \enquote{$\mathrm{pe}$} denote \enquote{vertex summary} and \enquote{payload elements}.
\end{definition}

The subgraph $VS = (\Vvs,\Evs, \ell_V,\ell_E,\ell_P)\subseteq SG$ contains the \emph{schema} information about the GDB according to the model used for~$\sim$, as introduced in \cref{sec:graph-summary-model} (upper half of \cref{fig:complex-rdf-graph-with-payload}).
Thus, the vertices $\Vvs$ and edges $\Evs$ are those shown in the graph summary $SG$ in \cref{fig:schema-evolution}.
The subgraph $PG = (\Vvs\cup\Vpe,\Epe, \ell_V,\ell_E,\ell_P)\subseteq SG$ connects the schema to the payload, \ie each edge $(v,w)\in \Epe$ connects a vertex~$v$ in $VS$ to a vertex~$w$ (the payload element) that contains $v$'s payload information (lower half of  \cref{fig:complex-rdf-graph-with-payload}).
The summary graph $SG$ is the union of the vertex summaries $VS$ and their payload $PG$. 

\begin{figure}[!t]
    \centering
    \includegraphics[width=0.5\linewidth,trim={1cm 8cm 12cm 4cm}, clip=true]
    {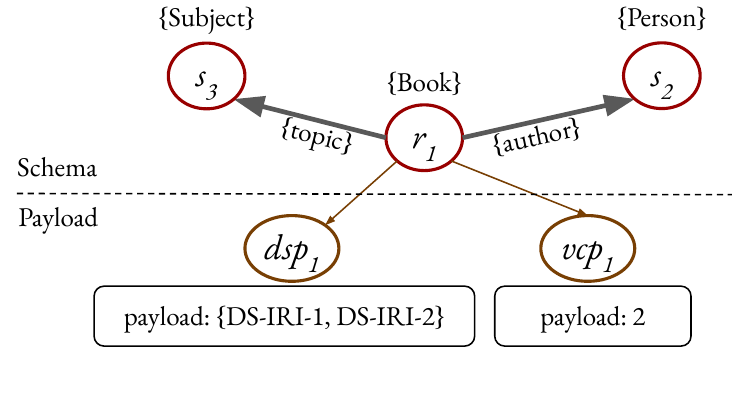}
          
\caption{\label{fig:complex-rdf-graph-with-payload} Example structural graph summary with its schema structure of summarized vertices (top) and payload information (bottom). Schema structure is represented using vertex summaries identified by a single primary vertex, \eg $r_1$, and zero or more secondary vertices, \eg $s_2$ and $s_3$.}
\end{figure}

Each vertex in $\Vvs$ has as its identifier a pair $(C,R)$, where $R$~is an equivalence relation over~$V$, the vertices of the GDB being summarized, and $C$~is one of $R$'s equivalence classes.
The edges in~$\Evs$ correspond to the edges in the GDB through which the equivalence relations are defined.
We further divide~$\Vvs$ into \emph{primary vertices}, which are equivalence classes of~$\sim$, and \emph{secondary vertices}, which are equivalence classes of the relations from which $\sim$~is defined.

\paragraph{Example:}
Consider the GDB at time~$t$ in \cref{fig:schema-evolution}. The graph summary $SG$ is defined using the SchemEX graph summary model.
SchemEX summarizes vertices that have the same types (vertex labels) and that have the same set of properties (outgoing edge labels) that are connecting a neighbor that has the same types (vertex labels). 
SchemEX is defined as $CSE(\OCtype,\idrel,\OCtype)$, where $\idrel$ is the identity relation $\id{}$ excluding the \texttt{rdf:type} predicate, and $\OCtype$ (the \enquote{same type} equivalence) has classes $\{v1,v7\}$, $\{v2,v8\}$ and $\{v3,v9\}$. 
Writing $\sim$ for the SchemEX equivalence relation, the identifiers of the vertices in $SG$ are $r1 = (\{v1,v7\}, \sim)$, $s2 = (\{v2,v8\},$ $\OCtype)$ and $s3 = (\{v3,v9\}, \OCtype)$. 
Thus, the vertices $r1$ and~$r2$ are primary, while $s2$ and~$s3$ are secondary.

\paragraph{Inductive Definition of Vertex Summaries.}
For each $v\in V$ (of the GDB) and equivalence relation~$\sim$ (defined by simple or complex schema elements), we define the local \textbf{vertex summary} $vs_{\sim}(v)$ by induction on the structure of the schema elements.
Distinct vertex summaries may share vertices, which compresses the graph summary, since data is reused.

To serve as \emph{base cases for the inductive definition} of vertex summaries, we define equivalence relations $\id = \{(v,v)\mid v\in V\}$ and $T = V\times V$.
For any vertex $v\in V$, $vs_\id(v)$ is the graph with the single vertex~$(\{v\},\id)$, which is the primary vertex, and no edges; similarly, $vs_T(v)$ has a single vertex $(V,T)$ (which is primary) and no edges.
Note that $vs_T(v)$ is identical for every $v\in V$, \ie all vertices are summarized by the same vertex summary, but $vs_\id(v)$ is distinct for every $v\in V$, \ie each vertex summary summarizes one vertex.

For the \emph{inductive step}, we define the vertex summaries for CSEs.
This implicitly includes the simple schema elements OC, PC, and POC. 
They are equivalent to the CSEs $(T,T,\id)$, $(T,\id,T)$ and $(T,\id,\id)$, respectively, but they are implemented separately for efficiency.
Given an equivalence relation~$R$ on a set~$X$ and some $x\in X$, we write $[x]_R$ for the equivalence class of~$R$ that contains~$x$.
Now, let $\sim$ be the equivalence relation defined by the CSE $(\simS,\simP,\simO)$ and let $v\in V$.
Let $\Gamma^o = \{[w]_\simO\mid w\in\Gamma(v)\}$ be the set of $\simO$-equivalence classes of $v$'s neighbors.
The primary vertex of $vs_{\sim}(v)$ is $([v]_\simS,\simS)$.
For each equivalence class $\gamma\in\Gamma^o$, $vs_{\sim}(v)$ has a subgraph
$vs_\simO(w_\gamma)$, where $w_\gamma$~is an arbitrary vertex in~$\gamma$.
Now, let $B^p_\gamma = \{[(v,w)]_\simP\mid (v,w)\in E\text{ and } w\in \gamma\}$; \ie if $v$~has neighbors in $\simO$-class~$\gamma$, then $B^p_\gamma$ is the set of $\simP$-equivalence classes of the edges linking $v$ with a vertex $w\in \gamma$.
For each class $\beta\in B^p_\gamma$, $vs_{\sim}(v)$ contains an edge labeled~$\beta$ from its primary vertex to the primary vertex of $vs_{\simO}(w_\gamma)$. Note that this may introduce parallel edges into the vertex summary, \ie there may be multiple edges between a pair of vertices, albeit with different labels.

\begin{theorem}
\label{theorem:data-complexity}
Let $GDB=(V,E,\mathcal{G}, \ell_G)$ be a graph database with maximum degree at most $d>1$, and let $\sim$~be an equivalence relation on~$V$ defined by nesting CSEs to depth~$k$.
For every $v\in V$, $vs_{\sim}(v)$ is a tree (possibly with parallel edges) with $\mathcal{O}(d^k)$ vertices.
\end{theorem}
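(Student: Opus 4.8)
The plan is to induct on the nesting depth~$k$ of the complex schema elements, proving simultaneously that $vs_\sim(v)$ is a tree and that its vertex count satisfies a recurrence. Let $N(k)$ denote the maximum number of vertices in any vertex summary whose defining relation is obtained by nesting CSEs to depth~$k$. The single quantitative fact driving everything is that the branching factor is at most~$d$: by the inductive definition, the children of the primary vertex of $vs_\sim(v)$ are indexed by $\Gamma^o = \{[w]_\simO \mid w\in\Gamma(v)\}$, and since each class in $\Gamma^o$ contains at least one neighbour of~$v$, we have $|\Gamma^o|\le|\Gamma(v)|\le d$. So the recursion tree has depth~$k$ and branching at most~$d$, which is exactly what produces the $\mathcal{O}(d^k)$ bound.

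For the base case ($k=0$) the relevant relations are $\id$ and~$T$, whose vertex summaries are single vertices, hence trees with $N(0)=1$. For the inductive step, take $\sim$ defined by the CSE $(\simS,\simP,\simO)$ with $\simO$ nesting CSEs to depth $k-1$. Then $vs_\sim(v)$ is its primary vertex together with, for each $\gamma\in\Gamma^o$, a copy of $vs_\simO(w_\gamma)$, each joined to the primary vertex by the (possibly several, hence parallel) edges labelled by the classes in $B^p_\gamma$. By the induction hypothesis each such subtree is a tree with at most $N(k-1)$ vertices, and there are at most~$d$ of them, so $N(k)\le 1 + d\cdot N(k-1)$. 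Unfolding with $N(0)=1$ gives $N(k)\le\sum_{i=0}^{k} d^i = \frac{d^{k+1}-1}{d-1}$, which is $\mathcal{O}(d^k)$ since $d>1$. Note this upper bound survives any identification of repeated vertices, as sharing can only decrease the count.

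The delicate part, which I expect to be the main obstacle, is showing that $vs_\sim(v)$ is genuinely a tree and not merely a DAG: attaching several subtrees to one primary vertex creates a cycle as soon as two of those subtrees share a vertex, and this can happen even though every edge still runs between consecutive levels of the recursion, so a plain layering-by-depth argument does \emph{not} establish acyclicity. Acyclicity therefore reduces to showing the subtrees $vs_\simO(w_\gamma)$ for distinct $\gamma$ are pairwise vertex-disjoint and avoid the primary vertex. For this I would invoke the reading made explicit before the theorem, that vertex sharing occurs only between \emph{distinct} vertex summaries when they are merged into the global summary $SG$, whereas a single $vs_\sim(v)$ is assembled from the recursion as a fresh root with disjoint copies of its subtrees; distinctness of the subtree roots themselves is automatic since they are the pairwise-distinct classes of $\Gamma^o$, each carrying~$\simO$, which also places them apart from the primary vertex carrying~$\sim$. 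Granting disjointness, the object is a fresh root joined to pairwise-disjoint trees: deleting the primary vertex leaves exactly the disjoint trees $vs_\simO(w_\gamma)$, so no cycle avoids the root, and the root has no incoming edges within $vs_\sim(v)$, so no cycle passes through it either. Hence $vs_\sim(v)$ is connected and acyclic up to parallel edges between the root and a subtree root, i.e.\ a tree with parallel edges, completing the induction.
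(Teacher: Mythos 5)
Your proof is correct and follows essentially the same route as the paper's: branching factor at most $d$ (a vertex has at most $d$ neighbours, hence is adjacent to at most $d$ equivalence classes), recursion depth $k$, and the geometric sum $\sum_{i=0}^{k} d^i = \mathcal{O}(d^k)$, with your recurrence $N(k)\le 1+d\cdot N(k-1)$ unfolding to exactly that sum. The only difference is presentational: where the paper disposes of tree-ness in one line (``the base cases are one-vertex trees and the inductive steps cannot create cycles''), you make explicit the pairwise disjointness of the subtrees that this assertion implicitly relies on, which is a sound and arguably more careful reading of the inductive definition.
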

\begin{proof}
    That $vs_{\sim}(v)$ is a tree follows from the definition: the base cases are one-vertex trees and the inductive steps cannot create cycles. Any vertex in~$V$ has at most $d$ neighbors, so is adjacent to at most $d$~equivalence classes.  Therefore, no vertex in $vs_{\sim}(v)$ has degree more than~$d$.  $vs_{\sim}(v)$ has depth~$k$, so it contains at most $\sum_{i=0}^k d^i = \mathcal{O}(d^k)$ vertices.
\end{proof}

In \cref{theorem:data-complexity}, we show that the size of a single vertex summary can be bounded by a function of the maximum degree in the input graph and the chaining parameter~$k$. 
Thus, in principle, a single vertex summary in a graph summary may be bigger than the original GDB, but this requires the use of highly nested CSEs on small GDBs, which is unlikely in practice.

\subsection{Summary}
We defined structural graph summaries as equivalence relations over data graphs. 
As data graphs, we work with Labeled Property Graphs (LPGs) and Resource Description Framework (RDF) graphs. 
We described simple transformations between LPGs into RDF graphs. 
Thus, LPGs and RDF can be used interchangeably and without loss of generality. 
Furthermore, we introduced the main concepts of the  graph summary model FLUID, which allows us to define equivalence relations for structural graph summaries.
Finally, we defined summary graphs and the vertex summaries they are built from. These can represent all structural graph summaries defined with FLUID and analyzed its data complexity.
In the next section, we define our algorithm to compute and update FLUID graph summaries. 

\section{Our Graph Summarization Algorithms}
\label{sec:graph-summarization}

Our base algorithm described in~\cref{sec:parallel-algorithm} is designed to allow parallel computation of graph summaries in a distributed system architecture~\cite{DBLP:conf/cikm/BlumeRS20}.
The algorithm is based on the idea of Tarjan's two-phase algorithm for the set union problem~\cite{DBLP:journals/jacm/TarjanL84}.
We implement the make-set phase following a vertex-centric programming model~\cite{DBLP:conf/sigmod/MalewiczABDHLC10,DBLP:journals/semweb/StutzSB16}.
As vertex-centric programming model, we use the message sending and merging of Pregel~\cite{DBLP:conf/sigmod/MalewiczABDHLC10}, which is inherently iterative, synchronous, and deterministic~\cite{DBLP:phd/dnb/Erb20}
The programming model achieves parallelism very similar to MapReduce~\cite{DBLP:journals/cacm/DeanG10}, while it is specifically aiming for graph data~\cite{DBLP:phd/dnb/Erb20}.

In \cref{sec:incremental-algorithm}, we describe the extension of this base algorithm to incrementally compute updates of summaries for evolving graphs.
Our incremental algorithm can automatically detect changes in the data graph, \ie it works without being provided with a change log.
This detection of changes takes time linear in the size of the input graph.
Finally, we present the extension of the base algorithm to a hash-based messaging approach in \cref{sec:iterative-bisimulation}, which allows us to efficiently compute $k$-bisimulations for larger $k$.

\subsection{Parallel Algorithm for Graph Summarization}
\label{sec:parallel-algorithm}

\begin{figure*}[t!]
\centering
\includegraphics[trim={0cm 7cm 0cm 1cm},clip,width=.98\linewidth]{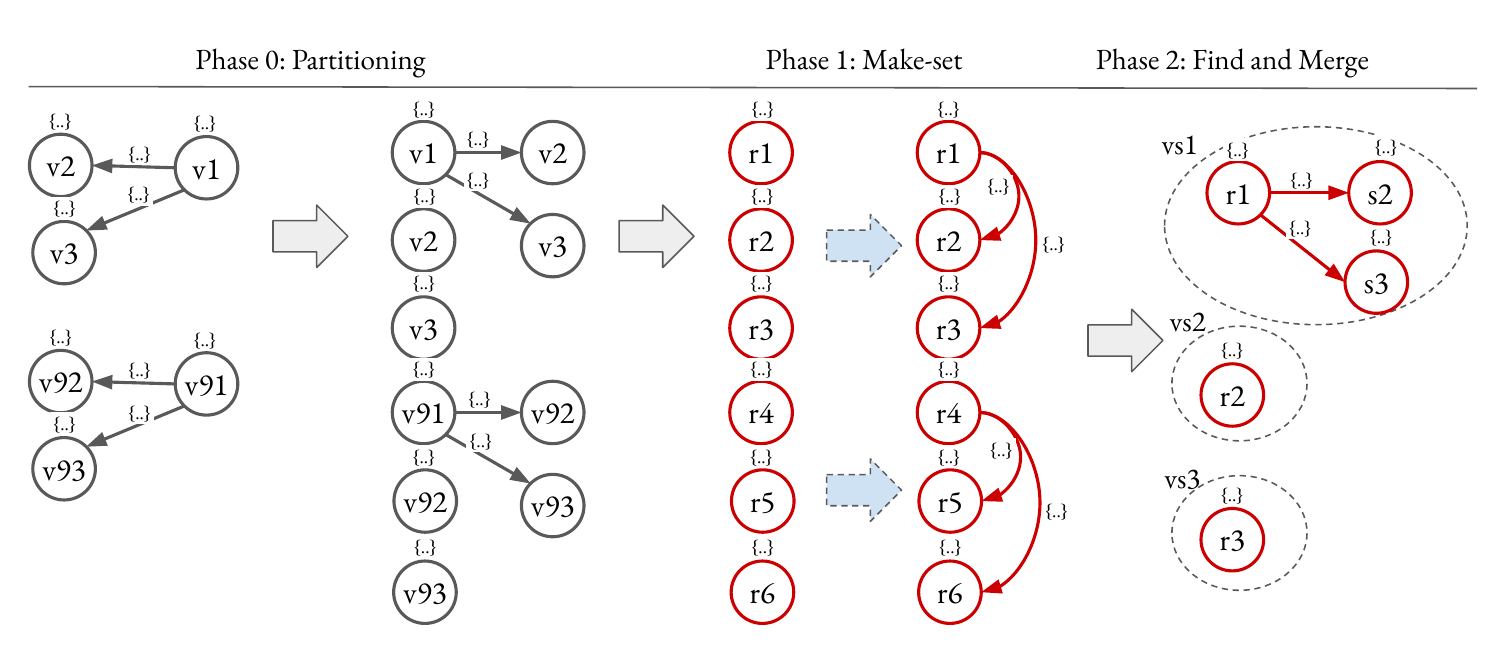}
\caption{\label{fig:find-and-merge} 
Phase~0: Distribute parts of the data graph to the computing nodes. Phase~1: iteratively compute vertex summaries $vs$. Phase~2: merge equivalent vertex summaries, \ie remove redundant information provided by r$4$, r$5$, and r$6$.
Note $\{\ldots\}$ denotes an arbitrary set of labels to the edges and vertices, respectively. 
}
\end{figure*}

Our parallel algorithm for graph summarization is inspired by the two-phase approach of Tarjan's algorithm for the set union problem~\cite{DBLP:journals/jacm/TarjanL84}, as shown in \cref{fig:find-and-merge}. 
To achieve high parallelism, we partition the graph in such a way that all vertices with their label set and references to their outgoing edges including their label are in a single partition.
Thus, the equivalence defined by simple schema elements can be computed for each vertex independently. This partition is computed as Phase~0 of our algorithm.
In Phase 1 (make-set), we compute, for each vertex~$v$ the corresponding vertex summary~$vs$. 
In Phase 2 (find and merge), we find the vertex summaries that have the same schema structure and merge them.
This relates to the find-set operation in Tarjan's algorithm.
When all vertex summaries with the same schema structure are merged, we have partitioned the GDB.

\paragraph{Example Definition of a Graph Summary Model}
To construct a graph summary, we first define an equivalence relation~$\sim$, \ie we define the schema structure we want to capture.
Recall from \cref{sec:data-structure} that 
SchemEX~\cite{DBLP:journals/ws/KonrathGSS12} is defined as the CSE 
${\sim_{\text{SchemEX}}} := (\OCtype{}, \idrel, \OCtype{})$.
To compute the SchemEX graph summary, we need to extract from each vertex~$v$ in the graph database GDB the information needed to define the SchemEX equivalence relation.
This means all vertex labels (types) and all edge labels (properties).
The complex schema element also requires us to compare the vertex labels of all neighbors $w \in \Gamma(v)$.
Thus, we have to exchange information between vertices, sending the label sets of vertices $w$ to all vertices $v$, where there is an edge $(v,w) \in E$.
Finally, we combine this neighbor information with the local vertex information into a vertex summary~$vs(v)$ for each vertex~$v$ in the graph database.
This, completes the first phase as described above.
Note that, when we apply the $k$-chaining parameter, we must repeat this step of exchanging and combining information between vertices up to $k$~times.

\subsubsection{Parallel Algorithm}
We support the parallel computation of all graph summary models defined in \cref{sec:graph-summary-model}.
This is achieved by using a parameterized implementation of the simple and complex schema elements.
The pseudocode of the summarization algorithm \textsc{\summarizeMethod{}} is presented in \cref{algo:summarization}.
In \cref{algo:extraction}, the extraction of the schema for each vertex~$v$ in the graph database begins.
In parallel, for each~$v$, the local vertex schemata are extracted as defined by the simple schema elements of the graph summary model provided as input.
This simple schema extraction is applied using both the $\sim^s$ and $\sim^o$ equivalence relations (see \cref{algo:extract-subject,algo:extract-object}) of the graph summary model.

The locally computed vertex schema is exchanged between vertices to construct the complex schema information, as defined by the graph summary model.
In \cref{algo:signal}, each vertex~$v$ receives the schema (according to the object equivalence relation~$\sim^o$) of all its neighbors.
Likewise, \cref{algo:collect}, collects neighbors' schemata and constructs the data structure defined in \cref{sec:data-structure}.
When we use the $k$-chaining parameterization, this step of sending and aggregating information is done $k$~times (\cref{algo:k-bisimulation,algo:k-bisimulation2,algo:k-bisimulation-end})
and a vertex accesses the schema information from vertices up to distance~$k$. 
\cref{algo:payload} extracts the payload information from the vertex~$v$.
Example payload functions are counting the number of vertices (increasing a counter) or memorizing the source label of~$v$.
The final vertex summary $vs$ and payload vertex~$pe$ are stored in a centralized managed data structure (\cref{algo:merge}), \eg a graph database, where the \textsc{FindAndMerge} phase is implemented.
This means that we compare the vertex summaries and, when two vertices $v$ and $v'$ are found to have the same vertex summary~$vs$, this summary is stored only once.
The payloads of $v$ and~$v'$ are merged, \eg the number of summarized vertices is added or the source lists are concatenated.

The  direction parameterization only changes how the graph is traversed but not the algorithm itself, so it is not shown.
The label and set parameterizations are omitted as they require only a lookup in the corresponding parameter set.
The instance parameterization is a pre-processing step, \ie all vertices connected by an edge with a specific label, \eg \texttt{owl:sameAs}~\cite{w3c-owl}, are merged in \cref{algo:instance-param}.
Following Liebig \etal~\cite{DBLP:conf/semweb/LiebigVOM15}, the inference parameterization is a post-processing step, since inference on a single vertex summary $vs$ is equivalent to inference on all summarized subgraphs.

The \textsc{\summarizeMethod{}} and \textsc{FindAndMerge} functions allow a graph summary $SG$ to be passed as a parameter.
Passing an empty graph summary~$SG$ corresponds to batch computation; for incremental computation, the previous graph summary is passed.

\begin{algorithm}[t]
 \small
 \SetAlgoLined
 \SetFuncSty{textsc}
 \SetKwProg{Fn}{function}{}{end}
 \SetKwProg{ForAllParallel}{forall}{ do in parallel}{end}
 \SetKwInput{Input}{Input}
 \SetKw{Result}{returns}
 \SetKwFunction{summarize}{\summarizeMethod{}}
 \SetKwFunction{extractVertex}{ExtractVertexSchema}
 \SetKwFunction{extractEdge}{ExtractEdgeSchema}
 \SetKwFunction{extractPayload}{ExtractPayload}
 \SetKwFunction{index}{FindAndMerge}
 \SetKwFunction{initialize}{LoadOrCreateSummaryGraph}
 \SetKwData{secondaryIndex}{$SG$} 
 \SetKwFunction{getAll}{GetSummarizedVertices}
 \SetKwFunction{delete}{RemoveLink}
 \SetKwFunction{deleteElement}{RemoveElement}
 \SetKwFunction{getLinks}{GetLinks}
 \SetKwFunction{getLink}{GetLink}
 \SetKwFunction{add}{Add}

\Fn{\summarize{$GDB, SG, (\sim_s,\sim_p,\sim_o)_k$}}{
 \Input{graph database GDB = $(V, E, \mathcal{G}, \ell_G)$}
 \Input{graph summary $SG$ (can be empty)}
 \Input{equivalence relation ${\sim} = (\sim_s,\sim_p,\sim_o)_k$}

  \Result{graph summary $SG$}
  \BlankLine

\If{Instance Parameter $\mathcal{S}$ is used}{\label{algo:instance-param}
 \ForAll{$(v,w) \in E$}{
    \If{$\ell_E(v,w) \in \mathcal{S}$}{
     \textsc{MergeVertices($v,w$)}\;
    }
  }
}

 \ForAllParallel{$v \in V$}{\label{algo:extraction}
   $vs \gets $ \extractVertex{$v, E, \sim_s$}\;\label{algo:extract-subject}
   $tmp \gets $ \extractVertex{$v, E, \sim_o$}\;\label{algo:extract-object}
   \tcc{send information relevant for $\sim_o$ to incoming neighbors}
   \ForAll{$w \in V:\ (w,v) \in E$}{\label{algo:signal}
     $w$.\textsc{Inbox} $\gets$ ($tmp$, $1$)\;
   }
   \tcc{merge information of neighbors and construct complex vertex summaries}
   \ForAll{($tmp\_vs', r) \in v$.\textsc{Inbox}}{\label{algo:collect}
    $t \gets$ \extractEdge{$(v, w), \sim_p$}\;\label{algo:extract-predicate}
    $vs$.\textsc{Neighbor}$_w \gets (t, tmp\_vs')$\;
    \tcc{$k$-chaining repeats send and merge $k$-times}
    \If{$r < k$}{\label{algo:k-bisimulation}
      \ForAll{$w \in V:\ (w,v) \in E$\label{algo:k-bisimulation2}}{
         $w$.\textsc{Inbox} $\gets$ ($vs$, $r+1$)\;\label{algo:k-bisimulation-end}
      }
    }
   }
   $pe \gets$ \extractPayload{$v$}\;\label{algo:payload}
   $SG \gets $ \index{$SG, vs, pe, v$}\;\label{algo:merge}
 }

 \textbf{return} $SG$\;
 }
 \caption{\label{algo:summarization}Parameterized Parallel Graph Summarization}
\end{algorithm}

\subsubsection{Complexity of Parallel Summarization}
\label{sec:complexity-graph-summarization}
We briefly discuss our parallel algorithm's complexity.
Phase~$1$ partitions the set of $n$ vertices using $n$ make-set operations.
Phase~$2$ uses some number $m \leq n$ of find operations.
The worst-case complexity of this computation is proven to be $\mathcal{O}(n + m \cdot \alpha(m + n, n))$, where $\alpha$ is the functional inverse of Ackermann's function~\cite{DBLP:journals/jacm/TarjanL84}.
It is generally accepted that in practice $\alpha \leq 4$ holds true~\cite{DBLP:journals/jacm/TarjanL84}.
The inverse-Ackermann performance of Tarjan's algorithm is asymptotically optimal.
Tarjan's algorithm establishes an essentially linear lower bound on summary computation.

A detailed analysis of the complexity of graph summarization using our formal language FLUID can be found in~\cite{DBLP:journals/tcs/BlumeRS21}.
In summary, the analysis concludes that only the chaining and inference parameterizations affect the worst-case complexity. 
The inference parameterization applied on the graph database leads to a worst-case complexity of $\mathcal{O}(n^2)$.
We implement the inference parameterization as a post-processing step so this has no impact on \cref{algo:summarization}.
We consider the chaining parameterization in detail in \cref{sec:incremental:complexity}.

Typical payload functions extract information from a single vertex, \eg counting or storing the source of a data graph~\cite{DBLP:journals/vldb/CebiricGKKMTZ19}.
These functions run in time $\mathcal{O}(1)$ since only a single vertex is needed to extract the payload.
When we find and merge the vertex summaries, the payload is merged in time $\mathcal{O}(1)$ as well.

\subsection{Incremental Algorithm for Summarization over Evolving Graph}
\label{sec:incremental-algorithm}

For the incremental algorithm, we adapt the find and merge phase of the batch-based parallel algorithm.
In the batch algorithm, all vertex summaries~$vs$ are computed, found, and merged.
In the incremental algorithm, only vertex summaries of vertices with changed information are found and merged.
This avoids unnecessary operations.
However, if no change log is available, for each vertex~$v$ in the graph database the make-set operation needs to be executed, \ie the new vertex summary~$vs$ needs to be extracted.
When a change log is provided, vertex summaries are extracted only for changed vertices.

There are six changes in a graph database that could require updates in a structural graph summary:
a new vertex is observed with a new schema (\addSchema{}), a new vertex is observed with a known schema (\addInstance{}), a known vertex is observed with a changed schema (\modifySchema{}), a known vertex is observed with changed \enquote{payload-relevant information} (\modifyInstance{}), a vertex with its schema and payload information no longer exist (\deleteInstance{}), and no more vertices with a specific schema structure exist (\deleteSchema{}).

\begin{figure}[!b]
\centering
\includegraphics[trim={0.35cm 13.40cm 11.23cm 0cm},clip,width=0.6\linewidth]{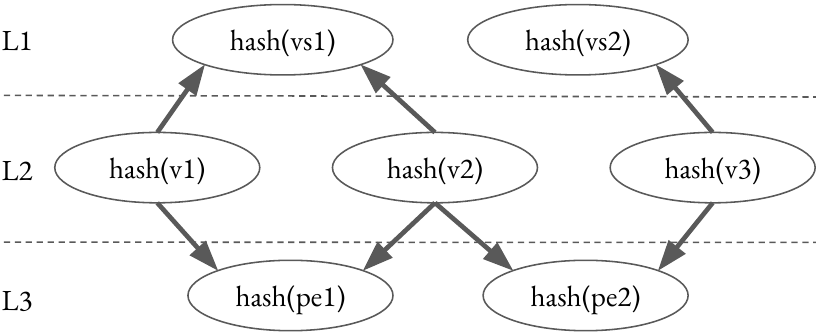}
\caption{The \vertexHashIndex{} is a three-layered data structure. L1 is a unique hash index for the vertex summaries $vs$, L2 is a unique hash index for vertices $v$, and L3 is a unique hash index for payload vertices $pe$.}
\label{fig:secondary-index}
\end{figure}

To check if vertices have changed, we developed the data structure called \vertexHashIndex{} illustrated in Figure~\ref{fig:secondary-index}.
This three-layered index allows us to trace links between vertices~$v$ in the graph database, and vertex summaries $vs$ and payload elements $pe$ in the graph summary.
Intuitively, in the find and merge phase, we look in the \vertexHashIndex{} to see if the vertex summary and payload element for a vertex contain information that requires an update to the graph summary.
If no update is required, we can skip the vertex.
When accessing the \vertexHashIndex{} is faster than actually finding and merging vertex summaries, this decreases the computation time.
We implement the \vertexHashIndex{} as a three-layered unique hash index with cross-links between the layers.
Since only hashes and cross-links are stored, updates on the secondary data structure are faster than updates on the graph summary.
The unique hash indices ensure that there is at most one entry for all referenced vertices in each layer. 
However, maintaining data consistency in the \vertexHashIndex{} is the primary problem that is solved by our algorithm.

\subsubsection{Incremental Algorithm}
The incremental graph summarization algorithm contains two extensions to the parallel graph summarization algorithm, \cref{algo:summarization}.
First, we replace the \textsc{FindAndMerge} function with an incremental version called \textsc{IncrementalFindAndMerge}, which is shown in \cref{algo:incremental}.
This handles additions and modifications. 
Second, we add a loop to handle deletions from the graph database (see \cref{algo:dele-agorithm}).

\begin{algorithm}[t]
 \small
 \SetAlgoLined
 \SetFuncSty{textsc}
 \SetKwProg{Fn}{function}{}{end}
 \SetKwProg{ForAllParallel}{forall}{ do in parallel}{end}
 \SetKwInput{Input}{Input}
 \SetKw{Result}{returns}
 \SetKwFunction{index}{IncrementalFindAndMerge}
 \SetKwFunction{retrieve}{GetElement}
 \SetKwFunction{contains}{ContainsLink}
 \SetKwFunction{containsElement}{ContainsElement}
 \SetKwFunction{delete}{RemoveLink}
 \SetKwFunction{deleteElement}{RemoveElement}
 \SetKwFunction{getLink}{GetLink}
 \SetKwFunction{getLinks}{GetLinks}

 \SetKwFunction{addLink}{AddLink}
 \SetKwFunction{addElement}{AddElement}
 \SetKwFunction{updatePayload}{UpdatePayload}
 \SetKwFunction{hash}{Hash}

 \SetKwData{secondaryIndex}{\texttt{VHI}} 

\Fn{\index{$SG, vs, pe, v$}}{
  \Input{graph summary $SG$}
  \Input{vertex summary $vs$}
  \Input{payload element $pe$}
  \Input{vertex $v$}
 
 \Result{updated graph summary $SG$}
 \BlankLine
 \tcc{\secondaryIndex is the \vertexHashIndex{}}
  \If{\secondaryIndex.\contains{$v$}}{\label{algo:lookup-secondary-index}
    \tcc{Modifications are handled as delete + add operation}
    $\idprev{} \gets$ \secondaryIndex.\getLink{$v$}\;\label{algo:get-secondary-index}
   	$\vsprev{} \gets SG$.\retrieve{$\idprev{}$}\;\label{algo:get-graph-summary}
    	\If{$\vsprev{} \not= vs$}{\label{algo:modifications}
    		\secondaryIndex.\delete{$v$}\; \label{algo:delete-link}
    		\If{$|$\secondaryIndex.\getLinks{$\vsprev{}$}$| \leq 0$}{$SG$.\deleteElement{$\vsprev{}$}\;\label{algo:delete-element}
    		}

    	}
    }

    \If{\textbf{not} \secondaryIndex.\contains{$v$}}{\label{algo:not-known}
        \secondaryIndex.\addLink{$v$, \hash{$vs$}}\;\label{algo:add-link}
    }
	\eIf{$SG$.\containsElement{$vs$}}{ 
	 $SG$.\updatePayload{$vs$, $pe$}\;\label{algo:update-payload}
	}{
	 $SG$.\addElement{$vs, pe$}\;\label{algo:add-element}
	}
	\textbf{return} $SG$\;
 }
 \caption{Incremental FindAndMerge.}
 \label{algo:incremental}
\end{algorithm}

\cref{algo:lookup-secondary-index} of \cref{algo:incremental} checks if the vertex~$v$ is already in the \vertexHashIndex{}. 
If it is, we retrieve its existing vertex summary $\vsprev{}$ (\cref{algo:get-secondary-index,algo:get-graph-summary}). 
If the current vertex summary $vs$ differs from $\vsprev{}$, then $v$'s schema has changed (\modifySchema{}), so we delete the link between $v$ and~$\vsprev{}$ in the \vertexHashIndex{} (\cref{algo:delete-link}). 
If we have just deleted the last link to $\vsprev{}$, then $\vsprev{}$ no longer summarizes any vertex and it is deleted from $SG$ in \cref{algo:delete-element} (\deleteSchema{}).
At this point (\cref{algo:not-known}), there are two reasons that $v$~might not be in the \vertexHashIndex{}: it could be a new vertex that was not in the previous version of the GDB (\addInstance{} or \addSchema{}) or it could be a vertex whose schema has changed since the previous version and we deleted it at \cref{algo:delete-link} (\modifySchema{}). 
In either case, we add a link from $v$ to its new summary~$vs$ at \cref{algo:add-link}.
After the \vertexHashIndex{} is updated, we update the graph summary $SG$.
If $vs$ is already in $SG$, we update the payload element $pe$ of $vs$ in \cref{algo:update-payload} (\addInstance{}).
Thus, we found and merged a vertex summary.
This reflects the case that the payload has changed (\modifyInstance{}), \eg the source graph label has changed.
If $vs$ does not yet exist in $SG$, we add the vertex summary $vs$ to the graph summary $SG$ (\addSchema{}).

After completing phases 1 and~2 of our incremental graph summarization algorithm, we handle deletions (\deleteInstance{}).
The pseudocode for deletions is presented in \cref{algo:dele-agorithm}.
All vertices that are no longer in the GDB are deleted from the \vertexHashIndex{} (\cref{algo:delete}).
Analogously to the deletion described above, deleting entries in the \vertexHashIndex{} can trigger the deletion of a vertex summary $vs$ in the graph summary $SG$ (\deleteSchema{}).
If a vertex summary no longer summarizes vertices in the GDB, it is deleted (\cref{algo:delete-2}).
Note that deleting a vertex summary means deleting at least the primary vertex and its edges. 
The secondary vertices may still be part of other vertex summaries. 

\begin{algorithm}[t]
 \small
 \SetAlgoLined
 \SetFuncSty{textsc}
 \SetKwProg{Fn}{function}{}{end}
 \SetKwProg{ForAllParallel}{forall}{ do in parallel}{end}
 \SetKwInput{Input}{Input}
 \SetKw{Result}{returns}
 \SetKwFunction{summarize}{\summarizeMethod{}}
 \SetKwFunction{extractVertex}{ExtractSimpleVertexSchema}
 \SetKwFunction{extractEdge}{ExtractSimpleEdgeSchema}
 \SetKwFunction{extractPayload}{ExtractPayload}
 \SetKwFunction{index}{FindAndMerge}
 \SetKwFunction{initialize}{LoadOrCreateSummaryGraph}
 \SetKwData{secondaryIndex}{$SG$} 
 \SetKwFunction{getAll}{GetSummarizedVertices}
 \SetKwFunction{delete}{RemoveLink}
 \SetKwFunction{deleteElement}{RemoveElement}
 \SetKwFunction{getLinks}{GetLinks}
 \SetKwFunction{getLink}{GetLink}
 \SetKwFunction{add}{Add}

\tcc{Assumption: all vertices not in the current GDB got deleted}
\tcc{Incremental $SG$ updates for deletions.}
\ForAll{$v \in$ \secondaryIndex.\getAll{} $: v \not\in V$}{\label{algo:delete}
    $vs \gets $ \secondaryIndex.\getLink{$v$}\;
    \secondaryIndex.\delete{$v$}\;
    \If{$|$\secondaryIndex.\getLinks{$vs$}$| \leq 0$ }{
        \tcc{Delete vertex summaries that no longer summarize GDB vertices.}
        $SG$.\deleteElement{$vs$}\;\label{algo:delete-2}

    }
}
 \caption{\label{algo:dele-agorithm}Delete all vertices not in the current version of the GDB.}
\end{algorithm}

\subsubsection{Proof of Correctness of Incremental Algorithm}

\newcommand{\Batch}{\operatorname{Batch}}
\newcommand{\Incremental}{\operatorname{Incr}}
We now prove correctness of our incremental algorithm. Fix a summary model~$\sim$. We use \cref{algo:summarization} with the incremental find and merge implementation of \cref{algo:incremental} and the deletion routine of \cref{algo:dele-agorithm} to define the function $\Incremental(G,SG) = \textsc{\summarizeMethod{}}( G, SG, \sim )$, corresponding to incremental summarization of the graph~$G$ given an existing summary~$SG$. 
We also define the function $\Batch(G) = \textsc{\summarizeMethod{}}(G, \emptyset, \sim)$, corresponding to batch computation of a summary of~$G$ without a pre-existing summary.
Our intended application of the following theorem is that $\GDB_1$ and $\GDB_2$ are snapshots of the same database at different times but the proof is applicable for arbitrary GDBs.

\begin{theorem}
Fix a graph summary model~$\sim$. Let $V$ be a set of vertices and let $\GDB_1$ and $\GDB_2$ be any two GDBs. Then $\Incremental(\GDB_2, \Batch(\GDB_1)) = \Batch(\GDB_2)$.
\label{thm:blah}
\end{theorem}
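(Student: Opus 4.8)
The statement says that running the incremental algorithm on $\GDB_2$ starting from the batch summary of $\GDB_1$ yields exactly the same summary as computing the batch summary of $\GDB_2$ from scratch. The plan is to characterize the output summary $SG$ purely in terms of which vertex summaries appear and what payload each carries, and then show that both sides agree on this characterization. The key structural fact I would lean on is that, by the inductive definition of vertex summaries in Section~\ref{sec:data-structure}, the vertex summary $vs_\sim(v)$ depends only on the local structure of the GDB around $v$ up to distance $k$ — so for a fixed vertex $v$ present in $\GDB_2$, the value $vs$ computed in Phase~1 of \textsc{\summarizeMethod{}} is identical regardless of whether $SG$ was empty or equal to $\Batch(\GDB_1)$. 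This is because Phases~0 and~1 (the make-set phase) never read $SG$; only the find-and-merge phase does. Thus the two computations differ only in how the find-and-merge phase populates $SG$ and the \vertexHashIndex{}.

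Given this, I would prove the two summaries are equal by showing they have the same set of vertex summaries and the same payload attached to each. First I would argue \emph{completeness}: every vertex $v\in V_{\GDB_2}$ contributes its vertex summary $vs_\sim(v)$ to both outputs. For the batch side this is immediate from \cref{algo:add-element}. For the incremental side, I would trace \cref{algo:incremental}: whether or not $v$ was in the \vertexHashIndex{} from $\GDB_1$, the algorithm ends by either updating the payload of an existing $vs$ (\cref{algo:update-payload}) or adding $vs$ as a new element (\cref{algo:add-element}); either way $vs_\sim(v)$ is present in the final $SG$. Conversely I would argue \emph{soundness}: no spurious vertex summary survives in the incremental output that is absent from $\Batch(\GDB_2)$. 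This is exactly where the deletion logic matters. A vertex summary $\vsprev$ inherited from $\Batch(\GDB_1)$ can only persist if some $\GDB_2$-vertex still maps to it; the combination of the modification-deletion at \cref{algo:delete-link,algo:delete-element} and the deletion loop of \cref{algo:dele-agorithm} (\cref{algo:delete,algo:delete-2}) guarantees that any $vs$ whose link-count drops to zero is removed. So I would establish the invariant that after the full incremental run, the set of links in the \vertexHashIndex{} is precisely $\{(v, \Hash(vs_\sim(v))) \mid v\in V_{\GDB_2}\}$, and that $SG$ contains a vertex summary iff some current link points to it.

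The final and most delicate piece is agreement of \emph{payloads}, which is where I expect the main obstacle to lie. Payloads are accumulated by merging, so the incremental output must avoid both double-counting (e.g.\ a vertex $v$ unchanged between $\GDB_1$ and $\GDB_2$ should not have its contribution added twice) and omission. Since payloads are merged monoidally (counts added, source lists concatenated), correctness hinges on each current vertex being counted exactly once against its current summary and each removed/changed vertex having its previous contribution withdrawn. The subtlety is that \cref{algo:incremental} treats a modification as delete-then-add at the \emph{schema} level via the \vertexHashIndex{}, but the payload update at \cref{algo:update-payload} operates on $SG$ directly; I would need to verify that the bookkeeping keeps the net payload of each surviving $vs$ equal to the merge over exactly the current set of vertices mapping to it. The cleanest way to discharge this is to assume (as the scenarios in the introduction implicitly do) that the payload function is an idempotent-per-vertex, commutative-monoid accumulation so that \textsc{UpdatePayload} recomputes rather than blindly increments, or alternatively to reason by a bijection between the multiset of make-set contributions in the batch run and those effectively applied in the incremental run. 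Establishing this payload invariant, and confirming it is preserved across all six change types (\addSchema, \addInstance, \modifySchema, \modifyInstance, \deleteInstance, \deleteSchema), is the crux of the argument; the schema-level equality follows more routinely from the locality of $vs_\sim(v)$ and the link-counting invariant described above.
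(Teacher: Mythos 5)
Your plan is correct and its core is the same as the paper's proof: observe that Phases 0--1 never read $SG$, so each vertex of $\GDB_2$ gets the identical vertex summary $vs_{\sim}(v)$ in both runs, and then trace the find-and-merge and deletion logic to conclude that both outputs summarize exactly the vertices of $\GDB_2$. The difference is organizational and one of rigor. The paper partitions the vertices into the three classes $V_2\setminus V_1$, $V_1\cap V_2$, and $V_1\setminus V_2$ and walks both computations through each class, whereas you phrase the schema-level part as a completeness/soundness argument anchored on the invariant that the final \vertexHashIndex{} links are exactly the pairs $(v, vs_{\sim}(v))$ for $v$ in $\GDB_2$; these amount to the same case analysis, though your invariant formulation would extend more cleanly to a proof by induction over a sequence of updates. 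Where you genuinely depart from the paper is the payload. You correctly flag that \cref{algo:update-payload} merges $v$'s payload contribution into a summary whose payload, inherited from $\Batch(\GDB_1)$, may already contain that very contribution (an unchanged vertex in $V_1\cap V_2$), and that the deletion loop of \cref{algo:dele-agorithm} removes links but never subtracts a departed vertex's contribution from a summary that survives because other vertices still map to it. For non-idempotent payloads such as cardinality counts this is a real double-counting and staleness concern; it is harmless only under idempotent (set-union) semantics, or if \textsc{UpdatePayload} recomputes from the per-vertex records in layer L3 of the \vertexHashIndex{} rather than blindly accumulating --- exactly the hypotheses you propose. The paper's own proof is silent on this point: it says payloads are ``updated'' and asserts equality of the two summaries without distinguishing payload semantics. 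So what you identify as the crux is not a gap in your argument relative to the paper; it is a condition the published proof implicitly assumes, and making it explicit (as an assumption on the payload monoid or via your bijection between make-set contributions) is the right way to close the theorem rigorously.
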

\begin{proof}
  For $i\in\{1, 2\}$, let $\GDB_i = (V_i, E_i, \mathcal{G}_i, \ell_{G,i})$ and
  let $BSG_i = \Batch(\GDB_i)$. Let $ISG = \Incremental(\GDB_2, BSG_1)$. Note that $BSG_1$ contains summaries for exactly all the vertices of $\GDB_1$, \ie exactly all the vertices in~$V_1$.

  We must show that $ISG=BSG_2$. To do this, we consider separately the vertices in $V_2\setminus V_1$ (new vertices, via \addSchema{} and \addInstance{}), vertices in $V_2\cap V_1$ (unchanged vertices and those modified via \modifySchema{} and \modifyInstance{}) and vertices in $V_1\setminus V_2$ (those deleted via \deleteSchema{} and \deleteInstance{}). We describe how the computations $\Incremental(\GDB_2, BSG_1)$ and $\Batch(\GDB_2)$ process vertices in these three classes.

  Consider a vertex $v\in V_2\setminus V_1$. Since $v\notin V_1$, $v$~is not in the \vertexHashIndex{} when we begin to compute $\Incremental(\GDB_2, BSG_1)$. We add it and its summary~$vs$ to the \vertexHashIndex{} (\cref{algo:add-link}). In the case of \addInstance{}, $vs$~is already contained in $BSG_1$ so we just update its payload in $ISG$ (\cref{algo:update-payload}); for \addSchema{}, $vs$ is not in $BSG_1$ so we add it to $ISG$ (\cref{algo:add-element}). When computing $\Batch(\GDB_2)$, we compute the same summary~$vs$. If $v$~is the first vertex we have seen with this summary, we add $vs$ to $BSG_2$ with the appropriate payload and link $v$ to it in the \vertexHashIndex{}; otherwise, $vs$~is already in $BSG_2$ and we update its payload and link $v$ to it.

  Now, consider a vertex $v\in V_1\cap V2$. When we begin to compute $\Incremental(\GDB_2, BSG_1)$, $v$~is summarized in $BSG_1$ so it is contained in the \vertexHashIndex{}. We retrieve its existing vertex summary $\vsprev{}$ (\cref{algo:get-secondary-index,algo:get-graph-summary}) and compare this to the new vertex summary~$vs$. If the vertex summary has changed (\modifySchema{}), we disconnect $v$ from $\vsprev$ in the \vertexHashIndex{}, delete $\vsprev$ from $ISG$ if it no longer summarizes any vertices and then link $v$ to~$vs$ (Lines \ref{algo:modifications}--\ref{algo:add-link}). 
  When we compute $\Batch(\GDB_2)$, $v$~is processed in the same way as in the previous case: if the summary $vs$ is already present in $BSG_2$, its payload is updated; 
  if $vs$ is not already present, it is added.

  Finally, consider a vertex $v\in V_1\setminus V_2$. Again, $v$ is summarized in $BSG_1$ so it is contained in the \vertexHashIndex{} when \cref{algo:summarization} begins. $v$~is not processed by \cref{algo:summarization}, because $v\notin V_2$, so it remains in the \vertexHashIndex{} when \cref{algo:summarization} completes. However, we then run \cref{algo:dele-agorithm}. This removes the link in the \vertexHashIndex{} from $v$ to its summary $vs$ (\deleteInstance{}) and, if $vs$ no longer summarizes any vertices, we delete $vs$ from $ISG$, too (\deleteSchema{}). When computing $\Batch(\GDB_2)$, $v$~is not processed because it is not in~$V_2$. Therefore, $v$~is not summarized in $BSG_2$.

  Thus, we have constructed summary graphs $ISG$ and $BSG_2$. Every vertex in $V_2$ has the same summary in $ISG$ as it does in $BSG_2$ and every vertex not in~$V_2$ is not summarized in either $ISG$ or $BSG_2$. Therefore, $ISG=BSG_2$, as claimed. 
\end{proof}

The following corollary shows that, if we incrementally compute a sequence of summaries over an evolving graph, the resulting summaries are the ones we would obtain by just running the batch algorithm at each version of the evolving graph. 

\begin{corollary}
Let $\GDB_1, \GDB_2, \dots$ be GDBs. 
Let $SG_1 = \Batch(\GDB_1)$ and, for all $k>1$, let $SG_k = \Incremental(\GDB_k,SG_{k-1})$. 
Then, for all $k\geq 1$, $SG_k = \Batch(\GDB_k)$.
\end{corollary}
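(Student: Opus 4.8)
The plan is to prove the corollary by a straightforward induction on~$k$, using \cref{thm:blah} as the engine of the inductive step. The feature of \cref{thm:blah} that makes this work is that it was established for \emph{arbitrary} pairs of GDBs, and not merely for two consecutive snapshots of a single evolving database; this generality is precisely what lets the induction close without any additional argument.

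For the base case $k=1$, there is nothing to prove, since $SG_1 = \Batch(\GDB_1)$ holds by definition of the sequence.

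For the inductive step, I would fix $k>1$ and assume the induction hypothesis $SG_{k-1} = \Batch(\GDB_{k-1})$. Unfolding the definition of the sequence gives $SG_k = \Incremental(\GDB_k, SG_{k-1})$, and substituting the induction hypothesis rewrites this as $SG_k = \Incremental(\GDB_k, \Batch(\GDB_{k-1}))$. At this point I would apply \cref{thm:blah} with its two GDBs instantiated as $\GDB_{k-1}$ (playing the role of the earlier snapshot) and $\GDB_k$ (the later one), which immediately yields $\Incremental(\GDB_k, \Batch(\GDB_{k-1})) = \Batch(\GDB_k)$. Combining the two equalities gives $SG_k = \Batch(\GDB_k)$, completing the induction.

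I do not expect any genuine obstacle: essentially the whole content of the corollary is already packaged inside \cref{thm:blah}, and the only thing to check at each step is that the theorem is applicable. The single point worth stating explicitly, to make the induction manifestly legitimate, is that $SG_{k-1}$ is in fact equal to a batch summary, so that the second argument passed to $\Incremental$ has the form $\Batch(\cdot)$ required by \cref{thm:blah}; this is exactly what the induction hypothesis supplies, which is why applying \cref{thm:blah} with $\Batch(\GDB_{k-1})$ as the pre-existing summary is valid.
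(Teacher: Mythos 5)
Your proof is correct and follows essentially the same route as the paper: induction on $k$ with the base case holding by definition, and the inductive step obtained by substituting the induction hypothesis and applying \cref{thm:blah} to the pair $(\GDB_{k-1}, \GDB_k)$. The paper's proof is identical in substance (merely indexing the step as $k \to k+1$ rather than $k-1 \to k$), so there is nothing to add.
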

\begin{proof}
    The case $k=1$ holds by hypothesis. Suppose that $SG_k = \Batch(\GDB_k)$. Then,
    \begin{align*}
        SG_{k+1} &= \Incremental(\GDB_{k+1},SG_k)\\
                 &= \Incremental(\GDB_{k+1},\Batch(\GDB_k))\\
                 &= \Batch(\GDB_{k+1})\,,
    \end{align*}
    where the first equality is defined in the corollary's statement, the second is by the inductive hypothesis and the third is by \cref{thm:blah}. 
\end{proof}

\subsubsection{Complexity of Incremental Summarization}
\label{sec:incremental:complexity}

In the following, we analyze the update complexity of all possible changes in the data graph w.r.t.\ the number of operations on the vertex summary, \ie adding and/or removing vertices and/or edges.
We first discuss \addSchema{}, \modifySchema{}, and \deleteSchema{} as they require an update to the vertex summary and possible cascading updates to other vertex summaries.
Then, we discuss updates on the \vertexHashIndex{}, which are common to all six changes, and we briefly discuss payload changes.

\paragraph{Graph Summary Updates.} 
Observed vertices with a new schema (\addSchema{}) require at least $1$ and at most $d^{k} + 1$ new vertices to be added to $SG$.
As discussed in \cref{sec:data-structure}, reusing vertices and edges in the graph summary reduces the number of add operations.
However, since it is a new vertex summary, at least one new vertex needs to be added, \ie the primary vertex.
Furthermore, up to $d^{k}$ edges are to be added to $SG$, in the same way.
Deleting all vertices $v$ summarized by a vertex summary $vs$ (\deleteSchema{}) also requires deleting $vs$ from SG.
\deleteSchema{} is the counterpart to \addSchema{}, \ie we have to revert all operations.
Thus, \deleteSchema{} has the same complexity as  \addSchema{}.
When we observe a vertex $v$ with vertex summary $vs'$ at time $t$, but already summarized $v$ with a different vertex summary $vs$ at time $t-1$, we have to modify the graph summary SG.
Transforming a vertex summary $vs$ to another vertex summary $vs'$ means in the worst case deleting all vertices and edges in $vs$ and adding all vertices and edges in $vs'$.
This occurs when the vertex summaries $vs$ and $vs'$ have no vertices in common, so the schema of $v$ has entirely changed from $t-1$ to $t$.
Thus, modifications to $vs'$ are in the worst case $d^{k} + 1$ added vertices, $d^{k}$ added edges, $d^{k} + 1$ deleted vertices, and $d^{k}$ deleted edges.
In the best case, $vs'$ already exists in $SG$ and no updates to the graph summary are needed.

\paragraph{Cascading Updates.}
When complex schema elements are used, updates to the vertex summary $vs$ of a vertex~$v$ can require updates to the vertex summaries of any neighboring vertex~$w$.
For each incoming edge to~$v$, up to $d^-$ vertices need an update.
Complex schema elements correspond to a chaining parameterization of $k=1$.
For arbitrary $k \in \mathbb{N}$, updating one vertex summary~$vs$ requires up to $(d^-)^k$ additional updates.
Therefore, the complexity of \addSchema{}, \deleteSchema{}, and \modifySchema{} is $\mathcal{O}(d^{k})$ for a single vertex update.
Since $k$ is fixed before computing the index, the only variable factor depending on the data is the maximum degree $d$ of the vertices in the GDB.

\paragraph{Updating the \vertexHashIndex{}.}
All six changes require an update to the \vertexHashIndex{}.
A summary model defined using an equivalence relation~$\sim$ partitions vertices of the GDB into equivalence classes $[v]_\sim$, \ie the vertex summaries (see \cref{sec:graph-summary-model}).
For each equivalence class $[v]_\sim$, there is exactly one entry $hash(vs)$ stored in the L1 layer of the \vertexHashIndex{}.
For each vertex $v$ in the GDB, there is a $hash(v)$ stored in L2, which links to exactly one hash in L1.
Thus, \addSchema{}, \deleteSchema{}, and \modifySchema{} require two operations on the \vertexHashIndex{}.
The remaining three changes \addInstance{}, \deleteInstance{}, and \modifyInstance{} require no updates on the vertex summaries $vs$, but require up to two updates on the \vertexHashIndex{}.
Suppose we observe a new vertex~$v$ that is summarized by an existing vertex summary $vs \subseteq SG$ (see \addInstance{}).
No updates on $SG$ are needed (the vertex's schema is already known) and only a single operation on the \vertexHashIndex{} is required to add $v$ to L2 and connect that vertex to L1.
Now, suppose a vertex~$v$ which is summarized by a vertex summary $vs$ is deleted from the GDB but there are other vertices $v'$ in the GDB that are summarized by~$vs$ (\deleteInstance{}).
In this case, no update on $SG$ is required and one operation on the \vertexHashIndex{} is required to
delete $hash(v)$ from L2.
In the case that the vertex $v$ is observed at time $t$ with vertex summary $vs$ and same vertex is already summarized by $vs$ at the previous time $t-1$ (\modifyInstance{}),
no update on $SG$ and no update to the \vertexHashIndex{} is required.

\paragraph{Payload Updates.}
All six changes possibly require an update to the payload.
As discussed above, different payloads are used to implement different tasks. 
Thus, the number of updates depends on what is stored as payload.
In principle, payload updates could be arbitrarily complex; however, payloads that are used in practice can be updated in constant time.
For example, for data search, the source label is stored in payload elements in the graph summary.
Links to these payload elements are stored in L3 of the \vertexHashIndex{}.
In this example, we only update payload elements if a source label changed.
This requires at most two updates to the \vertexHashIndex{}.

\paragraph{Overall Complexity.}
Any change in a GDB with maximum degree~$d$ requires at most $\mathcal{O}(d^{k})$ update operations on the graph summary, when the equivalence relation~$\sim$ is defined using a chaining parameter of~$k$.
Thus, the overall complexity of incrementally computing and updating the graph summary with $\Delta$ changes on the GDB is bounded by $\mathcal{O}(n + \Delta \cdot d^{k})$, where the GDB has $n$~vertices and maximum degree~$d$, and the chaining parameter is~$k$.

\subsection{Hash-based Messaging System for Long $k$-Chaining}
\label{sec:iterative-bisimulation}

The chaining parameterization enables computing $k$-bisimulations of a graph~$G$ and hence increases the considered neighborhood for determining vertex equivalence~\cite{DBLP:journals/tcs/BlumeRS21}.
The chaining parameterization has a parameter~$k$ that recursively applies CSEs to depth~$k$; the resulting CSE is denoted by $CSE_k$. 
As real-world graphs have some heterogeneity, there will be few $k$-bisimilar vertices for larger~$k$ resulting in larger graph summaries~\cite{DBLP:journals/vldb/CebiricGKKMTZ19}.
Our nested data structure of a vertex summary $v$ (see \cref{sec:data-structure}) and how the base algorithm operates on it (see \cref{sec:parallel-algorithm}) makes it flexible and powerful.
However, in particular for long $k$ chains, the base algorithm is expensive in terms of memory consumption. 
We extend our base algorithm towards a scalable algorithm for iterative computation of long $k$ chains. 

\subsubsection{Iterative Parallel Hash-Messaging Algorithm}

The nested data structure of a vertex summary $v$ 
is flexible and powerful. 
A vertex summary of a vertex $v$ contains a map of neighbors, which stores $(key, value)$ pairs, where $key$ is the edge label connecting $v$ to its neighbor $w$, and $value$ is the vertex summary of the neighbor $w$.
In the case of a $k >1$ bisimulation, the neighbors map becomes heavily nested.
For each neighbor of $v$, the map stores the vertex summary of a neighbor $w$.
Furthermore, the stored vertex summary of $w$ contains a map of neighbors of $w$, which potentially contains additional vertex summaries of the neighbors' neighbors of $w$, and so on.
Storing and signaling such nested structures quickly generates high memory requirements for $k>1$,\footnote{\url{https://spark.apache.org/docs/latest/tuning.html}} even for relatively small graphs.

Below, we show how the base algorithm can be extended to a scalable iterative algorithm for $k$-bisimulation by tuning it to avoid such nested data structures.
The key idea is based on Schätzle \etal's use of vertex identifiers derived from hashes instead of maps~\cite{DBLP:conf/sigmod/SchatzleNLP13}.
Every vertex has an identifier $id_{\sim_s}$ and $id_{\sim_o}$.
The algorithm's inputs are the graph database, equivalence relations $\sim_s$, $\sim_p$ and~$\sim_o$ and an integer~$k$, and it computes equivalence with respect to the graph summary model $(\sim_s, \sim_p, \sim_o)_k$. 
When the algorithm terminates, two vertices $v$ and~$v'$ are equivalent iff $v.id_{\sim_s} = v'.id_{\sim_s}$.  

\begin{algorithm}[h]
\small
\SetAlgoLined
\SetNoFillComment
\SetFuncSty{textsc}
\SetKwProg{Fn}{function}{}{end}
\SetKwProg{ForAllParallel}{forall}{ do in parallel}{end}
\SetKwInput{Input}{Input}
\SetKw{Result}{returns}
\SetKwFunction{summarizeAdjusted}{\summarizeAdjustedMethod{}}
\SetKwFunction{extractVertex}{VertexSchema}
\SetKwFunction{extractEdge}{EdgeSchema}
\SetKwFunction{signalMessages}{SignalMessages}
\SetKwFunction{mergeMessages}{MergeMessages}
\SetKwFunction{index}{FindAndMerge}
\SetKwFunction{hash}{Hash}
\caption{Iterative Parallel Hash-Messaging Algorithm for $k$-Bisimulation}
\label{alg:brs-adjusted}
\Fn{\summarizeAdjusted{$GDB, \sim_s,\sim_p,\sim_o,k$}}{
\Input{graph database GDB = $(V, E, \mathcal{G}, \ell_G)$}
\Input{equivalence relations $\sim_s$, $\sim_p$, $\sim_o$}
\Input{integer $k$}

\Result{graph summary $SG$}
\BlankLine
\tcc{Initialization}
\ForAllParallel{$v \in V$}{\label{algo:initialization}
$vs_{\sim_s} \gets $ \extractVertex{$v, E, \sim_s$}\; \label{algo:extract-subject-adjusted}
$vs_{\sim_o} \gets $ \extractVertex{$v, E, \sim_o$}\; \label{algo:extract-object-adjusted}
$v.id_{\sim_s} \gets vs_{\sim_s}.ID$\; \label{algo:initialize-id-subject}
$v.id_{\sim_o} \gets vs_{\sim_o}.ID$\; \label{algo:initialize-id-object}
\label{algo:intialization-end}}
\vspace{2ex}
\If{k > 1}{\label{algo:signal-all}
    \tcc{Signal initial messages. Update $v.id_{\sim_s}$ and $v.id_{\sim_o}$}
    \ForAllParallel{$v \in V$}{\label{algo:initial-signal}
        \signalMessages{$t_w, v.id_{\sim_s}, v.id_{\sim_o}$}\; \label{algo:signal-t-s-o-initial}
        $arr_{id_{\sim_s}} \gets $ \mergeMessages{$(t_w, id_{\sim_s})$ }\;
        \label{algo:merge-t-s-initial}
        $arr_{id_{\sim_o}} \gets $ \mergeMessages{$(t_w, id_{\sim_o})$ }\; \label{algo:merge-t-o}
        $v.id_{\sim_s} \gets $ \hash{$arr_{id_{\sim_s}}$}\; \label{algo:update-id-s-initial}
        $v.id_{\sim_o} \gets $ \hash{$arr_{id_{\sim_o}}$}\; \label{algo:update-id-o-intitial}
    }\label{algo:initial-signal-end}
    
    \vspace{2ex}
    
    \tcc{Signal $k-2$ times. Do not include $t_w$ when updating $v.id_{\sim_o}$.}
    \For{$i = 2$ to $k - 1$}{\label{algo:signal2}
        \ForAllParallel{$v \in V$}{\label{algo:k-2_signals}
            \signalMessages{$t_w, v.id_{\sim_s}, v.id_{\sim_o}$}\; \label{algo:signal-t-s-o}
            $arr_{id_{\sim_s}} \gets $ \mergeMessages{$(t_w, id_{\sim_s})$}\; \label{algo:merge-t-s}
            $arr_{id_{\sim_o}} \gets $ \mergeMessages{$id_{\sim_o}$}\; \label{algo:merge-o}
            $v.id_{\sim_s} \gets $ \hash{$arr_{id_{\sim_s}}$}\; \label{algo:update-id-s}
            $v.id_{\sim_o} \gets $ \hash{$arr_{id_{\sim_o}}$}\; \label{algo:update-id-o}
        }
    }\label{algo:signal-end}
    
    \vspace{2ex}
    
    \tcc{Signal final messages. Update $v.id_{\sim_s}$}
    \ForAllParallel{$v \in V$}{\label{algo:final-signal}
        \signalMessages{$v.id_{\sim_o}$}\; \label{algo:signal-o}
        $arr_{id_{\sim_o}} \gets $ \mergeMessages{$id_{\sim_o}$}\; \label{algo:merge-o-final}
            $v.id_{\sim_s} \gets $ \hash{$arr_{id_{\sim_o}}$}\; \label{algo:update-id-s-final}
    }\label{algo:final-signal-end}
}\label{algo:signal-all-end}
\tcc{If $k = 1$: Only signal $(t_w, v.id_{\sim_o})$ and return}
\Else{\label{algo:signal-only-objId}
    \ForAllParallel{$v \in V$}{\label{algo:signal-k-is-1}
        \signalMessages{$t_w, v.id_{\sim_o}$}\; \label{algo:only-signal-t-o}
        $arr_{id_{\sim_o}} \gets $ \mergeMessages{$(t_w, id_{\sim_o})$}\; \label{algo:merge-only-t-o}
        $v.id_{\sim_s} \gets $ \hash{$arr_{id_{\sim_o}}$}\; \label{algo:update-id-only-t-o}
    }
}\label{algo:signal-only-objId-end}
        \vspace{2ex}
$SG \gets $ \index{$SG, V$}\;\label{algo:adjusted-merge}
\Return $SG$\;
}
\end{algorithm}

\Cref{alg:brs-adjusted} shows the pseudocode for the hash-messaging algorithm to efficiently compute $k$-bisimulation.
In contrast to the base algorithm, it operates on vertex summaries only in the initialization step (Lines~\ref{algo:initialization} to \ref{algo:intialization-end}), which computes the local information of a vertex w.r.t. $\sim_s$ (\cref{algo:extract-subject-adjusted}) and $\sim_o$ (\cref{algo:extract-object-adjusted}) of the respective graph summary model.
After computing the local information, an $ID$ function is used to derive a numerical value from the initial vertex summaries, resulting in identifier values $id_{\sim_s}$ (\cref{algo:initialize-id-subject}) and $id_{\sim_o}$ (\cref{algo:initialize-id-object}).
The $ID$ function calculates a numerical value based on the vertex summaries' content, such that if two vertex summaries~$vs_1$ and~$vs_2$ have the same content, the $ID$ function outputs the same values.
Afterwards, if $k=1$ (\cref{algo:signal-only-objId} to \cref{algo:signal-only-objId-end}), every vertex~$v$ sends, to each in-neighbor~$w$, its $id_{\sim_o}$ value along with the simple edge schema~$t_w$, corresponding to the labels of the edge $(w,v)$ (\cref{algo:only-signal-t-o}).
Subsequently, each vertex merges the messages it receives into an array containing tuples with the received information $(t_w, id_{\sim_o})$ (\cref{algo:merge-only-t-o}).
This is handled by the function \textsc{MergeMessages}, which uses a hash table to eliminate any duplicates. 
Finally, the vertex's $id_{\sim_s}$ value is updated by hashing the array (\cref{algo:update-id-only-t-o}).
\textsc{Hash} uses an order independent hash function, \ie it computes the same value for the arrays $\text{[1,2]}$ and~$\text{[2,1]}$.

Moreover, whenever the algorithm updates the $v.id_{\sim_s}$ value for a vertex $v$, it first adds the old $v.id_{\sim_s}$ value of $v$ so that this information is included in the new value.
If $k > 1$, the algorithm operates differently (\cref{algo:signal-all} to \cref{algo:signal-all-end}).
In the first iteration (lines \ref{algo:initial-signal}-- \ref{algo:initial-signal-end}) every vertex $v$ sends its $id_{\sim_s}$ and $id_{\sim_o}$ values and the simple edge schema $t_w$ to each in-neighbor~$w$ (\cref{algo:signal-t-s-o-initial}).
Subsequently, each vertex merges the incoming messages into (1) an array containing tuples with the received information $(t_w, id_{\sim_s})$ (\cref{algo:merge-t-s-initial}) and (2) an array containing tuples with the received information $(t_w, id_{\sim_o})$ (\cref{algo:merge-t-o}).
Finally, the $id_{\sim_s}$ value (\cref{algo:update-id-s-initial}) and the $id_{\sim_o}$ value (\cref{algo:update-id-o-intitial}) of~$v$ are updated by hashing the corresponding arrays.
In the next block (lines \ref{algo:signal2}--\ref{algo:signal-end}), the algorithm performs the same steps but excludes $t_w$ when merging messages for $id_{\sim_o}$ (\cref{algo:merge-o}).
Including $t_w$ is not necessary, as the $id_\sim{o}$ values already depend on the corresponding simple edge schema, which was added in the first iteration.
As a result, in the final iteration (lines \ref{algo:final-signal}--\ref{algo:final-signal-end}), the only missing information to determine equivalence between two vertices $v$ and~$v'$ is stored in their neighbors' $id_{\sim_o}$ values.
Hence, each vertex first signals this information to its neighbors (\cref{algo:signal-o}), then merges the incoming information (\cref{algo:merge-o-final}) and compute their final $id_{\sim_s}$ value (\cref{algo:update-id-s-final}).
At the end of execution, vertices with the same $id_{\sim_s}$ value are merged together (\cref{algo:adjusted-merge}), which are resembling the equivalence classes.

\paragraph{Example}
We illustrate how the algorithm operates using the forward $k$-bisimulation of Schätzle  \etal{}~\cite{DBLP:conf/sigmod/SchatzleNLP13} applied on the graph in Figure~\ref{fig:lpg-example-introduction} for $k=2$.
The forward bisimulation summary model of Schätzle~\etal{}\@~\cite{DBLP:conf/sigmod/SchatzleNLP13} incorporates edge labels into the definition of $k$-bisimulation.
Using the chaining parameterization (\cref{sec:graph-summary-model}), the edge-labeled forward $k$-bisimulation of Schätzle  \etal{}~\cite{DBLP:conf/sigmod/SchatzleNLP13} is given by
\begin{align}
\label{gsm:schaetzle}
    \begin{aligned}
        CSE_{\text{Schätzle}} = cp((T, id, T), k) = (T, id, T)_k\,.
     \end{aligned}
\end{align}

For $k = 2$, this is $(T, id, (T, id, T))$.
Both the subject relation $\sim_s$ and the object relation $\sim_o$ are defined to be the tautology.
Hence, in the initialization step, every vertex is assigned the same value for $id_{\sim_s}$ and $id_{\sim_o}$: for all $i \in \{1,2,3,4\}$, 
\begin{equation*}
   v_i.id_{\sim_s}^0 = 
   v_i.id_{\sim_o}^0 = 0\,.
\end{equation*}
Here, $v_i.id_{\sim_s}^0$ denotes the value for vertex $v_i$ at iteration~$0$.
Next, in iteration~$1$, vertices $v_1$ and~$v_2$ receive the following messages:
\begin{align*}
    v_1 \, &: \, (\text{author}, v_2.id_{\sim_s}^0, v_2.id_{\sim_o}^0), \, (\text{title}, v_3.id_{\sim_s}^0, v_3.id_{\sim_o}^0) \\
    v_2 \, &: \, (\text{name}, v_4.id_{\sim_s}^0, v_4.id_{\sim_o}^0)\,. 
\end{align*}
Vertices $v_3$ and~$v_4$ do not receive any messages during the algorithm's execution, as they have no outgoing edges.
Accordingly, the vertices' $id$ values are updated:
\begin{align*}
    v_1.id_{\sim_s}^1 &= HASH\big([(\text{self}, v_1.id_{\sim_s}^0), \, (\text{author}, v_2.id_{\sim_s}^0), \, (\text{title}, v_3.id_{\sim_s}^0)] \big) \\
    v_1.id_{\sim_o}^1 &= HASH\big([(\text{author}, v_2.id_{\sim_o}^0), \, (\text{title}, v_3.id_{\sim_o}^0)] \big) \\
    v_2.id_{\sim_s}^1 &= HASH\big([(\text{self}, v_2.id_{\sim_s}^0), \,  (\text{name}, v_4.id_{\sim_s}^0)] \big) \\
    v_2.id_{\sim_o}^1 &= HASH\big([(\text{name}, v_4.id_{\sim_o}^0)] \big) \\
    v_3.id_{\sim_s}^1 &=  v_3.id_{\sim_s}^0 \\
    v_3.id_{\sim_o}^1 &=  v_3.id_{\sim_o}^0 \\
    v_4.id_{\sim_s}^1 &=  v_3.id_{\sim_s}^0 \\
    v_4.id_{\sim_o}^1 &=  v_4.id_{\sim_o}^0\,.
\end{align*}
To distinguish the previous value from the received messages, it is added as a tuple of the form $(\text{self}, \text{old-Id-Value})$, where \enquote{\emph{self}} denotes a unique id.
In the final iteration ($k=2$), $v_1$ and~$v_2$ receive the following messages:
\begin{align*}
    v_1 \, &: \, v_2.id_{\sim_o}^1, \, v_3.id_{\sim_o}^1 \\
    v_2 \, &: \, v_4.id_{\sim_o}^1\,.
\end{align*}
Finally, the $id_{\sim_s}$ values are updated to
\begin{align*}
    v_1.id_{\sim_s}^2 &= HASH\big([(\text{self}, v_1.id_{\sim_s}^1), \, (\text{final}, HASH([ v_2.id_{\sim_o}^1, v_3.id_{\sim_o}^1 ])] \big) \\
    v_2.id_{\sim_s}^2 &= HASH\big([(\text{self}, v_2.id_{\sim_s}^1), \, (\text{final}, HASH([ v_4.id_{\sim_0}^1 ])] \big)\,.
\end{align*}
Here, the final messages are first hashed and then put into a tuple $(\text{final}, \text{hashOfMessages})$, so it can be distinguished from the previous $id_{\sim_s}$ when computing the final value for $id_{\sim_s}$.
The resulting equivalence classes are $\{ v_1 \}$, $\{ v_2 \}$ and $\{ v_3, v_4 \}$.

\subsubsection{Complexity Analysis}

We now analyze the running time of our parallel hash-based algorithm for $k$-bisimulation (Algorithm~\ref{alg:brs-adjusted}).
We are given equivalence relations $\sim_s$, $\sim_o$ and~$\sim_p$ that have already been computed.

Initialization (lines~\ref{algo:initialization}--\ref{algo:intialization-end}) requires computing and hashing the schema of each vertex with respect to $\sim_s$ and~$\sim_o$. This takes time $O(|V|) = O(|E|)$.
This is followed by $k$~phases of computation.
For $k=1$, the single phase is the loop at lines \ref{algo:signal-k-is-1}--\ref{algo:update-id-only-t-o}.
For $k>1$, the phases are the loop at lines \ref{algo:initial-signal}--\ref{algo:update-id-o-intitial}, the loop at lines \ref{algo:k-2_signals}--\ref{algo:update-id-o} ($k-2$ times), and the loop at lines \ref{algo:final-signal}--\ref{algo:update-id-s-final}.
The phases differ slightly in the details but each one has three stages, which operate on every vertex in the graph. Each vertex:
\begin{enumerate}
\item signals its identifier w.r.t.\@ $\sim_s$ and~$\sim_o$ to each of its in-neighbors;
\item receives corresponding identifiers from each of its out-neighbors and merges the received information;
\item hashes the result.
\end{enumerate}
Each stage can be seen to run in time $O(|E|)$ on a graph with edge set~$E$.
In stage~(1), each vertex~$v$ sends one message to each of its in-neighbors.
The messages are of fixed size, independent of the graph, so the total time taken is $O(\sum_v d^-(v)) = O(|E|)$. 
For stage~(2), we consider $\sim_o$-identifiers. Some phases also use $\sim_s$-identifiers, for which the argument is identical.
Each vertex~$v$ receives a $\sim_o$-identifier from each of its out-neighbors.
It collates the received identifiers into an array, which takes time $O(d^+(v))$. It then removes duplicates from the array, which is done in time $O(d^+(v))$ using a hash table to detect duplicates. Thus, the total time for this stage is $O(\sum_v d^+(v)) = O(|E|).$
In stage~(3), each vertex~$v$ must compute hashes of one or two sets of size at most $d^+(v)$.
Computing the hash takes time linear in its size, so the total time taken for the stage is $O(\sum_v d^+(v)) = O(|E|)$.

Thus, each stage takes time $O(|E|)$ and, hence, each phase takes time $O(|E|)$.
There are $k$~phases, so the total running time is $O(k\,|E|)$.
For fixed~$k$, this is linear in~$|E|$.

We note that the same arguments about the number of messages passed applies to the base algorithm (Algorithm~\ref{algo:summarization}).
However, in that case, we do not obtain a $O(|E|)$ running time as the size of the messages is not constant.
In the case of Algorithm~\ref{algo:summarization}, the messages at the final iteration of the $k$-chaining computation contain the structure of all vertices within distance $k-1$ of the vertex that originated the message.
In a graph of maximum degree~$d$, each of these messages is of size $O(d^{k-1})$.
Thus, a vertex of degree~$d$ requires time $O(d^k)$ just to read its incoming messages.

\subsection{Summary}
Our parallel graph summarization algorithm can compute structural graph summaries for static graphs based graph summary models defined in a formal language. 
We extended the parallel base algorithm to an incremental algorithm for updating structural graph summaries when the data graphs evolves over time.
This is achieved by our proposed \vertexHashIndex{}. 
Incremental graph summaries can be updated in time $\mathcal{O}(\Delta \cdot d^k)$, where $\Delta$ is the number of additions, deletions, and modifications to the input graph, $d$ is its maximum degree, and $k$ is the maximum distance in the subgraphs considered. 
Finally, we have introduced a second extension of the base algorithm that can compute graph summaries based on a $k$-bisimulation for large values of $k$ such as $k=10$.
This extension is based on the idea of vertex identifiers from Schätzle \etal~\cite{DBLP:conf/sigmod/SchatzleNLP13}.

\section{Experimental Apparatus}
\label{sec:incremental-apparatus}
We empirically evaluate the time and space requirements of the graph summarization algorithms presented in \cref{sec:graph-summarization} for different graph summary models.
We run experiments on graph databases that evolve over time.
We run three sets of experiments.
First, we analyze the costs of computing a new graph summary from scratch (batch computation) compared to incrementally updating an existing graph summary.
Second, we evaluate the impact of the parallelization on the overall performance. 
Third, we evaluate our algorithm's memory consumption.
Below, we describe the datasets and summary models that we used, and our test system.
The measures used are described in each experiment.

\subsection{Datasets}
\label{sec:datasets}
We generated two synthetic benchmark datasets (LUBM100 and BSBM) and used two variants of the real-world DyLDO dataset weekly crawled from the web.
The two benchmark datasets are suitable for the cardinality computation and semantic entity retrieval tasks.
The web datasets are also used for the data search task.

\paragraph{LUBM100:} 
The Lehigh University Benchmark (LUBM) generates benchmark datasets containing people working at universities~\cite{DBLP:journals/ws/GuoPH05}.
We use the Data Generator v1.7 to generate $10$ versions of a graph containing $100$ universities~\cite{blume_till_2020_5714435_lubm}.
Thus, all versions are of similar size, but we emulate modifications by generating different vertex identifiers, \ie each version is considered as timestamped graph.
Each graph contains about $2.1\,$M vertices and $11\,$M edges.
Over all versions, the mean degree is $5.1$ ($\pm 0.1$).

\paragraph{BSBM:}
The Berlin SPARQL Benchmark (BSBM) is a suite of benchmarks built around an e-commerce use case~\cite{DBLP:journals/ijswis/BizerS09}.
We generate $21$ versions of the dataset with different scale factors~\cite{blume_till_2020_5714035_bsbm}.
The first dataset, with a scale factor of $100$, contains about $7,000$ vertices and $75,000$ edges.
We generate versions with scale factors between $2,000$ and $40,000$ in steps of $2,000$.
The largest dataset contains about $1.3\,$M vertices and $13\,$M edges.
For our experiments, we first use the different versions ordered from smallest to largest (version $0$ to~$20$) to simulate a growing graph database.
Subsequently, we reverse the order to emulate a shrinking graph database.
Over all versions, the mean degree is $10.9$ ($\pm 0.2$). 

\paragraph{DyLDO-core and DyLDO-ext:}
The Dynamic Linked Data Observatory (DyLDO) provides regular crawls of the Web of Data~\cite{DBLP:conf/esws/KaferAUOH13}.
The crawls start from about $95,000$ representative seed URIs (source labels of graphs).
There are two variants of this dataset:
The dataset containing only the graphs identified by the seed URIs is referred to as \textit{DyLDO-core}.
It contains the $95,000$ different graphs obtained from the seed URIs, stored in the multi-set $\mathcal{G}$ of the GDB.
The extended crawl (including the core) is referred to as \textit{DyLDO-ext}.
Starting from the seed URIs, a breadth-first search is conducted with a crawling depth of $2$, \ie graphs are added that are referenced from already crawled graphs.

For DyLDO-core, we use all 50 crawls conducted between January 20, 2019 and January 12, 2020.
DyLDO-core contains $2.1$--$3.5\,$M vertices and $7$--$13\,$M edges.
Over all weekly crawls, the mean degree is $4.8$ ($\pm 0.5$).
Note that week 21 (June 16, 2019) is an anomaly as DyLDO-core contains only eight edges due to a crawling failure: weeks 21 and~22 are excluded from the results.
For DyLDO-ext, we use the first five crawls, which contain $7$--$10\,$M vertices and $84$--$106\,$M edges.
The mean degree is $10.8$ ($\pm 0.7$).

\subsection{Selected Graph Summary Models}
There is a large variety of structural graph summary models~\cite{DBLP:journals/csur/LiuSDK18,DBLP:journals/corr/abs-2004-14794,DBLP:journals/pvldb/KhanBB17,DBLP:journals/vldb/CebiricGKKMTZ19,DBLP:journals/tcs/BlumeRS21}.
For our experiments 1 to 3, we chose three representative summary models based on the literature, namely Class Collection, Attribute Collection, and SchemEX, that use fundamental graph summary features of combining types and properties. 
Class Collection~\cite{DBLP:conf/dexaw/CampinasPCDT12} uses only the vertices' type sets to compute the schema, \ie it summarizes vertices that share the same label.
Class Collection is defined using the type cluster $\OCtype{}$.
Attribute Collection~\cite{DBLP:conf/dexaw/CampinasPCDT12} uses only property sets to compute the schema structure of vertices, \ie it summarizes vertices that have the same labels of outgoing edges.
It is defined as property cluster $\PCrel$.
SchemEX combines Class Collection and Attribute Collection: two vertices must have the same label, have edges with the same label that lead to neighbors with the same label.
SchemEX is defined as CSE $(\OCtype{},\idrel,\OCtype{})$ (see \cref{sec:data-structure}).
Finally, for our $k=1,\ldots,10$ bisimulation experiments, we employ the graph summary model by Schätzle \etal~\cite{DBLP:conf/sigmod/SchatzleNLP13}.
This is expressed by the $CSE_{\text{Schätzle}} = (T, id, T)_k$ (see \cref{sec:iterative-bisimulation}).

\subsection{Implementation Details}
\label{sec:implemantation}

We implement our algorithms in 
Scala 2.12.14 and Java 11.0.11 using Apache Spark GraphX 3.1.1~\cite{DBLP:conf/osdi/GonzalezXDCFS14,spark:graphx}. 
We use a single Spark context on a dedicated server\extended{~using up to 20~cores and 200~GB heap space}. 
The datasets are read in parallel as RDF graphs in gzipped $n$-triple files and mapped to GraphX's internal labeled property graph data structure (see \cref{sec:transformation}). 

Our test system for experiments 1 to 3 is a server with Ubuntu 20 OS equipped with $2 \times$ Intel Xeon CPU E$5$-$2690$ v$2$ at $3.00\,$GHz with $10$~cores and $20$~threads each.
Each CPU has access to $12 \times$ $16\,$GB DDR3 memory at $1600\,$MHz, resulting in a total of $384\,$GB main memory.
We run OrientDB in a Docker container with at most $20\,$GB heap space and $100\,$GB memory mapped files. 
All execution times are measured using the built-in logging feature of Apache Spark. 
For experiment 4, we use a machine with Ubuntu 20 OS with 2x AMD EPYC ROME 7F32 processors with each 8 cores and 16 threads and 2 TB RAM (32x 64GB DDR4 with 3200MHz).
For execution, all cores and 1.94 TB Heap Space were given to the Apache Spark Context.
Instrumentation was done using the Apache Spark Monitoring API.

\section{Experiment 1: Comparison of Batch-based vs. Incremental Graph Summarization}
\label{sec:experiment-1}
In this experiment, we directly compare the performance of our batch and incremental algorithms.
Both algorithms are provided $20$ cores. 

\subsection{Procedure}
\label{sec:metrics}
We compute the structural graph summary for each version of the GDB.
The batch algorithm recomputes the graph summary from scratch.
The incremental algorithm detects the changes in the GDB and only performs updates on the graph summary.

We employ different metrics to evaluate the size of the graph summaries, the update complexity for the graph summaries, and the runtime performance of the graph summarization algorithms.
Regarding size, we first count the number of vertices $|V|$ and edges $|E|$ in the GDB and the number of vertices $|\Vvs|$ and edges $|\Evs|$ in the graph summaries.
We write $\frac{|V|}{|V / \sim|}$ for the \emph{summarization ratio}, \ie the fraction of the number of all vertices $V$ in the GDB and the number of different equivalence classes under the equivalence relation $\sim$ in the graph summary $SG$.
The summarization ratio describes how many vertices $v$ are on average summarized by one vertex summary $vs$.
Higher summarization ratios indicate a low variety of schema structures.
A ratio of $1$ indicates that no two vertices share the same schema structure.

Regarding the update complexity, we count the number of vertices with a changed schema (\addSchema{}, \deleteSchema{}, and \modifySchema{}), \ie the relevant \emph{changes} in the GDB.
Since changed schemata require between $0$ and $\mathcal{O}(d^k)$ updates to the graph summary, we also count the number of updates on the graph summary.
Finally, the runtime performance of the algorithms is measured by the time needed to compute a summary from scratch or, in the incremental case, to update the summary $SG$ for each version of the GDB.

\begin{figure*}[t!]
{
\begin{subfigure}[t]{0.31\linewidth}
    \centering
    \includegraphics[trim={0cm 0cm 0cm 0cm},clip,width=\textwidth]{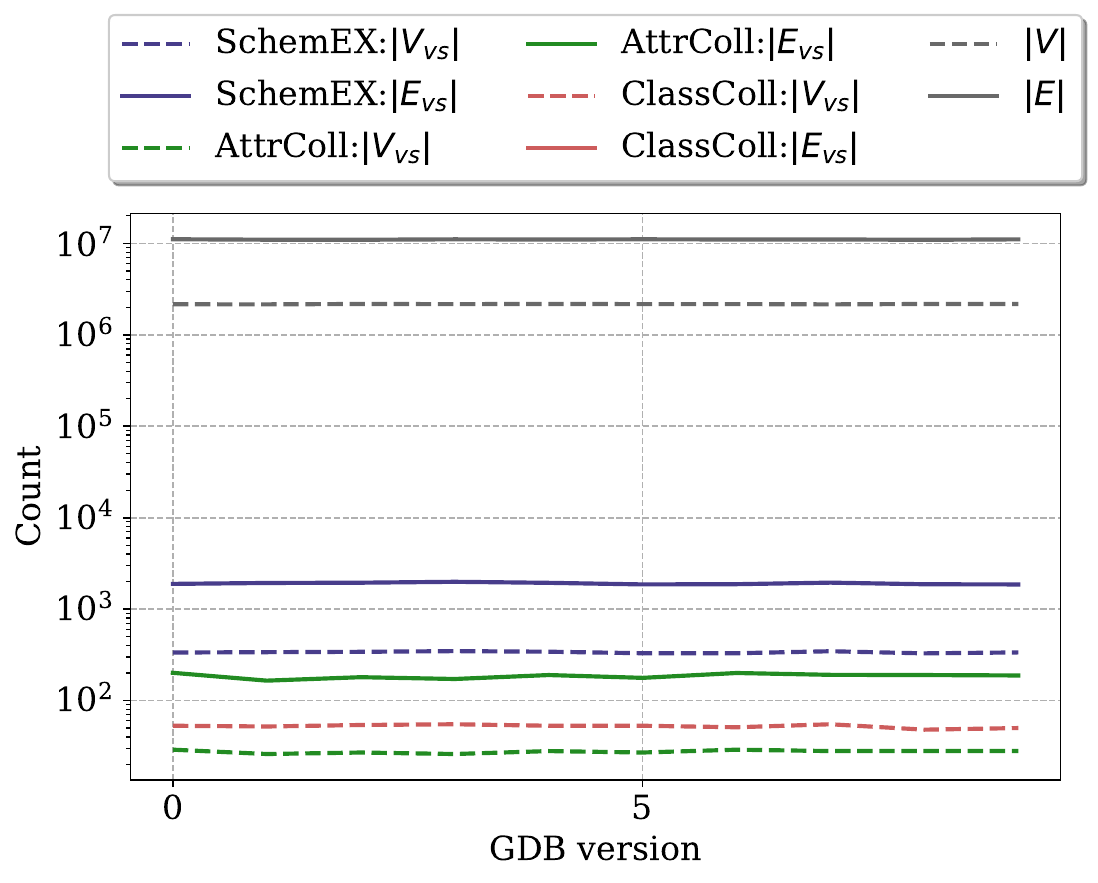}
    \caption{\label{fig:LUBM-size} LUBM100 dataset size.}
\end{subfigure}
\quad
\begin{subfigure}[t]{0.31\linewidth}
    \centering
       \includegraphics[trim={0cm 0cm 0cm 0cm},clip,width=\textwidth]{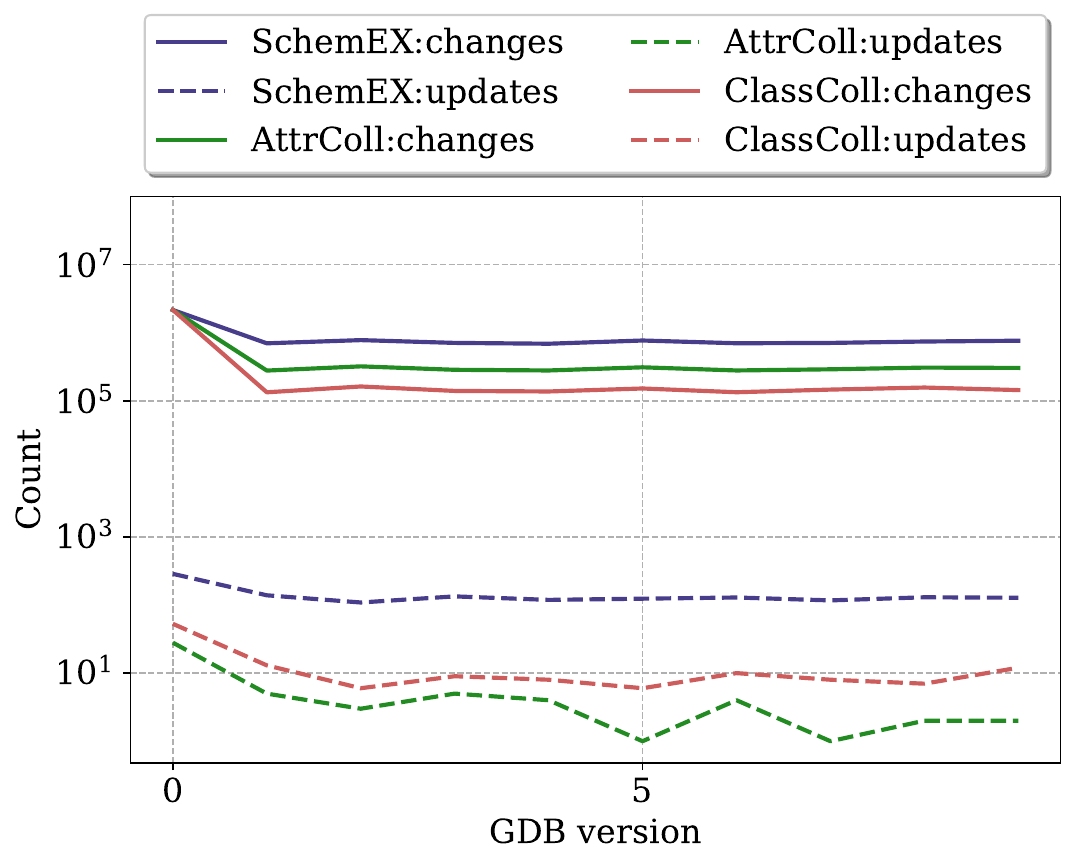}
	\caption{\label{fig:LUBM-changes} LUBM100 dataset update.}
\end{subfigure}
\quad
\begin{subfigure}[t]{0.31\linewidth}
    \centering
    \includegraphics[trim={0cm 0cm 0cm 0cm},clip,width=\textwidth]{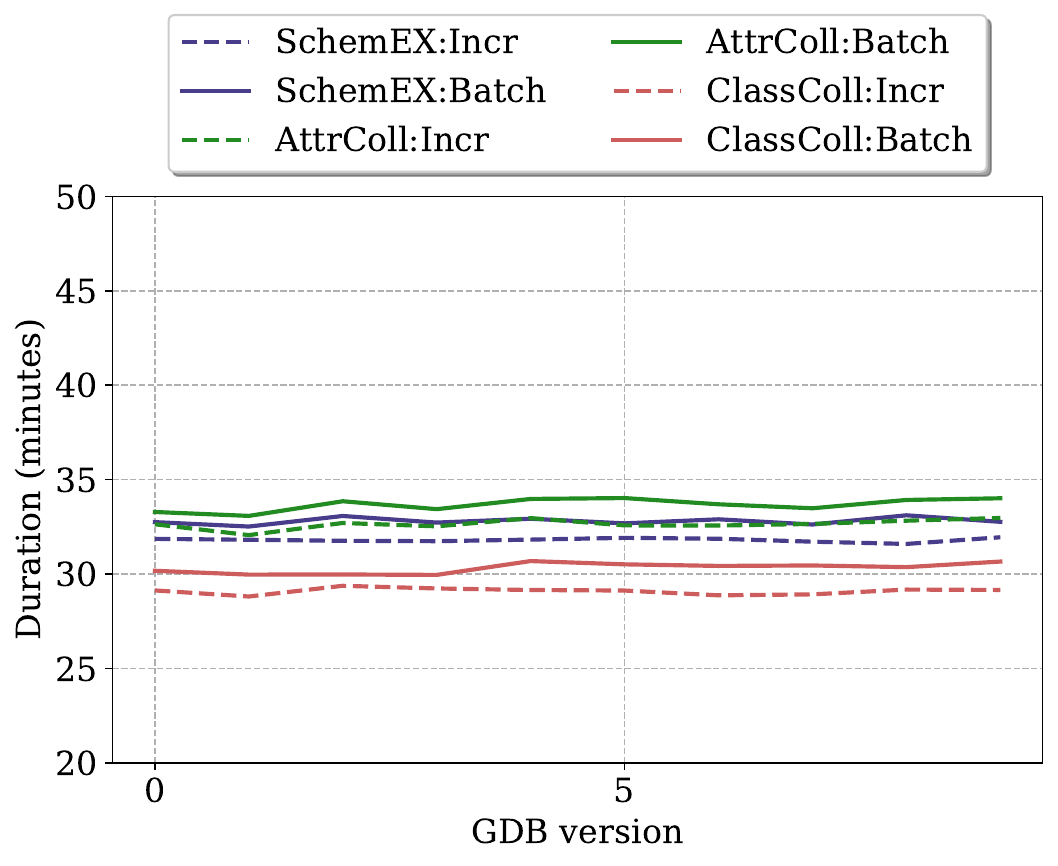}
    \caption{\label{fig:LUBM-performance} LUBM100 dataset performance.}
\end{subfigure}
\par\bigskip 
\begin{subfigure}[t]{0.31\linewidth}
    \centering
    \includegraphics[trim={0cm 0cm 0cm 0cm},clip,width=\textwidth]{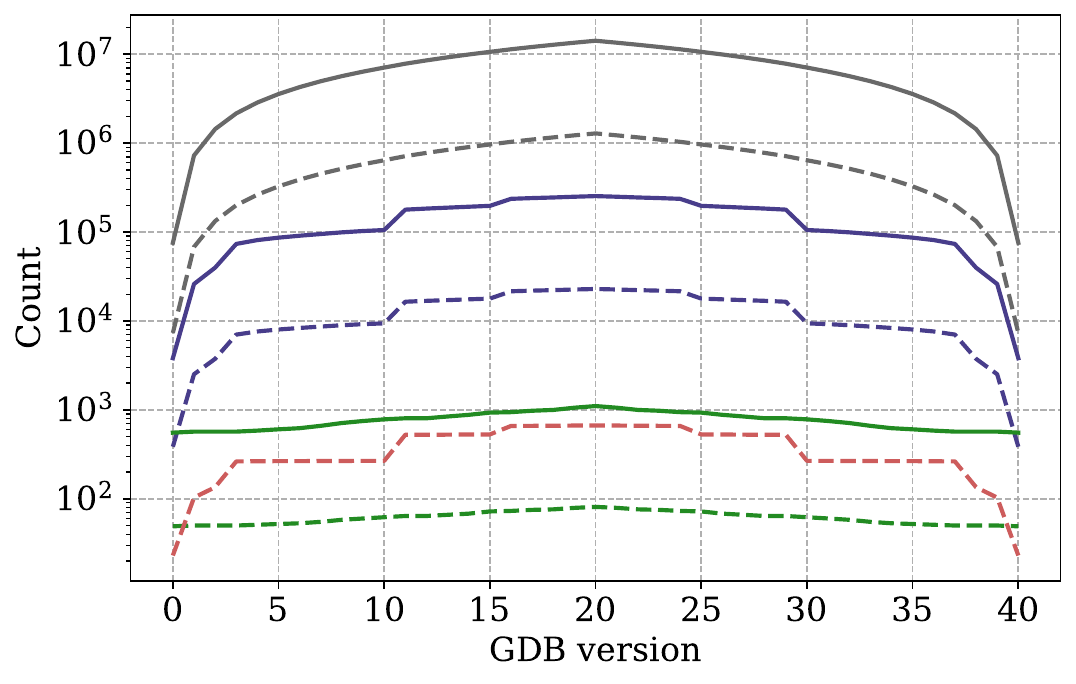}
    \caption{\label{fig:BSBM-size} BSBM dataset size.}
\end{subfigure}
\quad
\begin{subfigure}[t]{0.31\linewidth}
    \centering
    \includegraphics[trim={0cm 0cm 0cm 0cm},clip,width=\textwidth]{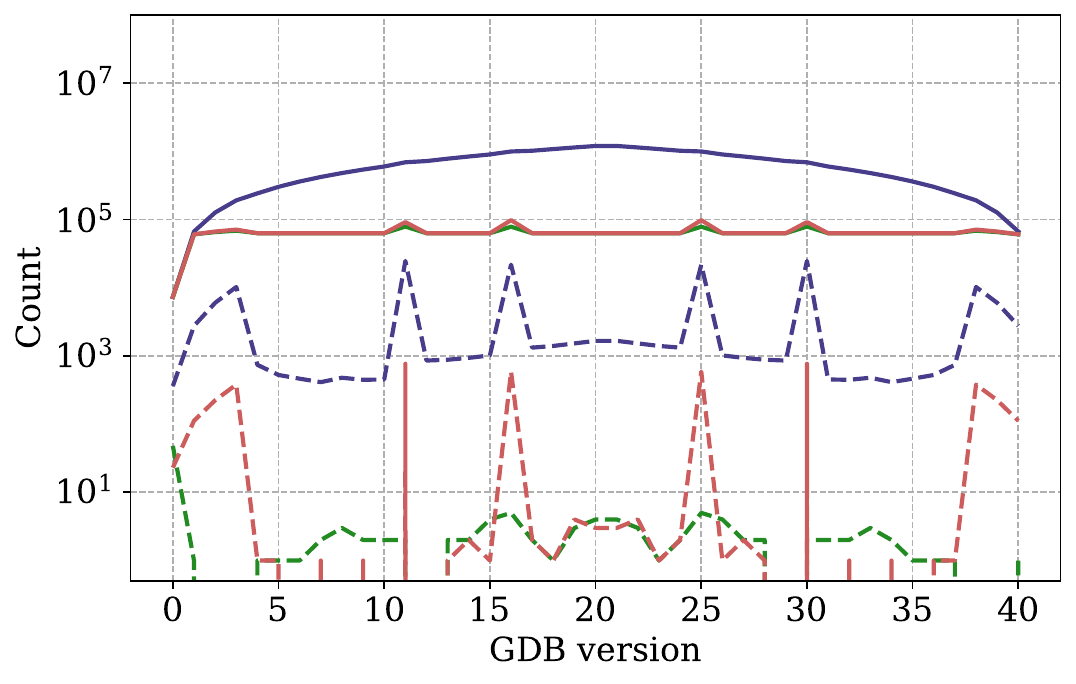}
	\caption{\label{fig:BSBM-changes} BSBM dataset update.}
\end{subfigure}
\quad
\begin{subfigure}[t]{0.31\linewidth}
    \centering
    \includegraphics[trim={0cm 0cm 0cm 0cm},clip,width=\textwidth]{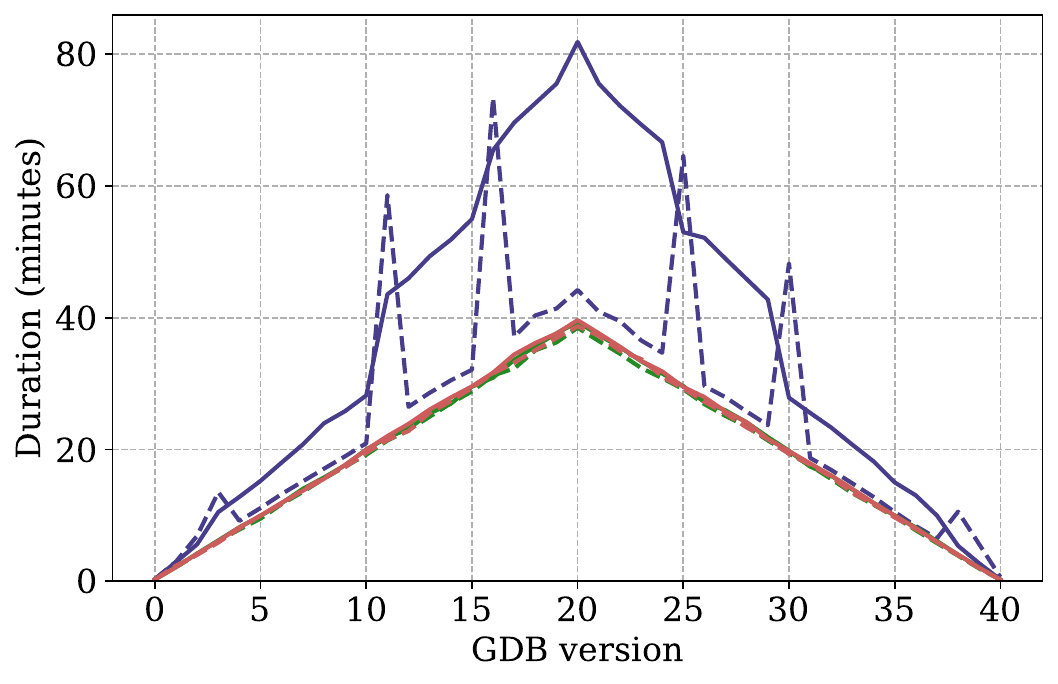}
    \caption{\label{fig:BSBM-performance} BSBM dataset performance.}
\end{subfigure}
\par\bigskip 
\begin{subfigure}[t]{0.31\linewidth}
    \centering
    \includegraphics[trim={0cm 0cm 0cm 0cm},clip,width=\textwidth]{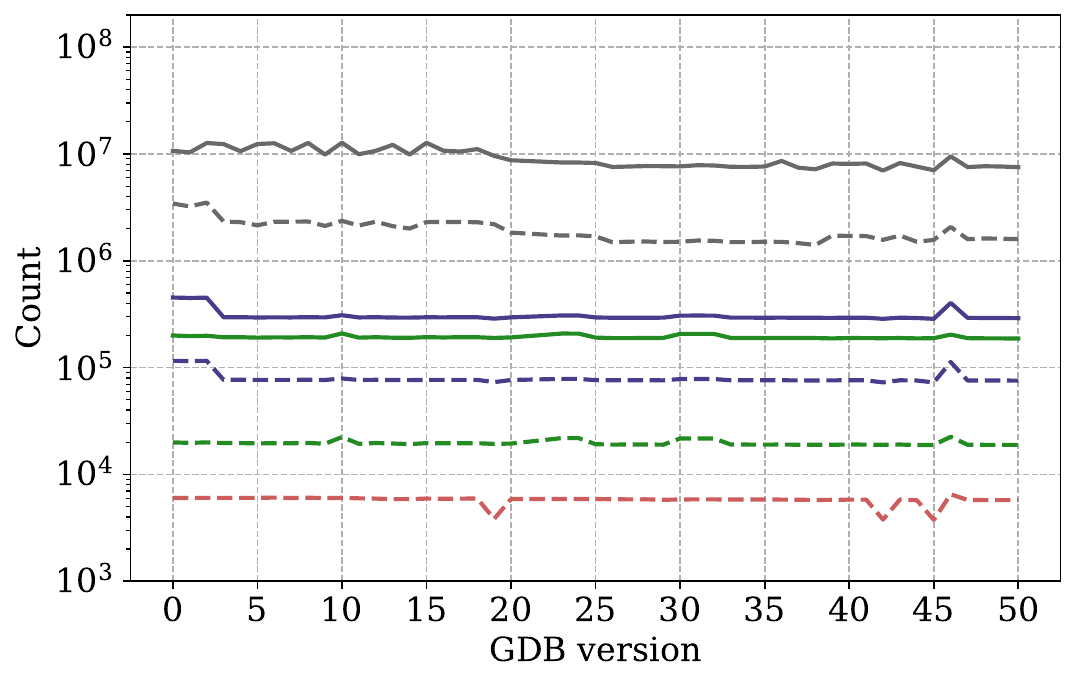}
    \caption{\label{fig:DyLDO-core-size} DyLDO-core size.}
\end{subfigure}
\quad
\begin{subfigure}[t]{0.31\linewidth}
    \centering
    \includegraphics[trim={0cm 0cm 0cm 0cm},clip,width=\textwidth]{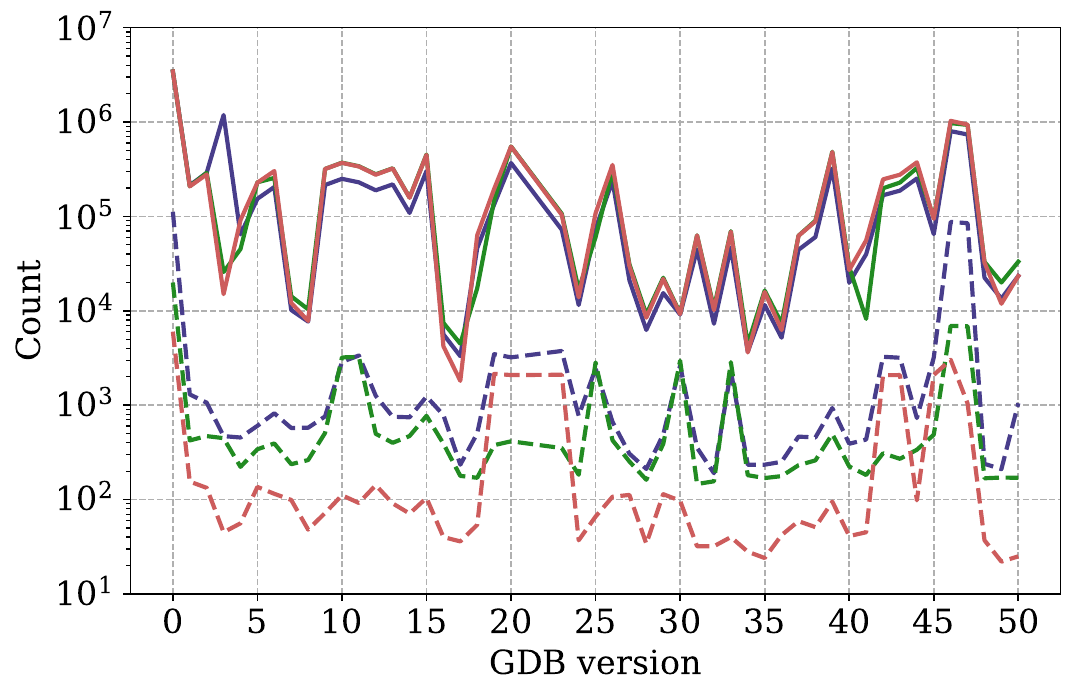}
	\caption{\label{fig:DyLDO-core-changes} DyLDO-core update.}
\end{subfigure}
\quad
\begin{subfigure}[t]{0.31\linewidth}
    \centering
    \includegraphics[trim={0cm 0cm 0cm 0cm},clip,width=\textwidth]{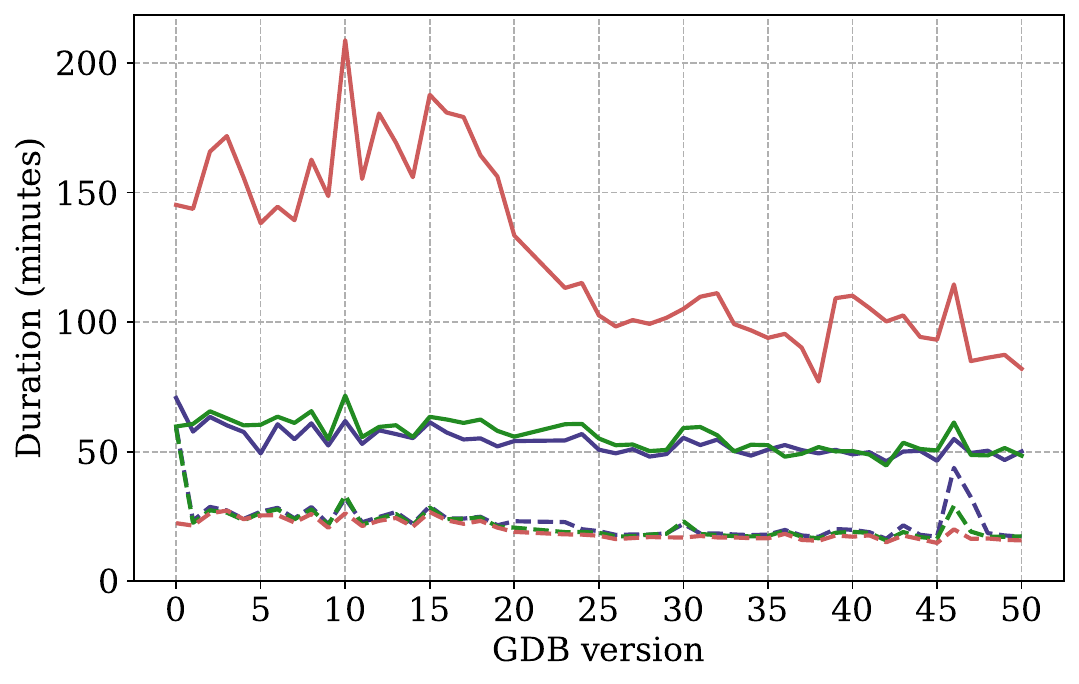}
     \caption{\label{fig:DyLDO-core-performance} DyLDO-core performance.}
\end{subfigure}
\par\bigskip 
\begin{subfigure}[t]{0.31\linewidth}
    \centering
    \includegraphics[trim={0cm 0cm 0cm 0cm},clip,width=\textwidth]{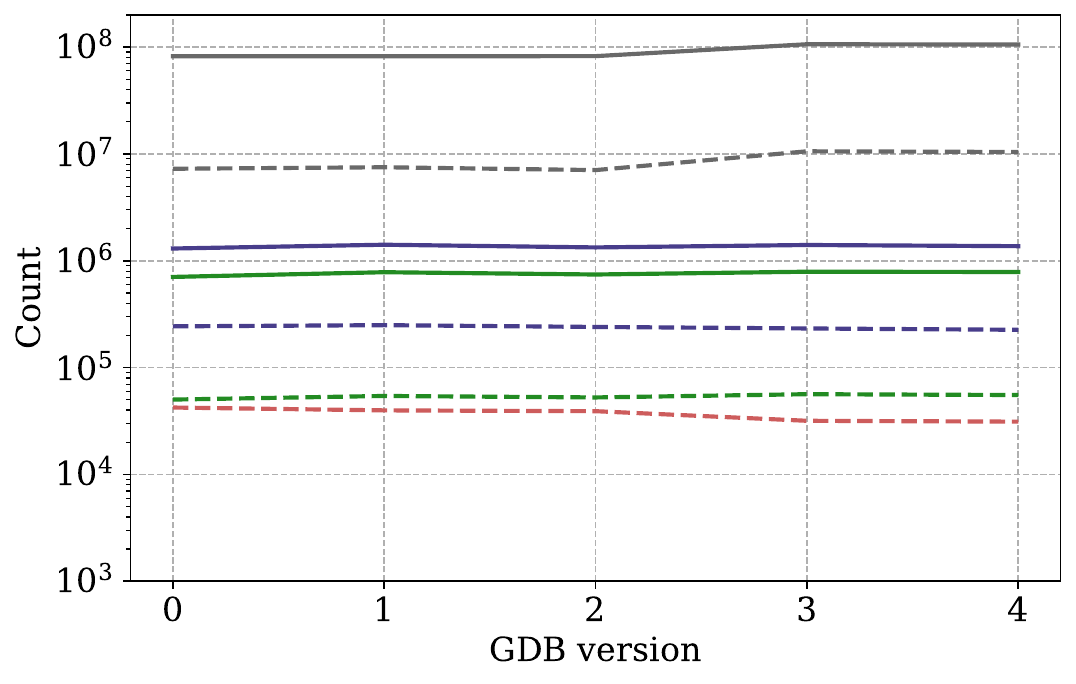}
    \caption{\label{fig:DyLDO-ext-size} DyLDO-ext dataset size.}
\end{subfigure}
\quad
\begin{subfigure}[t]{0.31\linewidth}
    \centering
    \includegraphics[trim={0cm 0cm 0cm 0cm},clip,width=\textwidth]{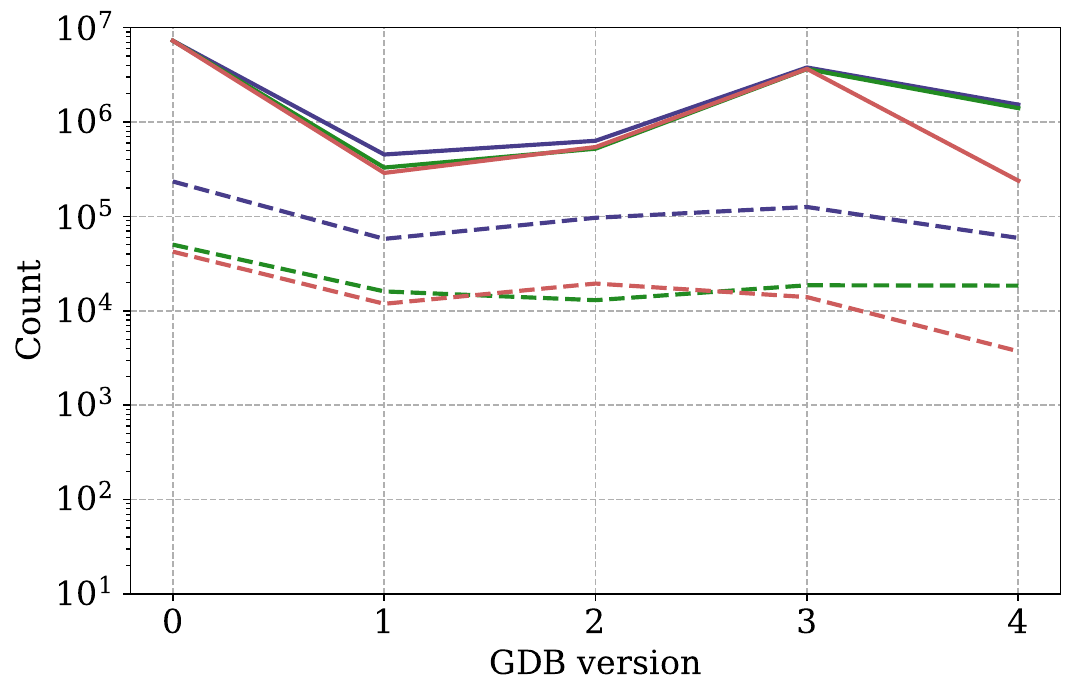}
	\caption{\label{fig:DyLDO-ext-changes} DyLDO-ext update.}
\end{subfigure}
\quad
\begin{subfigure}[t]{0.31\linewidth}
    \centering
    \includegraphics[trim={0cm 0cm 0cm 0cm},clip,width=\textwidth]{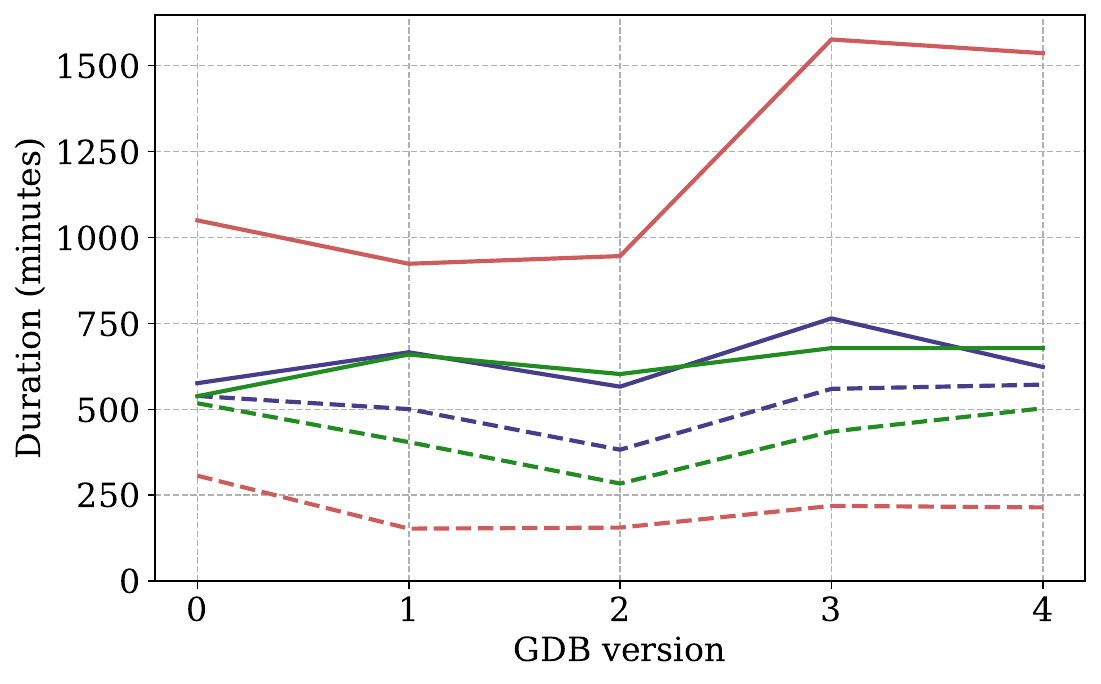}
    \caption{\label{fig:DyLDO-ext-performance} DyLDO-ext performance.}
\end{subfigure}

\caption{\label{fig:main-results-benchmark}Results of the experimental evaluation of three summary models: SchemEX (blue), Attribute Collection (green), and Class Collection (red). Each row is a different dataset and each column a different metric.}
}
\end{figure*}

\subsection{Results}
\label{sec:results}
The experimental results are visualized in \cref{fig:main-results-benchmark}. 
Each plot shows on the x-axis the database versions over time and on the y-axis the metrics.
From left to right, the columns of \cref{fig:main-results-benchmark} show the metrics size, update complexity, and performance.
From top to bottom, the rows of \cref{fig:main-results-benchmark} show the results for the four datasets, \ie LUBM100, BSBM, DyLDO-core, and DyLDO-ext. 
The subfigures in the left column of \cref{fig:main-results-benchmark} show the size of the GDB and the size of the graph summary $SG$ for each version of the database.
The summarization ratio is the number of vertices in the GDB divided by the number of vertices in the graph summary.
Intuitively, the distance between the colored dotted lines and the grey dotted line indicates the summarization ratio, and an intersection of both lines indicates a summarization ratio of~$1.0$.
Note that, for Attribute Collections (AttrColl) and Class Collections (ClassColl), the number of summary vertices $|\Vvs|$ is identical to the number of vertex summaries, as all summary vertices are primary vertices.
For SchemEX, secondary vertices are needed to model the schema structure.
Over all experiments and datasets, these secondary vertices make up only about $4\%$ of all vertices.
Thus, we do not display primary vertices separately in \cref{fig:main-results-benchmark}.
The graphs in the center column show the vertex change percentage and the graph summary update percentage.
The graphs in the right column show the average execution times of three consecutive repetitions of the graph summary computation.
Unless specified otherwise, we refer in the text to mean and standard deviation values computed over all versions of a GDB.

Comparing the size of the graph summaries ($|\Vvs|$, $|\Evs|$) and the graph database ($|V|$, $|E|$) (\cref{fig:main-results-benchmark} left column), the graph summaries are orders of magnitude smaller (consistently over all experiments).
Also, the Class Collections (ClassColl) and Attribute Collections (AttrColl) have higher summarization ratios than SchemEX.
This means, fewer vertex summaries are needed to partition the GDB based on vertex labels or edge labels compared to combining vertex and edge labels.
Over all datasets, Attribute Collections have the highest summarization ratios and have a factor of around $10^4$ fewer vertices, followed by Class Collections with a factor of around $10^3$ fewer vertices.
SchemEX has the lowest summarization ratio, \ie the highest number of vertex summaries are needed to partition the vertices.
Still, SchemEX has around a factor of $10^2$ fewer vertices than the GDB.
For Class Collections, no edges are needed to represent the vertex summaries.
Thus, for Class Collections, in each version the number of edges in the graph summary $|\Evs|$ is zero, which further reduces the overall size of the graph summary.

Comparing the \emph{changes} in GDB and the \emph{updates} on SG (\cref{fig:main-results-benchmark} center column), over all datasets, the number of updates on the graph summary is orders of magnitude smaller than the number of changes in the graph database.
On average, there are $12,967$ ($\pm 18,583$) more changes than updates for the Class Collection, $19,721$ ($\pm 30,157$) more for the Attribute Collection, and $901$ ($\pm 1,737$) more for SchemEX.
Furthermore, the incremental algorithm computes Class Collections the fastest in, on average, $27$~($\pm 33$) minutes and SchemEX the slowest in, on average, $42$ ($\pm 95$) minutes (\cref{fig:main-results-benchmark} right column).
Compared to the batch computation, this is a speed up of $1.8$ ($\pm 0.7$) for SchemEX, $1.8$ ($\pm 0.9$) for Attribute Collection, and $3.7$ ($\pm 2.7$) for Class Collection. 

\subsection{Discussion}
\label{sec:discussion}

The key insight from our experiments is that, over all summary models and datasets, the incremental algorithm is almost always faster than the batch counterpart.
In total, we have run $312$ experiments, \ie we have $n= 312$ measure points of our three graph summary models over all versions of our four datasets ($10 \times$ LUBM100, $40 \times $ BSBM, $49 \times$ DyLDO-core, $5 \times$ DyLDO-ext).
Detailed evaluation of the performance metrics shows a strong correlation\footnote{Since our experimental data does not follow a normal distribution but a skewed distribution (verified using D'Agostino's K-squared test), we calculate the Spearman rank-order correlation coefficient.} between the schema computation (Phase~1) and the number of edges in the GDB, $\rho(311)=0.867$, $p < .0001$,
substantiating our theoretical complexity analysis.
Phase~1 is identical for both algorithms (see \cref{sec:incremental-algorithm}).

Regarding the overall runtime (Phase~1 and~2), we observe that the incremental algorithm outperforms the batch variant almost always on the BSBM dataset, even though about $90\%$ of the GDB changes in each version.
Even for the DyLDO-core dataset, when about $46\%$ of the GDB changes from version 46 to 47, the incremental algorithm is still $1.5$ times faster.
This can be explained by the fact that changes in the GDB do not necessarily require an update to the graph summary.

If many updates are required, they seem to be more expensive on the BSBM dataset than on the DyLDO-core dataset since the batch algorithm outperforms the incremental algorithms during periods in which a large proportion of the data is updated only in the BSBM dataset.
The average vertex degree is more than twice as high in the BSBM dataset.
Since the update complexity depends on the degree of the vertex, this could be a reasonable explanation for the increased runtime.

From our results, we can also state that graph summarization on benchmark datasets is easier than on real-world datasets, \ie there are fewer vertex summaries needed to summarize a similar-sized graph database.
This observation highlights the importance of using real-world datasets when evaluating graph summarization algorithms, as the observed variety of schema structures in the DyLDO datasets is not covered by existing synthetic datasets.
All three graph summary models produce graph summaries of the DyLDO-core dataset that are $10$--$100$ times larger than on the similarly sized BSBM dataset at version $20$, \ie the $19$th temporal update. 
Experiments on the BSBM dataset suggest that cascading updates due to neighbor changes have a huge impact on the performance of the incremental graph summarization algorithm.
More than $99\%$ of all vertex modifications are due to a neighbor change.
Of all vertex changes, this makes up $83.88\%$ ($\pm 15.64\%$).

Batch computation of Class Collections on the two DyLDO datasets takes at least twice as long as the Attribute Collection and SchemEX.
This is surprising given that SchemEX is a much more complex graph summary model than the Class Collection model.
We noticed this in several runs and on a different test system.
Further investigation revealed that this is caused by a few \enquote{hot} vertex summaries that summarize most of the vertices when using the Class Collection.
In particular, the vertex summary that summarizes all vertices in the GDB that have \textit{no type information} (\ie that have an empty label set $\ell_V$) stands out.
During the find and merge phase, the payload information of different vertices that are summarized by the same vertex summary $vs$ is merged (\addInstance{}).
Although individual merges of payload information in our experiment are done in constant time (merging sets of source graph labels), merging almost all the summary's payload information into a few centrally stored payload elements cannot be done in parallel.
This means that payload merging for the Class Collection on the DyLDO-core dataset requires an exceptional number of synchronizations. 
For example, the vertex summary that summarizes vertices with no type information in the Class Collection needs on average $12,828$ ($\pm 4,148$) synchronizations for each version update.
In contrast, for Attribute Collection and SchemEX, on average $122$ ($\pm 81$) synchronizations are needed for all vertex summaries.

For the incremental algorithm, this has no effect since the payload information is updated in parallel in the \vertexHashIndex{}.
For the \vertexHashIndex{}, no additional synchronization for updating L3 is needed.
One might consider this as a design flaw in the batch algorithm, which determines whether the incremental algorithm outperforms the batch algorithm or vice versa.
However, first, there is no alternative to storing the payload information in the graph summary since the batch algorithm uses no additional data structure.
Second, the source graph payload is only used in real-world datasets.
For the benchmark datasets, no additional synchronization due to payload merges is done during the batch computation.
Still, the incremental algorithm, with very few exceptions, outperforms the batch counterpart.

To compare the time performance of incremental versus batch computation, we evaluated with over $100$ GDB versions of four datasets (two synthetic and two real-world).
Each dataset has different characteristics in terms of data change rates, types of changes, schema heterogeneity, size, and degree.
Furthermore, we computed three representative summary models on these datasets.
The selected summary models are widely used across different tasks and datasets~\cite{DBLP:journals/vldb/CebiricGKKMTZ19}.
Thus, the results capture the performance of our graph summarization algorithm for a wide, representative range of experimental settings.
Beyond the three representative graph summary models used in the evaluation, our incremental algorithm is suitable for other summary models reported in the literature, which is possible due to our \model{} language~\cite{DBLP:journals/tcs/BlumeRS21}.

\section{Experiment 2: Scalability of Incremental Summarization through Parallelization}
\label{sec:experiment-2}
In this experiment, we measure the scalability of our incremental algorithm.
Particularly, we are interested how much performance our incremental algorithms still gains through parallelization when using more CPU cores.

\begin{figure*}[t]
\begin{subfigure}[t]{0.23\linewidth}
    \centering
    \includegraphics[trim={0cm 0cm 0cm 1cm},clip,width=\textwidth]{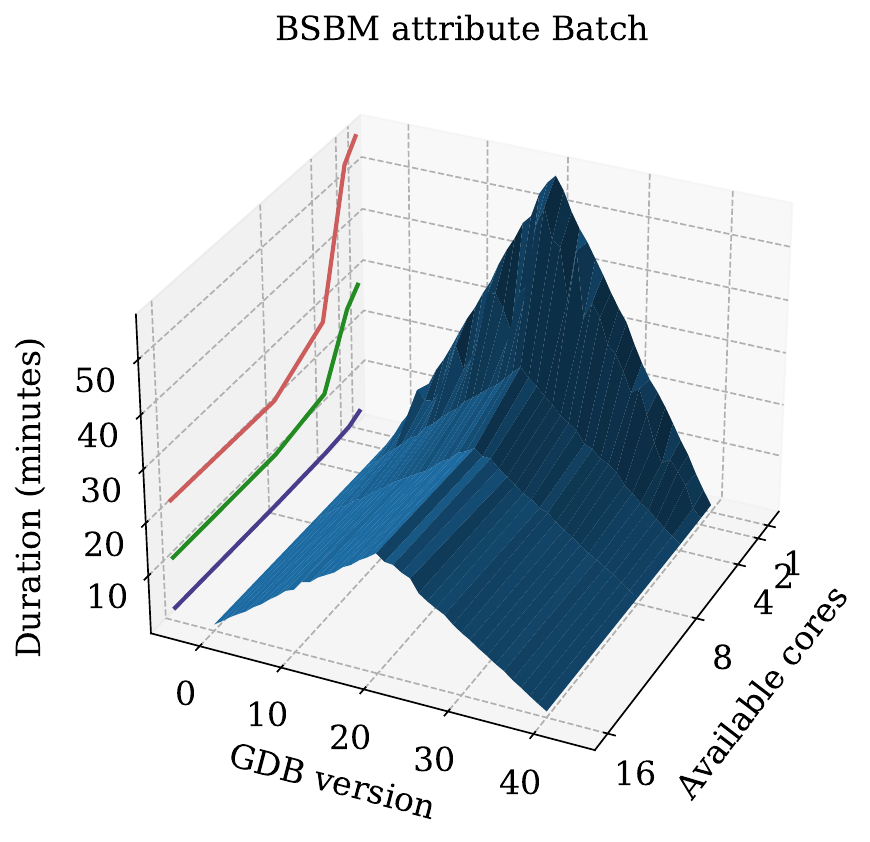}
     \caption{\label{fig:BSBM-attribute-batch}Batch computation of Attribute Collection on BSBM.}
\end{subfigure}
\quad
\begin{subfigure}[t]{0.23\linewidth}
    \centering
    \includegraphics[trim={0cm 0cm 0cm 1cm},clip,width=\textwidth]{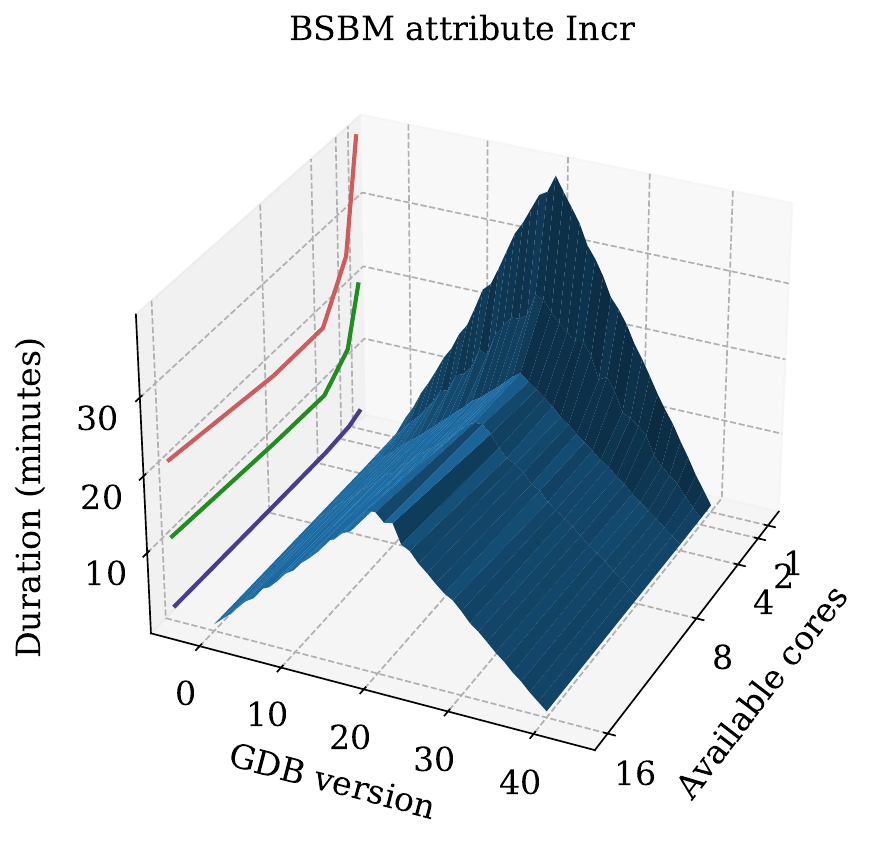}
	\caption{\label{fig:BSBM-attribute-incr}Incremental computation of Attribute Collection on BSBM.}
\end{subfigure}
\quad
\begin{subfigure}[t]{0.23\linewidth}
    \centering
    \includegraphics[trim={0cm 0cm 0cm 1cm},clip,width=\textwidth]{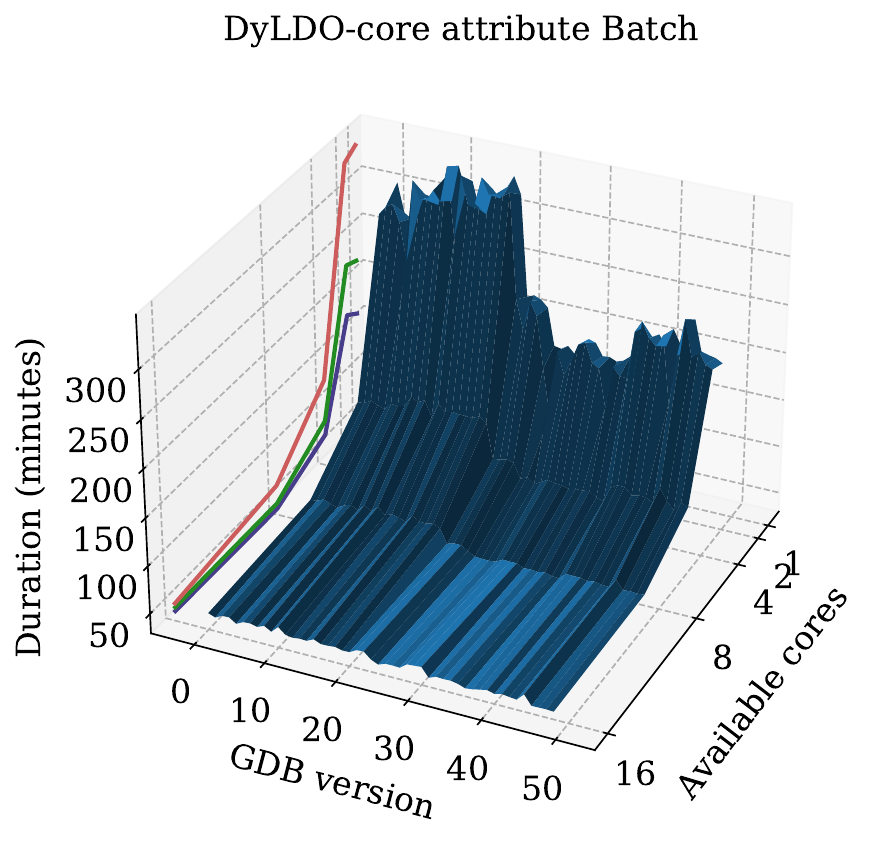}
     \caption{\label{fig:DyLDO-core-attribute-batch}Batch computation of Attribute Collection on DyLDO-core.}
\end{subfigure}
\quad
\begin{subfigure}[t]{0.23\linewidth}
    \centering
    \includegraphics[trim={0cm 0cm 0cm 1cm},clip,width=\textwidth]{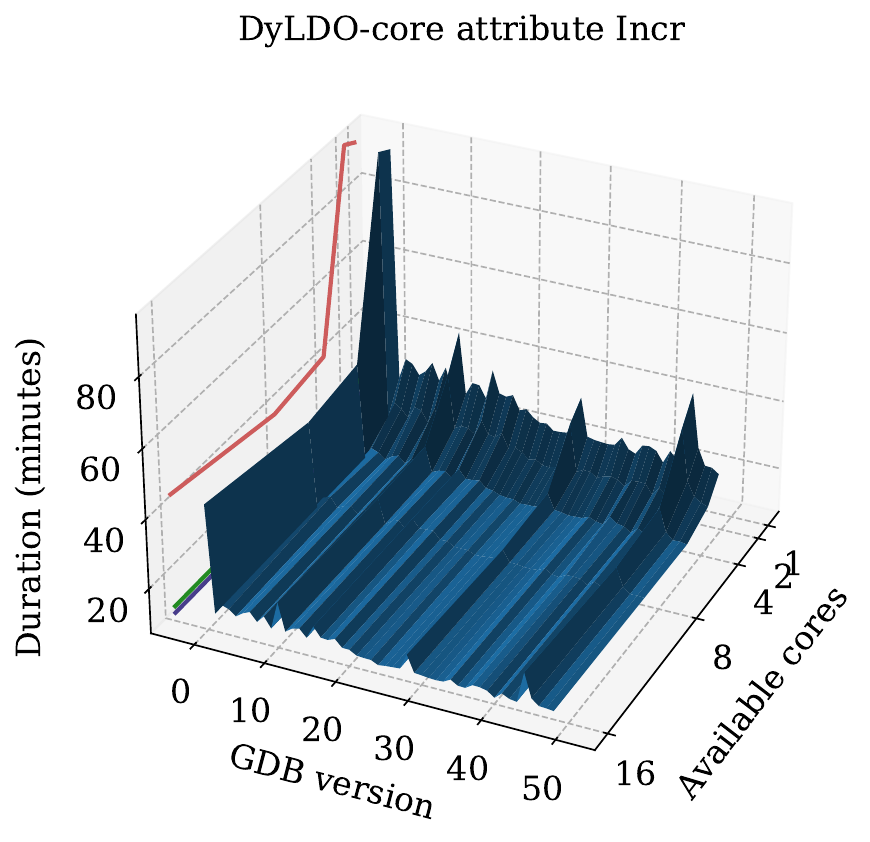}
	\caption{\label{fig:DyLDO-core-attribute-incr}Incremental computation of Attribute Collection on DyLDO-core.}
\end{subfigure}
\quad
\begin{subfigure}[t]{0.23\linewidth}
    \centering
    \includegraphics[trim={0cm 0cm 0cm 1cm},clip,width=\textwidth]{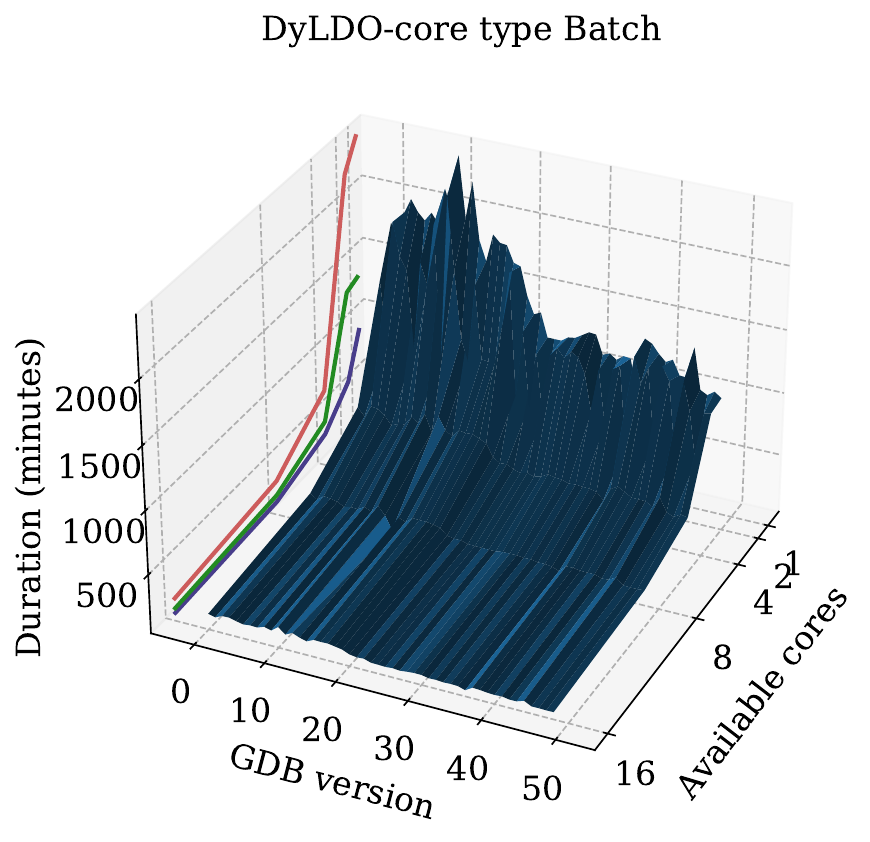}
     \caption{\label{fig:DyLDO-core-type-batch}Batch computation of Class Collection on DyLDO-core.}
\end{subfigure}
\quad
\begin{subfigure}[t]{0.23\linewidth}
    \centering
    \includegraphics[trim={0cm 0cm 0cm 1cm},clip,width=\textwidth]{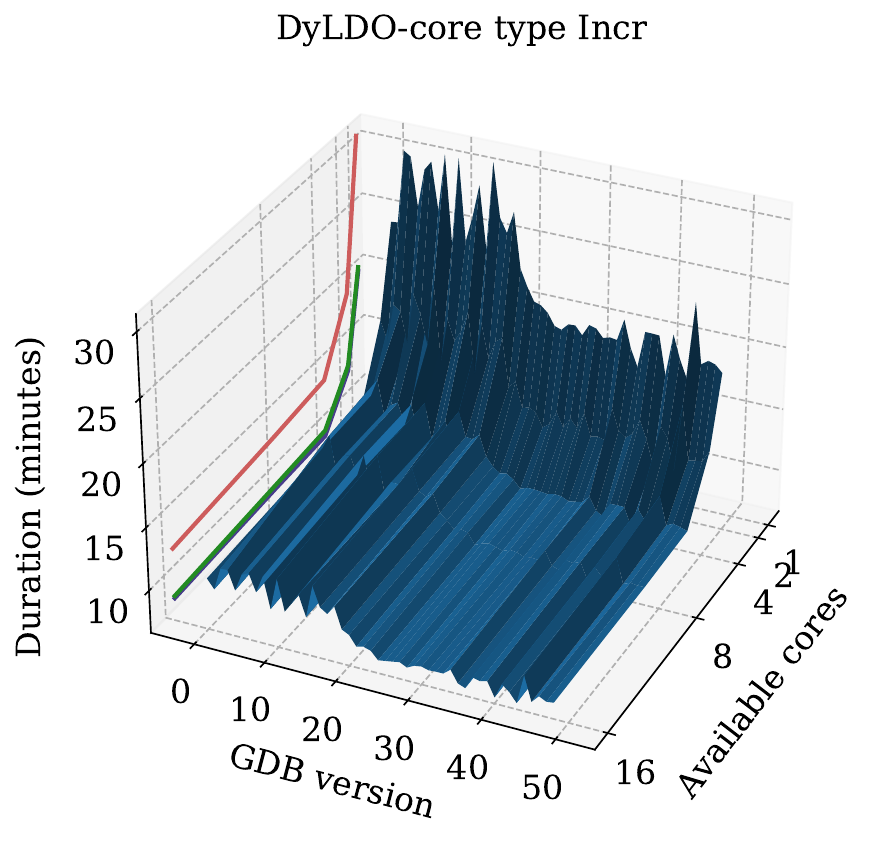}
	\caption{\label{fig:DyLDO-core-type-incr}Incremental computation of Class Collection on DyLDO-core.}
\end{subfigure}
\quad
\begin{subfigure}[t]{0.23\linewidth}
    \centering
    \includegraphics[trim={0cm 0cm 0cm 1cm},clip,width=\textwidth]{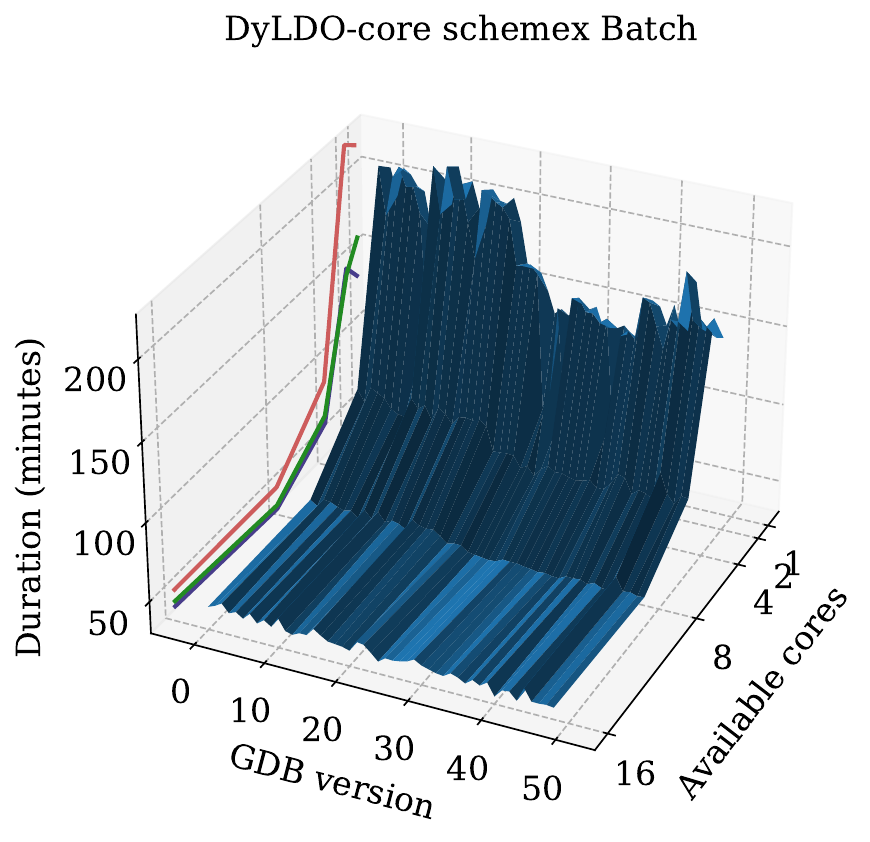}
     \caption{\label{fig:DyLDO-core-schemex-batch}Batch computation of SchemEX on DyLDO-core.}
\end{subfigure}
\quad
\begin{subfigure}[t]{0.23\linewidth}
    \centering
    \includegraphics[trim={0cm 0cm 0cm 1cm},clip,width=\textwidth]{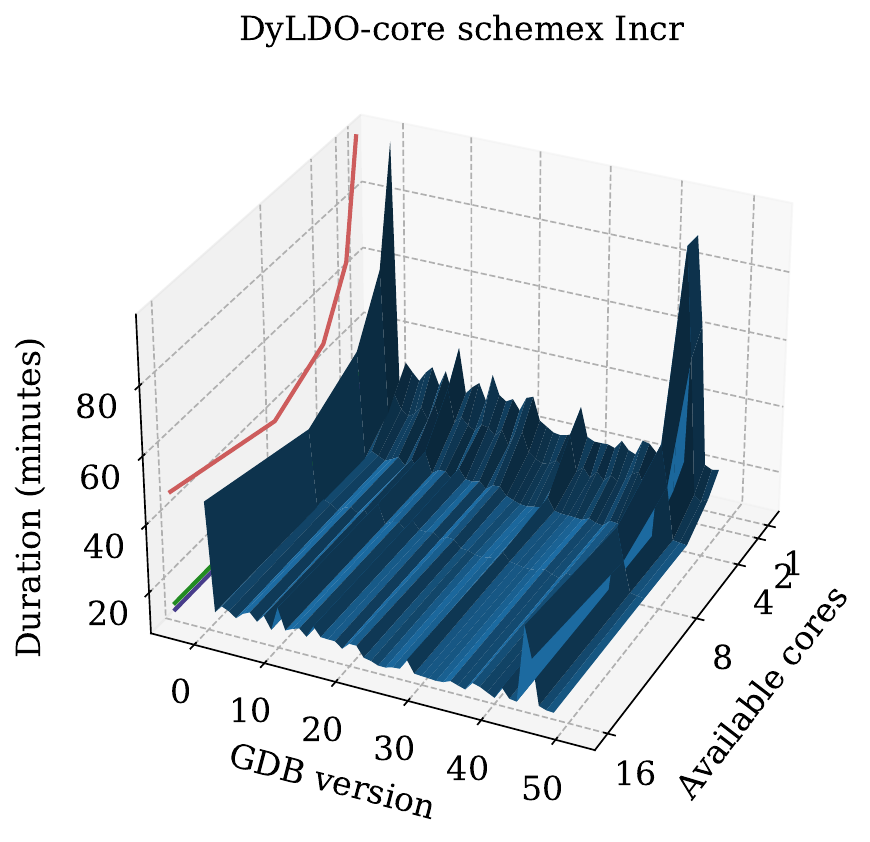}
	\caption{\label{fig:DyLDO-core-schemex-incr}Incremental computation of SchemEX on DyLDO-core.}
\end{subfigure}

\caption{\label{fig:results-experiment-2}Performance of batch and incremental computation of Attribute Collection on the BSBM dataset and Attribute Collection, Class Collection, and SchemEX on the DyLDO-core dataset. The $x$-axis shows the GDB version, the $y$-axis the duration of the summary computation, and the $z$-axis the number of available cores. The three contour lines show the runs with the shortest (blue), longest (red), and median duration (green).} 
\vspace{1cm}
\end{figure*}

\subsection{Procedure}
We systematically analyze the impact of the number of cores available for the graph summary computation.
To this end, we repeat Experiment~1, but we limit the number of available cores for the computation to $1$, $2$, $4$, $8$, and~$16$ cores.
We focus on two datasets: the synthetic BSBM dataset and the real-world DyLDO-core dataset.
We designed the evolution of the BSBM benchmark dataset so that the size grows linearly for the first $20$ versions and shrinks linearly for the remaining $20$ versions.  
Thus, we can easily see any non-linear performance changes.
As a representative real-world dataset, we use the DyLDO-core dataset.
As shown in Experiment~1, real-world datasets have different characteristics compared to the benchmark datasets.
DyLDO-core is considerably smaller than DyLDO-full, which allows all graph summaries to be computed, even with a small number of cores, in reasonable time. 
Analogously to Experiment~1, the runtime performance of the algorithm is measured by the time needed to compute a summary from scratch or, in the incremental case, to update the summary $SG$ for each version of the GDB.

We observed in Experiment~1 that graph summary computation on real-world datasets is considerably more complex than on synthetically generated datasets, which leads to significant differences between the evaluated graph summary models.
Furthermore, overall, the performance of Attribute Collection, Class Collection, and SchemEX differs only marginally on the synthetic BSBM dataset.
Thus, for this experiment we focus on the real-word DyLDO-core dataset.
Given these two observations from Experiment~1, we can reasonably assume that for our three graph summary models, the impact of using a different number of cores will also differ only marginally.  
Thus, for the BSBM dataset, we only report the evaluation results of Attribute Collection.

To compare the performance results over multiple versions of a dataset that evolves over time, we aggregate the speed-up for each version.
Inspired by macro and micro F1-scores, we report the macro and micro average speed-ups in \cref{tab:parallelization-speed-up-batch-incr}.
For the macro average speed-up, we compute the speed-up for each version and then calculate the mean and the standard deviation.
For the micro average speed-up, we sum-up the execution times for each version and then calculate the mean speed-up. 

\begin{table*}[h!]
{\renewcommand\arraystretch{1.25}
    \centering
    \footnotesize
    \caption{\label{tab:parallelization-speed-up-batch-incr}The macro and micro average speed-ups comparing the batch and incremental algorithms for different numbers of cores. The incremental algorithm outperforms the batch algorithm (speed-up value greater than $1$).}
    \begin{tabularx}{\textwidth}{p{0.2cm} p{1.8cm} p{1.9cm} *{5}{>{\centering\arraybackslash}X}}
        \toprule
        & & & \multicolumn{4}{c}{\textbf{Average Speed-up}}\\
        & \textbf{Model} & \textbf{Metric}
        & \textbf{$\bm{1}$ core}
        & \textbf{$\bm{2}$ cores}
        & \textbf{$\bm{4}$ cores}
        & \textbf{$\bm{8}$ cores}
        & \textbf{$\bm{16}$ cores}\\
        \midrule
        \arrayrulecolor{black!0}\midrule
        \parbox[t]{4mm}{\multirow{2}{*}{\rotatebox[origin=c]{90}{\textbf{\shortstack[c]{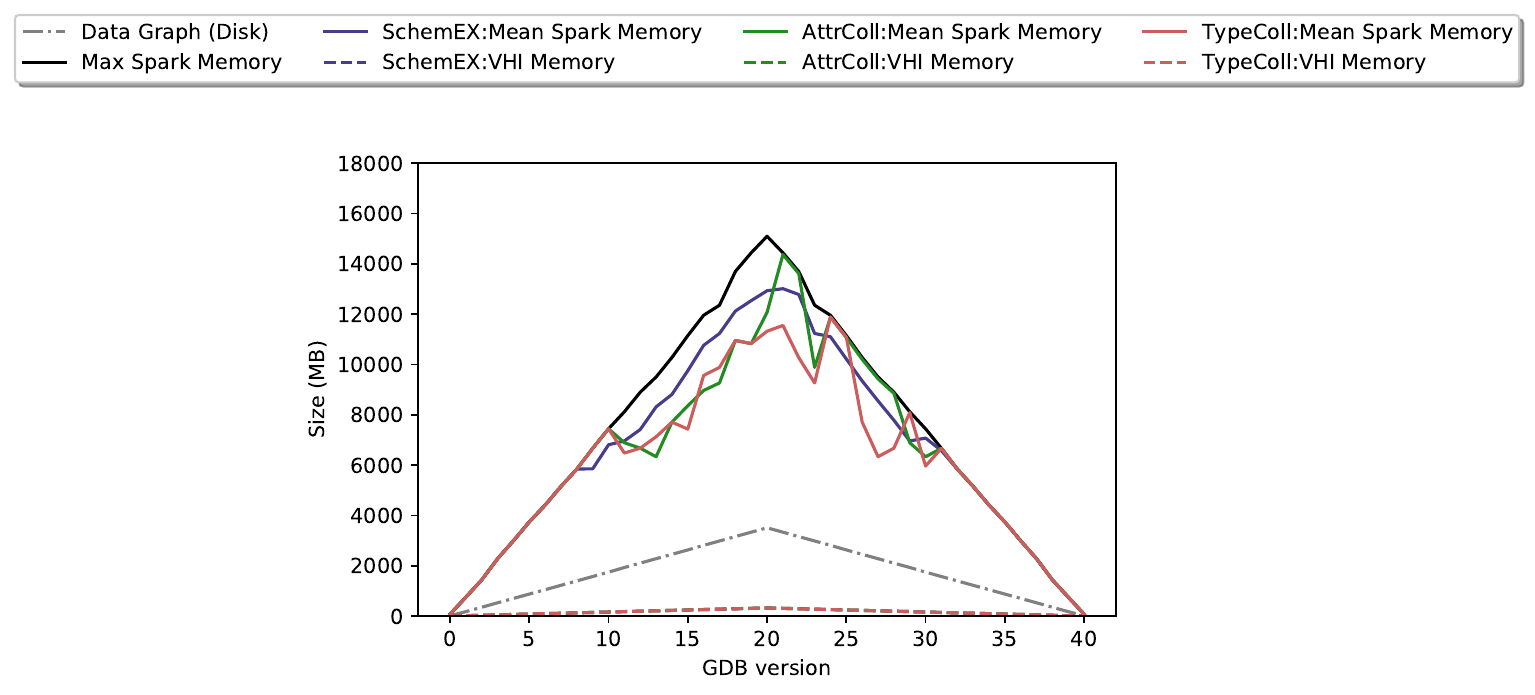}}}}}
        & \parbox[t]{4mm}{\multirow{2}{*}{\textbf{\shortstack[c]{AttrColl}}}}
        & \textbf{Macro Avg.}
        & $1.44 \pm 0.05$ 
        & $1.83 \pm 0.31$ 
        & $1.46 \pm 0.08$ 
        & $1.16 \pm 0.05$
        & $1.07 \pm 0.04$ \\
        & & \textbf{Micro Avg.}
        & $1.44$ 
        & $1.86$ 
        & $1.48$ 
        & $1.17$
        & $1.07$ \\
        \arrayrulecolor{black!0}\midrule
        \arrayrulecolor{black!50}\midrule
        \parbox[t]{4mm}{\multirow{6}{*}{\rotatebox[origin=c]{90}{\textbf{\shortstack[c]{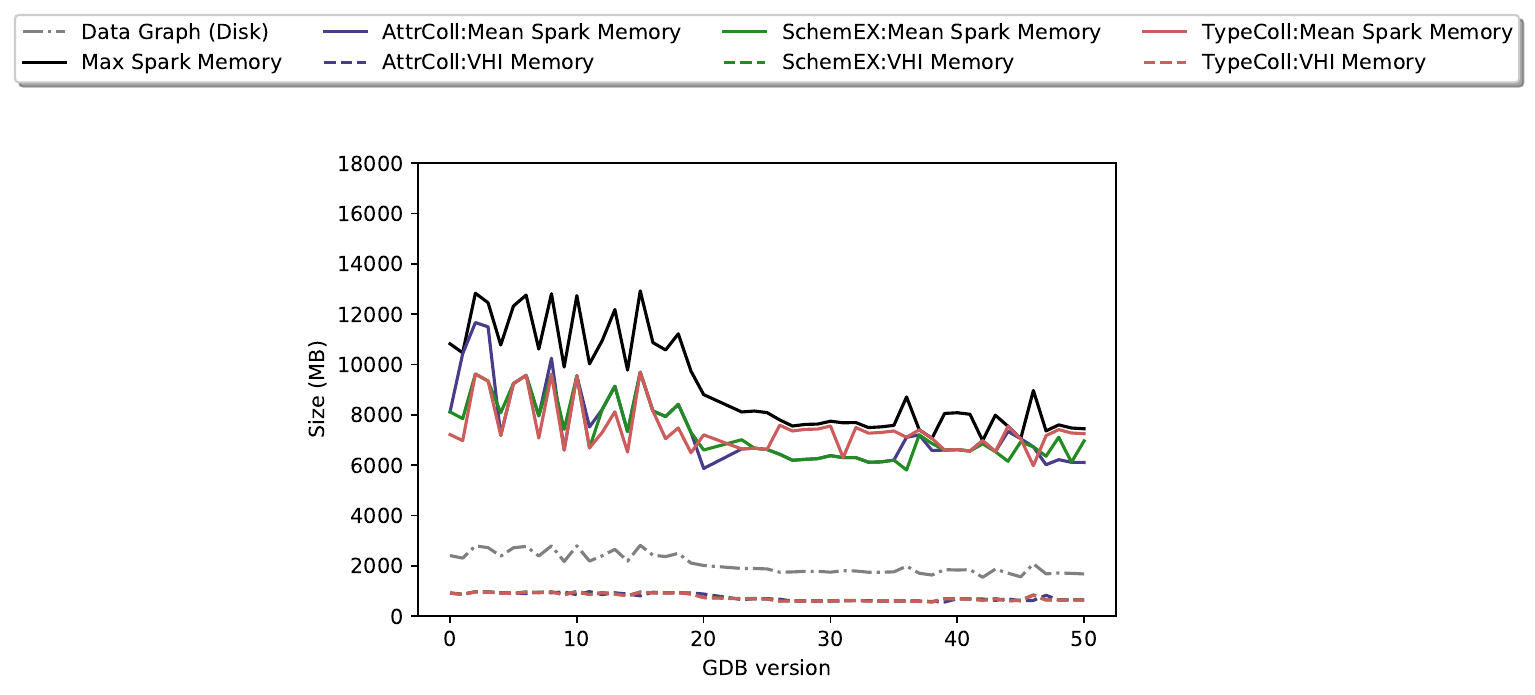}}}}}
        & \parbox[t]{4mm}{\multirow{2}{*}{\textbf{\shortstack[c]{AttrColl}}}}
        & \textbf{Macro Avg.}
        & $9.35 \pm 1.81$ 
        & $13.48 \pm 3.21$ 
        & $7.62 \pm 1.33$ 
        & $4.02 \pm 0.56$
        & $3.71 \pm 0.66$\\
        & & \textbf{Micro Avg.}
        & $8.98$ 
        & $12.44$ 
        & $7.38$ 
        & $3.89$
        & $3.48$\\
        \arrayrulecolor{black!25}\cmidrule(r{1pt}){2-8}
        & \parbox[t]{4mm}{\multirow{2}{*}{\textbf{\shortstack[c]{ClassColl}}}}
        & \textbf{Macro Avg.}
        & $64.52 \pm 7.7$ 
        & $77.82 \pm 20.29$ 
        & $44.08 \pm 4.45$ 
        & $21.48 \pm 2.23$
        & $14.06 \pm 1.4$\\
        & & \textbf{Micro Avg.}
        & $65.23$ 
        & $77.87$ 
        & $44.32$ 
        & $21.48$
        & $14.16$\\
        \arrayrulecolor{black!25}\cmidrule(r{1pt}){2-8}
        & \parbox[t]{4mm}{\multirow{2}{*}{\textbf{\shortstack[c]{SchemEX}}}}
        & \textbf{Macro Avg.}
        & $6.26 \pm 1.23$ 
        & $8.43 \pm 2.14$ 
        & $5.49 \pm 1.01$ 
        & $3.07 \pm 0.47$
        & $3.26 \pm 0.68$\\
        & & \textbf{Micro Avg.}
        & $5.71$ 
        & $7.55$ 
        & $5.07$ 
        & $2.92$
        & $3.0$\\
        \arrayrulecolor{black!100}\bottomrule
    \end{tabularx}
}
\end{table*}

\subsection{Results}

The main results of Experiment~2 are shown in \cref{fig:results-experiment-2}.
For both datasets, our main observation is that increasing the number of available cores improves the performance of batch and incremental computation. 
Remarkable, however, is that the single-core incremental runs on the real-world DyLDO-core dataset are competitive with the $16$-core runs of the batch computation, often even outperforming the batch computation.
Attribute Collections are computed incrementally between $1.2$ and $2.4$ (mean $1.8 \pm 0.3$) times faster using only one core than the batch computation with $16$ cores on the DyLDO-core dataset.
For SchemEX, the incremental computation is between $0.5$ and $2.0$ (mean $1.6 \pm 0.3$) times faster.
Notably for SchemEX, only two single-core incremental computations were slower than the $16$-core batch computations, namely versions $46$ (speed-up of $0.5$) and $47$ (speed-up of $0.6$).
As discussed in Experiment~1, here about $50\%$ of the data graph changed. 
For Class Collections, the speed-up is even higher than for Attribute Collections.
Class Collections are computed incrementally between $4.9$ and $6.9$ (mean $5.7 \pm 0.5$) times faster using only one core than the batch computation with $16$ cores on the DyLDO-core dataset.

In general, adding more cores benefits the batch computation more than the incremental computation. 
For the BSBM benchmark dataset, the four-core batch runs outperform the single-core incremental runs.
Beyond four cores, incremental computation is as fast as batch computation, within a margin of error of $30$ seconds.  
On the BSBM dataset version~$20$ (the largest version), the single-core batch computation of Attribute Collection takes $57$~minutes and the $16$-core batch computation takes $22$~minutes.
This is a speed-up of about~$2.6$.
For the same BSBM dataset version, the incremental computation time for Attribute Collection drops from $40$~minutes (single core) to $20$~minutes ($16$ cores), a speed-up of~$2.0$.

The difference between batch and incremental computation is more apparent in the DyLDO-core dataset.
Here, single-core batch computations of Attribute Collection take $157$ to $338$~minutes (mean $237 \pm 60$) and $16$-core batch computations only take $37$ to $51$~minutes (mean $44 \pm 4$).
Thus, the speed-up for the batch computation of Attribute Collection when going from $1$ to~$16$ cores is between $3.8$ and~$7.4$ (mean $5.4 \pm 1.2$).
For SchemEX, we observe a similar speed-up behavior. 
Single-core batch computation of SchemEX takes $131$ to $215$~minutes (mean $167 \pm 27$) and $16$-core batch computations only take $35$ to $49$~minutes (mean $42 \pm 3$).
For the extreme case of the Class Collection, single-core batch computation takes $879$ to $2430$~minutes (mean $1463 \pm 398$) and $16$-core takes $84$ to $199$~minutes (mean $129 \pm 30$).
This is a speed-up of between $10.4$ and $12.7$ (mean $11.2 \pm 0.6$).

In contrast, excluding the first initial summary computation, the single-core incremental computations of all three summary models take $17$ to $90$~minutes (mean $25 \pm 7$) and the $16$-core incremental computations take $7$ to $33$~minutes (mean $11 \pm 3$).
The computation of the first graph summary is a batch computation, as there is no prior graph summary to update. 
For incremental computation, the speed-up from one core to $16$~cores is between $1.9$ and~$2.7$ (mean $2.2 \pm 0.1$).
In general for DyLDO-core, incrementally, SchemEX is computed fastest and Attribute Collections slowest.

\begin{table*}[t!]
{\renewcommand\arraystretch{1.25}
    \centering
    \footnotesize
    \caption{\label{tab:parallelization-speed-up}The mean of the relative speed-up when doubling the number of cores used for the batch and incremental algorithms. A speed-up of $2.0$ means that doubling the number of cores halves the computation time.}
    \begin{tabularx}{\textwidth}{p{0.2cm} p{1.8cm} p{1.8cm} *{4}{>{\centering\arraybackslash}X}}
        \toprule
        & & & \multicolumn{4}{c}{\textbf{Average Speed-up}}\\
        & \textbf{Model} & \textbf{Algorithm}
        & \textbf{$\bm{1} \rightarrow \bm{2}$ cores}
        & \textbf{$\bm{2} \rightarrow \bm{4}$ cores}
        & \textbf{$\bm{4} \rightarrow \bm{8}$ cores}
        & \textbf{$\bm{8} \rightarrow \bm{16}$ cores}\\
        \midrule
        \arrayrulecolor{black!00}\midrule
        \parbox[t]{4mm}{\multirow{2}{*}{\rotatebox[origin=c]{90}{\textbf{\shortstack[c]{BSBM}}}}}
        & \parbox[t]{4mm}{\multirow{2}{*}{\textbf{\shortstack[c]{AttrColl}}}}
        & \textbf{batch}
        & $1.25 \pm 0.20$ 
        & $1.75 \pm 0.27$ 
        & $1.24 \pm 0.05$ 
        & $0.98 \pm 0.03$\\
        & & \textbf{incremental} 
        & $1.55 \pm 0.09$
        & $1.40 \pm 0.06$
        & $0.99 \pm 0.02$
        & $0.91 \pm 0.03$\\
        \arrayrulecolor{black!0}\midrule
        \arrayrulecolor{black!50}\midrule
        \parbox[t]{4mm}{\multirow{6}{*}{\rotatebox[origin=c]{90}{\textbf{\shortstack[c]{DyLDO-core}}}}}
        & \parbox[t]{4mm}{\multirow{2}{*}{\textbf{\shortstack[c]{AttrColl}}}}
        & \textbf{batch}
        & $1.00 \pm 0.12$ 
        & $2.60 \pm 0.26$ 
        & $1.90 \pm 0.11$ 
        & $1.09 \pm 0.17$ \\
        & & \textbf{incremental}
        & $1.42 \pm 0.10$ 
        & $1.50 \pm 0.17$
        & $1.01 \pm 0.04$
        & $0.99 \pm 0.04$\\
        \arrayrulecolor{black!25}\cmidrule(r{1pt}){2-7}
        & \parbox[t]{4mm}{\multirow{2}{*}{\textbf{\shortstack[c]{ClassColl}}}}
        & \textbf{batch}
        & $1.36 \pm 0.40$ 
        & $2.55 \pm 0.64$ 
        & $2.15 \pm 0.37$ 
        & $1.64 \pm 0.13$ \\
        & & \textbf{incremental}
        & $1.54 \pm 0.08$ 
        & $1.44 \pm 0.09$
        & $1.03 \pm 0.02$
        & $1.07 \pm 0.03$\\
        \arrayrulecolor{black!25}\cmidrule(r{1pt}){2-7}
        & \parbox[t]{4mm}{\multirow{2}{*}{\textbf{\shortstack[c]{SchemEX}}}}
        & \textbf{batch}
        & $1.02 \pm 0.20$ 
        & $2.40 \pm 0.34$ 
        & $1.78 \pm 0.09$ 
        & $0.94 \pm 0.08$ \\
        & & \textbf{incremental}
        & $1.33 \pm 0.10$ 
        & $1.57 \pm 0.17$
        & $1.01 \pm 0.08$
        & $0.99 \pm 0.05$\\
        \arrayrulecolor{black!100}\bottomrule
    \end{tabularx}
}
\end{table*}

The mean speed-up for each increase in number of cores is presented in \cref{tab:parallelization-speed-up}.
On both datasets, we observe that the performance of incremental computation greatly improves when using up to $4$~cores, but increases only marginally or even decreases when adding more than $4$ cores.
For batch computation, we still see large performance improvements up to $8$~cores, in particular on the real-world DyLDO-core dataset. 
Class Collection is the only summary model we evaluated where performance continues to increase beyond $8$~cores.
On the DyLDO-core dataset, over all analyzed numbers of cores, we have an average speed-up of the incremental algorithm over the batch algorithm for Attribute Collection between $5.5$ and $13.3$ (mean $9.5 \pm 1.6$), for Class Collection between $51.5$ and $85.9$ (mean $64.3 \pm 7.7$), and for SchemEX between $2.0$ and $7.4$ (mean $6.3 \pm 1.1$).
Overall, on the DyLDO-core dataset, over all analyzed numbers of cores, we have a macro average speed-up between $3.0$ and $77.8$ (mean $19 \pm 3.3$) for the incremental algorithm.

\subsection{Discussion}
\label{sec:incremental-discussion-2}
The key insight from the experiment on the parallelization of our algorithm is that incremental computation outperforms batch computation for $1$, $2$, $4$, $8$, and $16$~cores. 
In addition, on the DyLDO-core dataset, the $16$-core batch computation is slower than the single-core incremental computation, except for SchemEX on DyLDO-core versions $46$ and~$47$.
This is remarkable, as we also observe that adding cores benefits batch computation more than incremental computation. 
To explain this, we evaluate the runtime of all phases during the computation.
We find that the find and merge phase takes a significant portion of runtime (see \cref{tab:find-and-merge-phase-multicore}).

Incremental computation is faster because it reduces the number of find and merge operations.
As described in \cref{sec:incremental-apparatus}, the find and merge phase is done in the graph database containing the graph summary. 
To write and update the graph summary, we use non-transactional, optimistic operations.
This reduces synchronization overhead and allows fast parallelization. 
Batch computation always writes the entire graph summary for each version of the input graph database.
This explains two observations:
First, using multiple cores allows parallel access to the graph summary, reducing the time required to find and merge vertex summaries. 
Second, since incremental computation avoids unnecessary find and merge operations, speeding up the find and merge phase has less impact on its overall runtime. 
It remains to be evaluated if different database implementations can significantly improve the overall runtime.
Both the batch and incremental algorithms will run faster with a faster database implementation, though the incremental algorithm makes fewer database accesses.

The make-set phase is the same for incremental and batch computation.
The parallelization of this step is done by Apache Spark, which also de-serializes the gzipped input files. 
Based on existing studies of graph processing frameworks~\cite{DBLP:journals/pvldb/AmmarO18}, we assume that for larger datasets and more complex graph summaries, \eg using the $k$-chaining parameterization, multi-core performance will scale beyond $4$ cores. 
Apache Spark is a state-of-the-art processing framework~\cite{DBLP:journals/pvldb/AmmarO18};
optimizing it is beyond the scope of this article. 

\begin{table*}[h!]
{\renewcommand\arraystretch{1.25}
    \centering
    \footnotesize
    \caption{\label{tab:find-and-merge-phase-multicore}The percentage of runtime due to the find and merge phase, in the batch and incremental algorithms.}
    \begin{tabularx}{\textwidth}{p{0.2cm} p{1.6cm} p{1.8cm} *{5}{>{\centering\arraybackslash}X}}
        \toprule
        & \textbf{Model} & \textbf{Algorithm} & \multicolumn{5}{c}{\textbf{Find and Merge Phase for Number of Cores}}\\
        & & 
        & \textbf{$\bm{1}$}
        & \textbf{$\bm{2}$}
        & \textbf{$\bm{4}$}
        & \textbf{$\bm{8}$}
        & \textbf{$\bm{16}$}\\
        \midrule
        \arrayrulecolor{black!0}\midrule
        \parbox[t]{4mm}{\multirow{2}{*}{\rotatebox[origin=c]{90}{\textbf{\shortstack[c]{BSBM}}}}}
        & \parbox[t]{4mm}{\multirow{2}{*}{\textbf{\shortstack[c]{AttrColl}}}}
        & \textbf{batch}
        & $36\% \pm \phantom{0}1\%$ 
        & $52\% \pm \phantom{0}8\%$ 
        & $50\% \pm \phantom{0}2\%$ 
        & $40\% \pm \phantom{0}2\%$
        & $42\% \pm \phantom{0}1\%$\\
        & & \textbf{incremental} 
        & $\phantom{0}7\% \pm \phantom{0}1\%$
        & $13\% \pm \phantom{0}1\%$
        & $28\% \pm \phantom{0}1\%$
        & $30\% \pm \phantom{0}1\%$
        & $38\% \pm \phantom{0}2\%$\\
        \arrayrulecolor{black!0}\midrule
        \arrayrulecolor{black!50}\midrule
        \parbox[t]{4mm}{\multirow{6}{*}{\rotatebox[origin=c]{90}{\textbf{\shortstack[c]{DyLDO-core}}}}}
        & \parbox[t]{4mm}{\multirow{2}{*}{\textbf{\shortstack[c]{AttrColl}}}}
        & \textbf{batch}
        & $91\% \pm \phantom{0}1\%$ 
        & $94\% \pm \phantom{0}1\%$ 
        & $91\% \pm \phantom{0}1\%$ 
        & $83\% \pm \phantom{0}1\%$
        & $83\% \pm \phantom{0}2\%$ \\
        & & \textbf{incremental}
        & $17\% \pm 12\%$ 
        & $26\% \pm 13\%$
        & $33\% \pm \phantom{0}9\%$
        & $31\% \pm \phantom{0}8\%$
        & $38\% \pm \phantom{0}8\%$\\
        \arrayrulecolor{black!25}\cmidrule(r{1pt}){2-8}
        & \parbox[t]{4mm}{\multirow{2}{*}{\textbf{\shortstack[c]{ClassColl}}}}
        & \textbf{batch}
        & $99\% \pm \phantom{0}1\%$ 
        & $99\% \pm \phantom{0}1\%$ 
        & $98\% \pm \phantom{0}1\%$ 
        & $96\% \pm \phantom{0}1\%$
        & $95\% \pm \phantom{0}1\%$ \\
        & & \textbf{incremental}
        & $\phantom{0}6\% \pm \phantom{0}1\%$ 
        & $12\% \pm \phantom{0}1\%$
        & $21\% \pm \phantom{0}1\%$
        & $23\% \pm \phantom{0}1\%$
        & $28\% \pm \phantom{0}1\%$\\
        \arrayrulecolor{black!25}\cmidrule(r{1pt}){2-8}
        & \parbox[t]{4mm}{\multirow{2}{*}{\textbf{\shortstack[c]{SchemEX}}}}
        & \textbf{batch}
        & $88\% \pm \phantom{0}1\%$ 
        & $92\% \pm \phantom{0}2\%$ 
        & $89\% \pm \phantom{0}1\%$ 
        & $79\% \pm \phantom{0}2\%$
        & $82\% \pm \phantom{0}2\%$ \\
        & & \textbf{incremental}
        & $25\% \pm 15\%$ 
        & $34\% \pm 14\%$
        & $38\% \pm 11\%$
        & $35\% \pm 10\%$
        & $43\% \pm 10\%$\\
        \arrayrulecolor{black!100}\bottomrule
    \end{tabularx}
}
\end{table*}

\begin{figure*}[t!]
\centering
\begin{subfigure}[t]{1\linewidth}
    \centering
    \includegraphics[trim={0cm 0cm 0cm 0cm},clip,width=\textwidth]{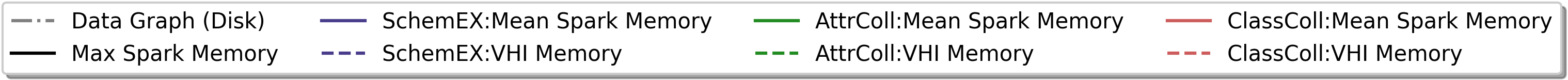}
\end{subfigure}

\begin{subfigure}[t]{0.48\linewidth}
    \centering
    \includegraphics[trim={0cm 0cm 0cm 0cm},clip,width=.95\textwidth]{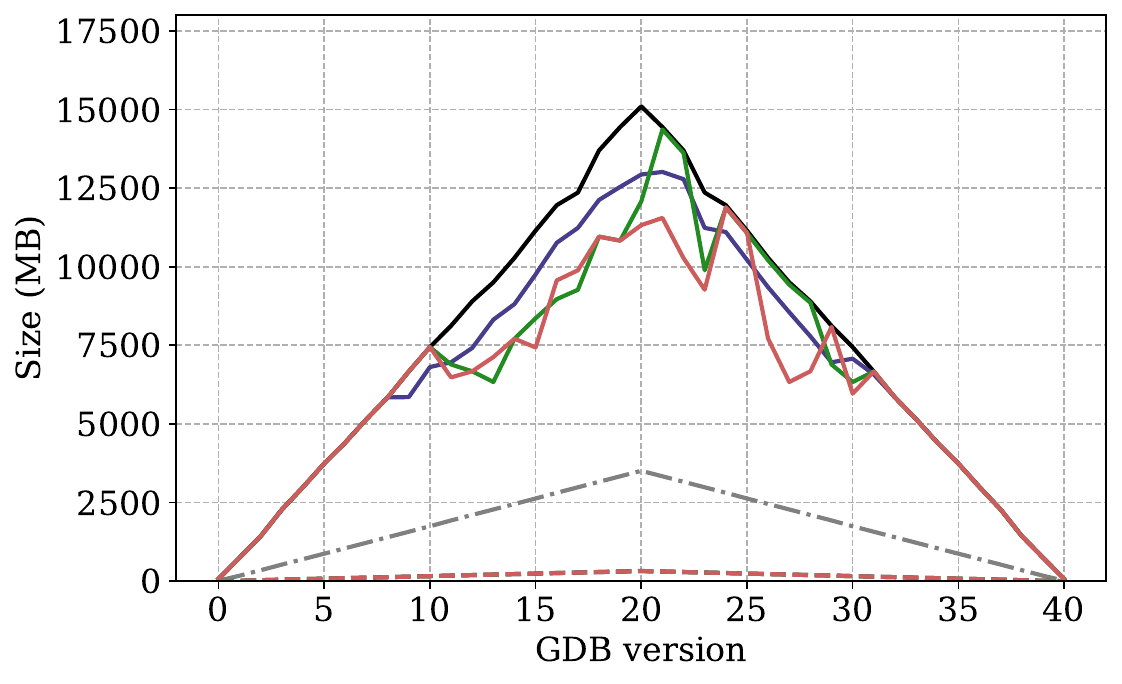}
     \caption{\label{fig:BSBM-attribute-memory} Memory consumption on the BSBM dataset.}
\end{subfigure}
\quad
\begin{subfigure}[t]{0.48\linewidth}
    \centering
    \includegraphics[trim={0cm 0cm 0cm 0cm},clip,width=.95\textwidth]{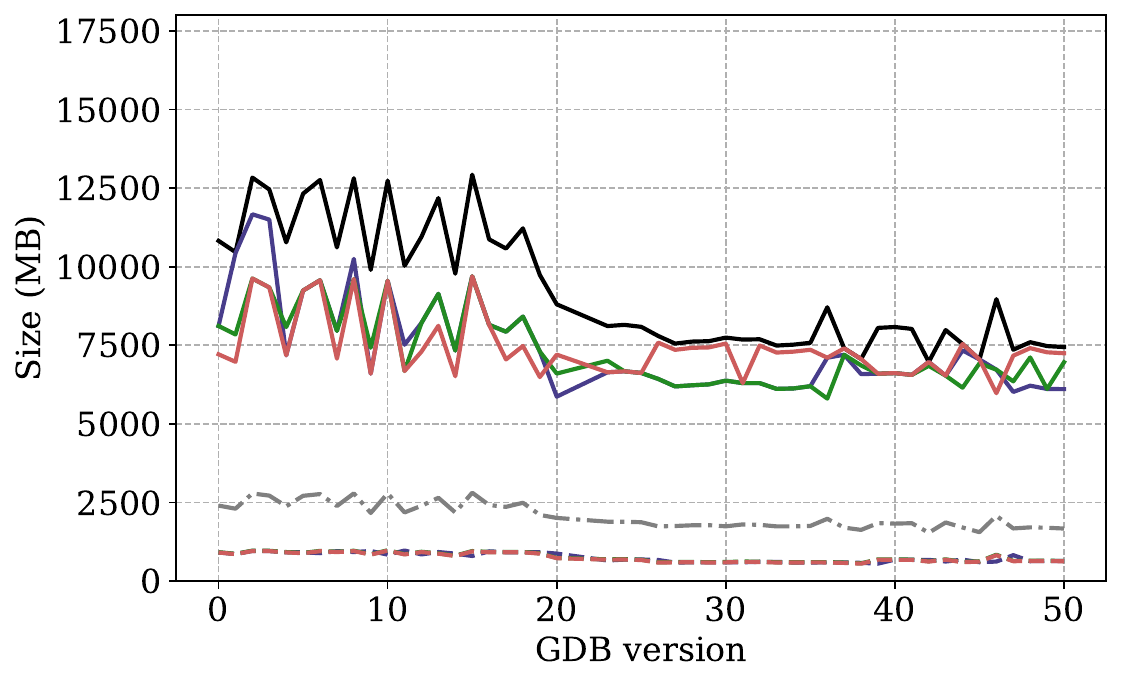}
     \caption{\label{fig:DyLDO-core-attribute-memory} Memory consumption on the DyLDO-core dataset.}
\end{subfigure}

\caption{\label{fig:memory-results}Measurements of the memory consumption on the BSBM (left) and DyLDO-core (right) datasets.
The dash-dotted grey lines depict the raw dataset size as RDF $n$-triple text file. 
The black lines depict the maximum allocated memory of Spark's GraphX in-memory representation
engine for the input graph and all vertex summaries of the graph summary before the find and merge phase (equal for all three summary models on each dataset). 
The colored solid lines depict the mean allocated memory of Spark for the respective summary models and the colored dotted lines the memory allocated by the \vertexHashIndex{} (VHI) after the graph summary is computed.}
\end{figure*}

\section{Experiment 3: Memory Overhead Induced by Incremental Summarization}
\label{sec:experiment-3}
We measure the memory consumption of our algorithms and the memory overhead induced by the \vertexHashIndex{} used by the incremental algorithm.

\subsection{Procedure}
To analyze the memory consumption of our algorithm, we log the memory consumption of Apache Spark.
More precisely, we enable Spark's metrics log to estimate the size of GraphX's in-memory representation of the input graph and the graph summary.
Furthermore, we use a Java Instrumentation Agent to log the actual in-memory size of the \vertexHashIndex{}.
To this end, we log the sum of all shallow sizes of each referenced object in the \vertexHashIndex{}.
To exclude the possibility of negative impacts on the runtime, we logged the memory consumption in a separate set of experiments from Experiment~1 and~2. 
Analogously to Experiment~2, we use the BSBM and DyLDO-core datasets.
Based on the discussion in \cref{sec:incremental-discussion-2}, we use $4$~cores for all runs. 
Partitioning and distributing graph data during processing potentially creates data overheads due to data redundancy. 
Since Spark does not use more than $4$ partitions for our datasets, adding more cores does not benefit the performance of the make-set phase (Phase~1).
Reducing the number of cores means unnecessarily long runtimes.
Thus, $4$ cores is a reasonable choice for this experiment. 

\subsection{Results}
The results of this experiment are shown in \cref{fig:memory-results}.
Notice that while the memory consumption of the summary computations is the memory usage of the complete run logged by Spark (labeled Spark Memory), the size of the final \vertexHashIndex{} (labeled VHI) is calculated via a Java Instrumentation Agent.

Our main observation in this experiment is that the memory consumption scales linearly with the input size for all three summary models. 
Furthermore, the memory overhead due to the \vertexHashIndex{} is only a small fraction of the overall memory consumption.
For the BSBM dataset, we find a memory overhead of about $2\%$ ($\pm 0.06$).
For the DyLDO-core dataset, it is between $6\%$ and $11\%$ (mean $8\% \pm 0.68$).

Another interesting observation in this experiment is that, for each dataset, all three summary models have exactly the same maximum memory allocation for the summary computation (\cref{fig:memory-results}, black lines).  
Only the mean memory allocated by Spark differs between the summary models (\cref{fig:memory-results}, colored solid lines). 
We also see that computing graph summaries on the synthetic BSBM benchmark dataset and the real-world DyLDO-core dataset requires a comparable amount of memory. 
Comparing the maximum allocated memory, for BSBM we need to allocate on average $4.23$ ($\pm 0.06$) times more memory than to store the raw dataset on disk. 
For DyLDO-core, this factor averages $4.43$ ($\pm 0.11$).

Second, for each dataset, the size of the \vertexHashIndex{} of each graph summary model differs only marginally.
On the BSBM dataset, the \vertexHashIndex{} of SchemEX is between $2$ and $326\,$MB (mean $160 \pm 97\,$MB), the one of Attribute Collection is between $2$ and $324\,$MB (mean $159 \pm 97\,$MB), and the one of Class Collection is between $2$ and $324\,$MB (mean $159 \pm 97\,$MB).
On the DyLDO-core dataset, the \vertexHashIndex{} of SchemEX is between $569$ and $984\,$MB (mean $759 \pm 145\,$MB), the one of Attribute Collection is between $561$ and $976\,$MB (mean $755 \pm 146\,$MB), and the one of Class Collection is between $559$ and $974\,$MB (mean $750 \pm 145\,$MB).

The results of Experiment~1 show a strong correlation between the \vertexHashIndex{} size and the number of vertices in the GDB ($\rho(311)=0.908$, $p < .0001$).
L2 of the \vertexHashIndex{} dominates the overall size.
The number of edges is not as strongly correlated with the size of the \vertexHashIndex{} ($\rho(311)=0.661$, $p < .0001$), since edges are reflected by the vertex summaries.
 
The difference in size of the \vertexHashIndex{} between the summary models on a fixed dataset is only a few megabytes, but the difference between the datasets is much larger.
The largest version of the BSBM dataset is the $20$th, with $1,286,066$ vertices. This is comparable to the smallest version of the DyLDO-core dataset (version $36$), with $1,503,466$ vertices.
For all three summary models, the \vertexHashIndex{} in the $36$th version of the DyLDO-core dataset is about twice as large as the \vertexHashIndex{} in the $20$th version of the BSBM dataset. 
On average, the size of the \vertexHashIndex{} in the $36$th version of the DyLDO-core dataset is $602$ ($\pm 5$)\,MB.
The \vertexHashIndex{} in the $20$th version of the BSBM dataset requires $324$ ($\pm 1$)\,MB of memory.

\subsection{Discussion}

From the experiment on the memory consumption of the \vertexHashIndex{}, the memory overhead required for maintaining this additional data structure is only a small fraction of the memory consumption of the graph summary computation.
In particular, considering the significant performance benefits found in Experiments~1 and~2 for the incremental algorithm, the memory overhead induced by the \vertexHashIndex{} of less than $10\%$ on the real-world dataset is reasonable. 
On the synthetic benchmark dataset, this overhead is only $2\%$.
Since the maximum Spark memory consumption for the graph summary computation is roughly the same, the main varying factor is the absolute size of the \vertexHashIndex{}.

The size of the \vertexHashIndex{}, \ie the memory overhead for the incremental graph summarization, is mostly influenced by the number of vertices in the input graph.
A secondary factor is the number of primary vertices in the computed graph summary, \ie how many different schema structures appear in the input dataset.
We noticed a nearly two-fold increase of  \vertexHashIndex{} size for the real-world DyLDO-core dataset over the synthetic BSBM dataset in two versions of each dataset, where the difference in number of vertices is less than $15\%$.
References to each vertex are stored in L2.
The payload (L3) and/or the references to vertex summaries (L1) make up the remainder.
In the final graph summary, L1~is orders of magnitude larger on the DyLDO-core dataset than on the BSBM dataset.
For example, the Attribute Collection in the $20$th version of the BSBM dataset contains about $10^2$ vertices, while it contains about $2 \times 10^4$ vertices in the $36$th version of the DyLDO-core dataset (see Experiment~1, \cref{fig:main-results-benchmark}).
The \vertexHashIndex{} stores a reference to each primary vertex in the graph summary so its grows with the number of primary vertices in the computed graph summary.
Regarding L3, note that we store the data source payload for the DyLDO-core dataset and the vertex count payload for the BSBM dataset, which also contributes to an overall larger \vertexHashIndex{} on the DyLDO-core dataset.
However, we evaluated the impact of the data source payload by measuring the size of the \vertexHashIndex{} excluding the payload information.
On average, the sizes of the \vertexHashIndex{} decreased on both datasets by about $25\%$ to $30\%$, which still results in the \vertexHashIndex{} being about twice as big for the DyLDO-core dataset.

Another observation is that, in our experiments, we needed about $4$ to $5$ times more memory than the raw dataset size.
During the graph summary computation, Spark stores the original graph and the corresponding vertex summary for each vertex in memory.
It remains to be evaluated if a different graph processing framework can further compress this information to reduce the overall memory consumption, \eg Bogel~\cite{DBLP:journals/pvldb/YanCLN14,DBLP:journals/pvldb/AmmarO18}.

\section{Experiment 4: Efficiency of Computing Long $k$-chains using a Hash-Messages}
\label{sec:experiment-4}
We evaluate the running time and memory consumption of the hash-messaging algorithm (\cref{alg:brs-adjusted}) on datasets with between $10\,$M to $1\,$B triples.
We use the largest synthetic and real-world datasets from the previous datasets and additional datasets that are two orders of magnitude larger.

\subsection{Procedure}

We experiment in total with five datasets, two smaller graphs and three larger graphs.
For the smaller graphs, we use the largest versions of the graph databases from the previous experiments as described in Section~\ref{sec:datasets}.
The largest version of LUBM100 is the first version, which has about $3.3\,$M vertices and $13\,$M edges.\footnote{The name LUBM100 stems from the choice to generate a graph for $100$ universities, rather than indicating the size of the dataset. 
The latter is done by the appendices M for millions and B for billions.}
The largest version of the BSBM dataset is version~$20$, which has about $3\,$M vertices and $14\,$M edges.

Our first larger dataset is the first version of the DyLDO-ext dataset, which is the largest of the weekly crawls with about $13\,$M vertices and $115\,$M edges.
We also generate a larger version of the BSBM dataset.
This is referred to as the \textit{$BSBM100M$ dataset} in this experiment and comprises about $17\,$M vertices and $90\,$M edges.
Furthermore, to demonstrate scalability of our algorithm, we go another magnitude larger and generate a variant of the BSBM dataset with $1\,$B triples, which we refer to as \textit{$BSBM1B$}.
This largest dataset contains about $172\,$M vertices and $941\,$M edges.

For our experiment, we compute the $k=1, \ldots, 10$ forward bisimulation defined by the the graph summary model from Schätzle \etal~\cite{DBLP:conf/sigmod/SchatzleNLP13}, as defined in Section~\ref{sec:iterative-bisimulation} \cref{gsm:schaetzle}.
We run the algorithm six times consecutively while maintaining exclusive execution rights on the server machine.
The first run is used as a warm up and we exclude it from the measurements.
This is to reduce potential bias such as filling up caches and eliminating potential side effects from other processes.
We measure the results from the last five runs and report both the numbers per run and per $k$th-iteration, as well as providing summary statistics.
We measure running times for the initialization (see lines \ref{algo:initialization}--\ref{algo:initialize-id-object} of Algorithm~\ref{alg:brs-adjusted}) and for each individual iteration.
We also report the total running time and the maximum memory consumption.
The latter is reported by the maximum memory usage (in GB) of the JVM Memory OnHeap observed during the algorithm execution.

\subsection{Results}

First, we report the results for the smaller datasets, LUBM100 and BSBM.
These graphs are of similar size, and the execution of the bisimulation is similar.
The running time for initialization is $6$~seconds and the execution times for each of the $k$~iterations is $6$~seconds.
Thus, the total execution time is $66$ seconds for both the LUBM100 and the BSBM datasets.
We omit detailed tables of the running times of these smaller datasets.

Since the smaller datasets give only limited insights into the properties of our algorithm, we apply it to three larger datasets.
The first version of the the web-crawled DyLDO-ext dataset is an order of magnitude larger than LUBM100 and BSBM.
But it also differs in terms of being a real-world dataset rather then a synthetically generated dataset.
The running times of our hash-message $10$-bisimulation on the largest DyLDO-ext version are reported in Table~\ref{table-exp-dyldo}, where the initialization of the algorithm takes $0.3$~minutes and executing the bisimulation iterations needs between $0.16$ and $0.46$~minutes each, on average.
Table~\ref{table-exp-bsbm100m} shows the results for the synthetic BSBM100M dataset, where initialization takes $1$~minute and each iteration requires $0.5$ to $0.8$~minutes, on average.
The total processing time to reach a $10$-bisimulation is under $7$~minutes, including initialization.
Furthermore, it can be observed that the execution times per run (rows) as well as iterations (columns) are quite stable.
Finally, for the largest dataset, BSBM1B, the initialization step takes $10.2$~minutes and each iteration requires between $4.14$ to $5.04$~minutes, on average, as reported in Table~\ref{table-exp-bsbm1b}.
The computation of a $10$-bisimulation on the $1$~billion edges dataset requires less then $55$~minutes, including initialization.
The standard deviation between the runs (rows) is considerably larger for BSBM1B than for BSBM100M, which is a order of magnitude smaller.
Within the runs, the standard deviation is between $0.28$ and $1.02$~minutes.
Notably, the standard deviation within a run, \ie across the ten iterations is still low with a maximum of $0.51$~minutes for the fourth run.

\begin{table*}[t!]
    \centering
    \subfloat[
    Running times in minutes on the DyLDO-ext dataset for $k=1,\ldots,10$ bisimulation with $CSE_{\text{Schätzle}}$. ]{
    \label{table-exp-dyldo}
    
            \begin{tabular}{|*{15}{c|}}
        \hline
            \textbf{Run} & \multicolumn{11}{c|}{\textbf{Iteration}} & \multicolumn{3}{c|}{\textbf{Aggregates}} \\
            \hline
            \multicolumn{1}{|c|}{} & 0 & 1 & 2 & 3 & 4 & 5 & 6 & 7 & 8 & 9 & 10 & \textbf{$\Sigma$} & \textbf{$\mu$} & \textbf{$\sigma$} \\
            \hline 
1 & 0.3 & 0.3 & 0.4 & 0.3 & 0.4 & 0.5 & 0.5 & 0.4 & 0.4 & 0.4 & 0.4 & 4.3 & 0.4 & 0.06 \\ 
2 & 0.3 & 0.1 & 0.2 & 0.2 & 0.2 & 0.2 & 0.2 & 0.2 & 0.2 & 0.2 & 0.3 & 2.3 & 0.2 & 0.04 \\ 
3 & 0.3 & 0.1 & 0.2 & 0.2 & 0.2 & 0.2 & 0.2 & 0.2 & 0.2 & 0.2 & 0.5 & 2.5 & 0.22 & 0.10 \\ 
4 & 0.3 & 0.1 & 0.2 & 0.2 & 0.2 & 0.2 & 0.2 & 0.2 & 0.2 & 0.2 & 0.4 & 2.4 & 0.21 & 0.07 \\ 
5 & 0.3 & 0.2 & 0.2 & 0.2 & 0.2 & 0.2 & 0.2 & 0.2 & 0.2 & 0.2 & 0.7 & 2.8 & 0.25 & 0.15 \\ 
\hline
\textbf{$\mu$} & 0.3 & 0.16 & 0.24 & 0.22 & 0.24 & 0.26 & 0.26 & 0.24 & 0.24 & 0.24 & 0.46 & \textbf{2.86} & \multicolumn{2}{c|}{} \\ 
\hline
\textbf{$\sigma$} & 0.0 & 0.08 & 0.08 & 0.04 & 0.08 & 0.12 & 0.12 & 0.08 & 0.08 & 0.08 & 0.14 & \textbf{0.74} & \multicolumn{2}{c|}{} \\ 
        \hline
        \end{tabular}
        
   \vspace{3mm}
    }

\subfloat[
    Running times in minutes on the BSBM100M dataset for $k=1,\ldots,10$ bisimulation with $CSE_{\text{Schätzle}}$. 
    ]{
    \label{table-exp-bsbm100m}
        \begin{tabular}{|*{15}{c|}}
        \hline
            \textbf{Run} & \multicolumn{11}{c|}{\textbf{Iteration}} & \multicolumn{3}{c|}{\textbf{Aggregates}} \\
            \hline
            \multicolumn{1}{|c|}{} & 0 & 1 & 2 & 3 & 4 & 5 & 6 & 7 & 8 & 9 & 10 & \textbf{$\Sigma$} & \textbf{$\mu$} & \textbf{$\sigma$} \\
            \hline 
            1 & 1.0 & 0.5 & 0.5 & 0.5 & 0.6 & 0.8 & 0.5 & 0.5 & 0.5 & 0.5 & 0.4 & 6.3 & 0.53 & 0.10 \\ 
            2 & 1.0 & 0.5 & 0.6 & 0.5 & 0.5 & 0.9 & 0.5 & 0.5 & 0.5 & 0.5 & 0.5 & 6.5 & 0.55 & 0.12 \\ 
            3 & 1.0 & 0.5 & 0.6 & 0.5 & 0.5 & 0.9 & 0.5 & 0.5 & 0.5 & 0.5 & 0.5 & 6.5 & 0.55 & 0.12 \\ 
            4 & 1.0 & 0.5 & 0.5 & 0.5 & 0.6 & 0.9 & 0.5 & 0.5 & 0.5 & 0.5 & 0.5 & 6.5 & 0.55 & 0.12 \\ 
            5 & 1.0 & 0.5 & 0.5 & 0.5 & 0.9 & 0.5 & 0.5 & 0.5 & 0.5 & 0.6 & 0.5 & 6.5 & 0.55 & 0.12 \\ 
            \hline
            \textbf{$\mu$} & 1.0 & 0.5 & 0.54 & 0.5 & 0.62 & 0.8 & 0.5 & 0.5 & 0.5 & 0.52 & 0.48 & \textbf{6.46} & \multicolumn{2}{c|}{} \\
            \hline
            \textbf{$\sigma$} & 0.0 & 0.0 & 0.05 & 0.0 & 0.15 & 0.15 & 0.0 & 0.0 & 0.0 & 0.04 & 0.04 & \textbf{0.08} & \multicolumn{2}{c|}{} \\
        \hline
        \end{tabular}
        \vspace{3mm}
    }

    \subfloat[
    Running times in minutes on the BSBM1B dataset for $k=1,\ldots,10$ bisimulation with $CSE_{\text{Schätzle}}$. ]{
    \label{table-exp-bsbm1b}
        \begin{tabular}{|*{15}{c|}}
        \hline
            \textbf{Run} & \multicolumn{11}{c|}{\textbf{Iteration}} & \multicolumn{3}{c|}{\textbf{Aggregates}} \\
            \hline
            \multicolumn{1}{|c|}{} & 0 & 1 & 2 & 3 & 4 & 5 & 6 & 7 & 8 & 9 & 10 & \textbf{$\Sigma$} & \textbf{$\mu$} & \textbf{$\sigma$} \\
            \hline 
            1 & 10.1 & 3.9 & 4.7 & 4.3 & 4.3 & 4.5 & 5.3 & 4.7 & 5.1 & 4.7 & 5.7 & 57.3 & 4.72 & 0.50 \\ 
            2 & 10.1 & 2.9 & 3.7 & 3.3 & 3.4 & 3.3 & 4.1 & 4.1 & 4.1 & 3.8 & 3.1 & 45.9 & 3.58 & 0.42 \\ 
            3 & 10.0 & 3.9 & 4.8 & 4.3 & 4.4 & 4.5 & 5.2 & 4.7 & 4.9 & 4.5 & 4.2 & 55.4 & 4.54 & 0.36 \\ 
            4 & 10.1 & 3.9 & 4.8 & 4.4 & 4.4 & 4.5 & 5.2 & 4.9 & 5.2 & 4.5 & 5.8 & 57.7 & 4.76 & 0.51 \\ 
            5 & 10.0 & 3.9 & 4.8 & 4.4 & 4.4 & 4.6 & 5.4 & 4.8 & 5.0 & 4.4 & 4.2 & 55.9 & 4.59 & 0.41 \\ 
            \hline
            \textbf{$\mu$} & 10.2 & 3.7 & 4.56 & 4.14 & 4.18 & 4.28 & 5.04 & 4.64 & 4.86 & 4.38 & 4.6 & \textbf{54.44} & \multicolumn{2}{c|}{}\\
            \hline
            \textbf{$\sigma$} & 0.62 & 0.4 & 0.43 & 0.42 & 0.39 & 0.49 & 0.48 & 0.28 & 0.39 & 0.31 & 1.02 & \textbf{4.35} & \multicolumn{2}{c|}{}\\
        \hline
        \end{tabular}
    }

    \caption{Running times of $k=1,\ldots,10$ iterative bisimulation on the three large datasets DyLDO-ext, BSBM100, and BSBM1B 
    using the graph summary model from Schätzle \etal~\cite{DBLP:conf/sigmod/SchatzleNLP13}.
    Time is measured in minutes.
    Iteration 0 is the initialization phase (Lines \ref{algo:initialization}--\ref{algo:initialize-id-object}) of the iterative bisimulation algorithm, Algorithm~\ref{alg:brs-adjusted}.
    }
    \label{tab:detailed-results-bsbm1b}
\end{table*}

The maximum JVM OnHeap Memory for  $10$-bisimulation for the smaller LUBM100 and BSBM datasets is $104.1\,$GB and $103.8\,$GB, respectively.
The maximum memory usage for the largest version of DyLDO-ext is $126.3\,$GB.
For the BSBM100M dataset, the maximum memory consumption is $248.6\,$GB, while for the largest dataset BSBM1B the average maximum memory usage is $1248.2\,$GB.

\subsection{Discussion}

Our parallel algorithm for iterative $k$-bisimulation efficiently processes large graphs, both synthetically generated and real-world graphs crawled from the web.
Remarkably, the execution times on the smaller datasets LUBM100 and BSBM with about $14\,$M edges is within seconds per bisimulation iteration.
To demonstrate the scalability of the algorithm, we applied it to larger datasets.
The $1$~billion edges of the BSBM1B dataset are processed within $54.44$~minutes.
This is a factor of $8.42$ longer than the BSBM100M dataset, which is one order of magnitude smaller with $100$~million edges.
Similarly, the maximum memory consumption for the $1$~billion-triple dataset BSBM1B is $1248.23\,$GB only about five times higher than the maximum use of $248.6\,$GB for the smaller BSBM100M dataset.
This indicates the scalability of our algorithm, both in running time as well as memory consumption.

Notably, processing the DyLDO-ext dataset is faster and requires less memory than BSBM100M dataset.
The reason for this is twofold: 
First, although the DyLDO-ext dataset has with $115\,$M edges (about $25\,$M edges more than the BSBM100M dataset), there are fewer vertices to be summarized.
The DyLDO-ext dataset has $13\,$M vertices and thus about $4\,$M less than the BSBM100 dataset.
Second, the real-world DyLDO-ext dataset is much more heterogeneous than the BSBM100M dataset.
Thus, one finds on average smaller bisimilar graphs in the DyLDO-ext dataset compared to the cleaner BSBM100M dataset, which reduces the number of comparisons needed. 
To support this statement, we further analyzed the datasets and found that the DyLDO-ext dataset has in total $30,779$ vertex labels, which occur in $48,918$ unique label sets, \ie different combinations of vertex labels.
This variety in the data model comes from the nature of the web-sourced dataset.
In contrast, the BSBM100M dataset is synthetically generated and is comprised of $1,289$ vertex labels, which are combined in only $2,274$ label sets.

Another observation is the higher standard deviation of the running times between the five runs of the largest BSBM1B dataset.
This may be a result from different caching strategies and potentially accessing secondary storage.
However, our analyses were all executed on the same system, which we had exclusive use of during the experiments.
Furthermore, the maximum memory consumption for computing the bisimulation on the BSBM1B dataset shows that at no point more than $1.3$ TB of the available $2$ TB RAM of the system were used.
Thus, we can rule out this threat to validity.

\section{Related Work}
\label{sec:related-work}
\label{sec:related-work-incremental}

We discuss the related work in the area of incremental structural graph summarization.
Related to this are algorithms for incremental subgraph indices and incremental schema discovery.
There are also extensive surveys in the field of graph summarization, which group and classify the different summarization approaches~\cite{DBLP:journals/csur/LiuSDK18,DBLP:journals/corr/abs-2004-14794,DBLP:journals/pvldb/KhanBB17,DBLP:journals/vldb/CebiricGKKMTZ19}.
Finally, the recent survey of Kellou-Menouer \etal on semantic schema discovery~\cite{Kellou-MenouerEtAl-VLDB-2021} discusses several schema discovery approaches: those that make no explicit statements about the schema in the dataset, those that use explicit statements about the data schema, and those that discover structural patterns from the data. 
The last is related to graph summarization. 
Finally, the use of bisimulation in graph summarization is discussed by  \v{C}ebiri\'c \etal{}~\cite[Section~5.1.1]{DBLP:journals/vldb/CebiricGKKMTZ19}.

Below, we discuss the key papers in the field and relate them to our work.
We focus on works on incremental summarization, subgraph indices, incremental schema discovery and bisimulation.

\paragraph{Incremental Structural Graph Summarization.}

There are a large variety of structural graph summary models~\cite{DBLP:journals/csur/LiuSDK18,DBLP:journals/corr/abs-2004-14794,DBLP:journals/pvldb/KhanBB17,DBLP:journals/vldb/CebiricGKKMTZ19,DBLP:journals/tcs/BlumeRS21}.
The problems with this plethora of existing summary models are that each model defines its own data structure that is designed for solving only a specific task and the models are evaluated using different metrics~\cite{DBLP:conf/www/StefanoniMK18,DBLP:conf/icde/NeumannM11,lodex2015,loupe2015,DBLP:conf/semWeb/PietrigaGADCGM18,DBLP:conf/esws/SpahiuPPRM16a,DBLP:conf/esws/SchaibleGS16,DBLP:conf/www/CiglanNH12,DBLP:conf/kcap/GottronSKP13}.
Furthermore, existing structural graph summarization algorithms are often designed and/or evaluated using static graphs only~\cite{DBLP:journals/csur/LiuSDK18,DBLP:conf/icde/NeumannM11,lodex2015,DBLP:conf/esws/SpahiuPPRM16a,DBLP:conf/esws/SchaibleGS16,DBLP:conf/www/CiglanNH12}.

Few structural graph summaries are designed for evolving graphs~\cite{DBLP:journals/ws/KonrathGSS12,DBLP:conf/edbt/GoasdoueGM19}.
To the best of our knowledge, there is no solution that updates only those parts of the graph summary where there have been updates in the graph, without requiring a change log and without having to store the data graph locally, which loses the benefit of the graph summary being typically orders of magnitude smaller than the original graph while capturing a set of predefined features.
Konrath \etal~\cite{DBLP:journals/ws/KonrathGSS12} compute their graph summary over a stream of vertex-edge-vertex triples. They can deal with the addition of new vertices and edges to the graph but not deletion of vertices or edges, or modification of their labels.
Similarly, Goasdou{\'{e}} \etal~\cite{DBLP:conf/edbt/GoasdoueGM19} only support iterative additions of vertices and edges to their structural graph summaries, and do not handle deletions.
Thus, these approaches are not suited to updating structural summaries of evolving graphs.
Goasdou{\'{e}} \etal also do not support payload information, which is required for tasks such as cardinality estimations and data search.
The purpose of their summaries is to visualize them to a human viewer.

Finally, there are also works on graph summarization, which use a technique called ``corrections''~\cite{DBLP:conf/www/ShinG0R19,DBLP:conf/kdd/KoKS20} to provide a concise representation of the full graph.
Thus, these works are actually more related to lossless graph compression.
A correction is a set of edges that are to be added or removed from the compressed graph to reconstruct the original graph~\cite{DBLP:conf/www/ShinG0R19}.
Incremental variants of corrections for graph compression also exist, where the changes are stored as sets of corrections~\cite{DBLP:conf/kdd/KoKS20}.

\paragraph{Incremental Subgraph Indices.}
Besides structural graph summaries that abstract from a graph based on common vertex or edge features, there are also general purpose graph database indices.
Commonly, graph databases use path indices, tree indices, and subgraph indices~\cite{DBLP:journals/pvldb/HanLPY10}.
A seminal approach on computing subgraph indices is DataGuide~\cite{DBLP:conf/vldb/GoldmanW97}, implemented in the Lore database management system (DBMS)~\cite{DBLP:journals/debu/Widom99}. 
A DataGuide is a graph index build on-the-fly while executing queries on an XML database.
It indicates to the query engine if and how a specific path defined in the query can be reached or not.
To this end, a DataGuide represents all possible paths between two vertices in an XML file.
While DataGuides operate on semi-structured data in terms of XML trees, Tran et al.~\cite{DBLP:journals/tkde/TranLR13, DBLP:conf/icde/TranWRC09} took up the idea and applied it to RDF data.
Representative Objects (RO) by Nestorov et al.~\cite{DBLP:conf/icde/NestorovUWC97} takes up the ideas of DataGuides and are also implemented in the Lore DBMS with focus on path queries, query optimization, and schema discovery.
While the Full RO capture a description of the global structure of the graph, the authors also introduce a notion of $k$-RO.
The latter limits the considered length of paths to a maximum of $k$.

Below, we focus on incremental subgraph indices as they most closely relate to structural graph summaries.
Yuan \etal~\cite{DBLP:conf/icde/YuanMYG12} propose an index based on mining frequent and discriminative features in subgraphs.
Their algorithm minimizes the number of index lookups for a given query, and regroups subgraphs based on newly added features.
The runtime performance was improved by the same authors in 2015~\cite{DBLP:conf/sigmod/YuanMYG15} and by Kansal and Spezzano in 2017~\cite{DBLP:conf/cikm/KansalS17}.
However, mining frequent features in subgraphs only optimizes the index for lookup operations for commonly used queries.
It does not compute a comprehensive graph summary along specified structural features.
Qiao \etal~\cite{DBLP:journals/pvldb/QiaoZC17} propose an approach to compute an index of isomorphic subgraphs in an unlabeled, undirected graph~$G$.
The goal is to find the set of subgraphs in~$G$ that are isomorphic to a given query pattern.
The result is a compression of the original graph that is suitable to answer, \eg cardinality queries regarding subgraphs.
Their work does yet not support changes in the graph.
The algorithm of Fan~\etal~\cite{DBLP:conf/sigmod/FanLLTWW11} can deal with graph changes for the subgraph isomorphism problem.
Their incremental computation of an index for isomorphic subgraphs is closely related to structural graph summarization, but it differs in that the graph pattern~$p$ is an input to the algorithm, not the output.
The goal of structural graph summarization is to compute, based on a summary model that specifies the features, the set of common graph patterns that occur in a graph.
Min \etal~\cite{DBLP:journals/pvldb/MinPPGIH21} propose an algorithm for continuous subgraph matching using an auxiliary data structure, which stores the intermediate results between a query graph and a dynamic data graph.
The authors consider undirected graphs where only vertices are labeled. 
Dynamic graphs are updated through a sequence of edge insertions and edge deletions.
The recent TipTap~\cite{tiptap2021} algorithm computes approximations of the frequent $k$-vertex subgraphs w.r.t. a given threshold in evolving graphs.
TipTap's purpose is to count the occurrences of different subgraphs in large, evolving graphs.
The evolution of graphs is modeled as a stream of updates on an existing graph.
Tesseract~\cite{DBLP:conf/eurosys/BindschaedlerML21} is a distributed framework for executing general graph mining algorithms on evolving graphs implemented using Apache Spark Structured Streaming.
It follows a vertex-centric approach to distribute updates to different workers following the assumption that, in general, changes effect only local graphs. 
Thus, few duplicate updates need to be detected.
Tesseract supports $k$-clique enumeration, graph keyword search, motif counting, and frequent subgraph mining. 

Another area of incremental subgraph indices is to consider an evolving set of queries over a static data graph. 
Duong \etal~\cite{DBLP:journals/pvldb/DuongHYWNA21} propose a streaming algorithm using approximate pattern matching to determine subgraph isomorphisms. 
They employ $k$-bisimulation to determine equivalent subgraphs and store them in an index.
However, this index is computed offline for a static graph only and their algorithm considers a stream of graph queries as input. 

\paragraph{Incremental Schema Discovery.}
Another area related to our work is incremental schema discovery from large datasets in NoSQL databases.
Wang \etal~\cite{DBLP:journals/pvldb/WangHZS15} propose an approach for incremental discovery of attribute-based schemata from JSON documents.
The schema is stored in a tree-like data structure, following the nesting of JSON objects.
The attributes are considered per document only, with no cross-document connections of attributes.
The algorithm incrementally adds information to the schema as more documents are processed.
Baazizi \etal~\cite{DBLP:conf/edbt/BaaziziLCGS17} also compute schemata from JSON objects, with focus on optional and mandatory fields.

In addition to document oriented formats like JSON, schema discovery is also used for graph data.
For example, XStruct~\cite{DBLP:conf/icde/HegewaldNW06} follows a heuristic approach to incrementally extract the XML schema of XML documents.
However, such schema discovery approaches cannot deal with modifications or deletions of nodes in the XML tree.
Other schema discovery approaches focus on generating (probabilistic) dataset descriptions.
Kellou-Menouer and Kedad~\cite{DBLP:conf/er/Kellou-MenouerK15} apply density-based hierarchical clustering on vertex and edge labels in a graph database.
This computes profiles that can be used to visualize the schema of the graph.
The term \enquote{incremental} for the related schema discovery algorithms refers in their work to the concept of incrementally processing large documents, not considering modifications and deletions.
Thus, they are not designed to update the discovered schema for evolving graphs.
Recently, Bouhamoum \etal~\cite{BouhamoumESWC2021} proposed to use density-based clustering to extract schema information from an RDF graph and to incrementally update the schema when new RDF instances arrive.
While the work can deal with additions, the deletion of edges and vertices is not considered.

\paragraph{Bisimulation for Graph Summarization}

Our formal language, FLUID, can be used to define graph summaries based on bisimulation as we have show in \cref{sec:iterative-bisimulation}. 
Generally, bisimulation deals with the computation of equivalence relations on states in labeled transition systems~\cite{Bisimulation:2009}.
States are considered equivalent or bisimilar, if equivalent states have transitions of the same types, \ie to equivalent states.
Labeled transition systems can be interpreted as edge-labeled graphs.
One distinguishes a forward and backward bisimulation, where the former is defined using outgoing edges and the latter is defined using incoming edges~\cite{DBLP:journals/vldb/CebiricGKKMTZ19,DBLP:journals/tcs/BlumeRS21}.
A stratified bisimulation, or $k$-bisimulation relaxes the former notion of bisimulation, such that only the neighbours up to distance~$k$ have to be bisimilar as well.

A notion of $k$-bisimulation w.r.t.\@ graph indices is introduced by seminal works such as the $k$-RO index by Nestorov \etal~\cite{DBLP:conf/icde/NestorovUWC97} and the T-indexes~\cite{DBLP:conf/icdt/MiloS99} by Milo and Suciu.
Furthermore, in the existing research, one can observe that $k$-bisimulation is a popular feature for structural graph summarization~\cite{DBLP:journals/tcs/BlumeRS21}.  
The following works make use of $k$-bisimulation in their approach to structural graph summarization:
Buneman et al. make use of forward $k$-bisimulation in the problem of RDF graph alignment \cite{DBLP:journals/pvldb/BunemanS16}.
Summarizing the union of two consecutive versions $G_\text{union} = G_1 \cup G_2$ of an RDF graph with respect to $k$-bisimulation, puts vertices to be aligned in the same partition.
Additionally to $k$-bisimulation, they use a similarity measure to further refine the initial $k$-bisimulation partition, as it does not capture all vertices to be aligned.
The focus of their work is the optimization of the alignment process, so that every node pair $(v_1, v_2)$, with $v_1 \in G_1$ and $v_2 \in G_2$, which have to be aligned is identified and not the construction of a $k$-bisimulation-based partition of $G$.
Combining forward- and backward-bisimulation, Tran et al. compute a structural index for graphs based on forward-backward $k$-bisimulation~\cite{DBLP:journals/tkde/TranLR13}.
Moreover, they parameterize their notion of bisimulation to a forward-set $L_1$ and a backward-set $L_2$, so that only labels $l \in L_1$ are considered for forward-bisimulation and labels $l \in L_2$ for backward-bisimulation.
However, similar to Buneman \etal, the particular focus of their work is not the actual construction of the structural index, \eg the bisimulation partition.
Rather, they evaluate how one can efficiently optimize query processing on semi-structured data using such an index graph~\cite{DBLP:journals/tkde/TranLR13}.

Schätzle \etal compute a forward $k$-bisimulation on RDF graphs in a sequential and a distributed setting~\cite{DBLP:conf/sigmod/SchatzleNLP13}.
For a small synthetic dataset ($\sim 1M$ RDF-triples) the sequential algorithm slightly outperforms the distributed one, whereas for increasing size of the dataset, the distributed algorithm clearly outperforms the sequential one.
Different to this, Kaushik et al. propose a summarization technique based on backward $k$-bisimulation~\cite{DBLP:conf/icde/KaushikSBG02}.
Their A(k)-Index serves as a framework for computing backward $k$-bisimulation of a graph $G$, which results in an index graph $I(G)$.
Qun et al.~\cite{DBLP:conf/sigmod/QunLO03} have extended the A(k)-Index to a D(k)-Index, which also bases on bisimulation but focuses on query optimization.
To this end, the D(k)-Index dynamically adapts its structure according to the current query load.

Regarding distributed computation on bisimulation on external disk-based memory, 
Luo \etal examine structural graph summarization with respect to forward $k$-bisimulation~\cite{DBLP:conf/cikm/LuoFHWB13}.
Furthermore, they empirically observe, that for values of $k > 5$, the summary graphs partition blocks change little or not at all.
Therefore they state, that for summarizing a graph with respect to $k$-bisimulation, it is sufficient to summarize up to a value of $k = 5$~\cite{DBLP:conf/sigmod/LuoFHBW13}.
Finally, Martens et al.~\cite{DBLP:conf/facs2/0001GHHW21} introduced a parallel algorithm for bisimulation and transferred it run the code on massive parallel devices such as graphics processing units (GPUs).
The motivation is that a GPU resembles a form of parallel random access machine, where computations can scale through large amounts of parallelism. 
The presented approach does not support multi-GPU support and is tested on a single NVIDIA Titan RTX with $24$ GB of GPU RAM. 
This significantly limits the applicability to large datasets.
However, the proposed blocking mechanism could be combined with our vertex-centric programming approach to further improve the performance. 

\section{Conclusion and Future Work}

We have presented a parallel algorithm for structural graph summarization.
We have empirically evaluated this algorithm on several synthetic and real-world graph datasets and analyzed its theoretical runtime and space complexity.
We have shown that this parallel algorithm can server as base for an extension to an incremental graph summarization algorithm for structural graph summaries over evolving graphs.
We analyzed the complexity of the incremental algorithm and empirically evaluated time and space requirements for three typical graph summary models on two benchmark and two real-world datasets.
The incremental summarization algorithm almost always outperforms its batch counterpart, even in cases when from one version to the next version about $50\%$ of the graph database changes.
Furthermore, the incremental summarization algorithm outperforms the batch summarization algorithm even when using fewer cores, despite the batch computation benefiting more from parallelization than the incremental algorithm.
Finally, we analyzed the memory overhead of the incremental algorithm induced by the \vertexHashIndex{}.
Using the real-world DyLDO-core dataset, we found that using four cores, the incremental algorithm is on average $5$ to $44$~times faster, with a memory overhead of only $8\%$ ($\pm 1\%$).
Finally, we have produced an optimized version of the parallel algorithm to efficiently compute $k$-bisimulations on large graphs. 
We have demonstrated the applicability of this algorithm on different synthetic and real-world graphs.
We have shown that it scales to large graphs with tens of millions and up to $1$~billion edges.

In our future work, we aim to develop a combination of the two extensions of the base algorithm for parallel graph summarization.
Thus, we plan to produce an incremental $k$-bisimulation algorithm for evolving graphs.
However, this combination is non-trivial since when using the hash-based messaging model, we cannot easily update the hashes when the graph changes.
Our work shows that incremental bisimulation is possible and we plan to further optimize the data structure of our incremental algorithm to allow efficient incremental bisimulation computation.
The vertex-centric programming model Pregel used in our algorithms is inherently iterative, synchronous, and deterministic~\cite{DBLP:conf/sigmod/MalewiczABDHLC10,DBLP:phd/dnb/Erb20} and achieves parallelism very similar to MapReduce~\cite{DBLP:journals/cacm/DeanG10}.
Selecting an optimal partitioning strategy of the graph for a given summary model to compute the summary in a distributed infrastructure (as distinct from the multi-core approach of our experiments), is also an open research topic.  
Triggered by the recent popularity of deep graph representation learning, we also search for solutions to replace the explicit summarization algorithm by graph neural networks.
Finally, an interesting future work would be to aim to exploit our approach on Multi-GPU systems instead of Multi-CPUs for the graph summarization computation. 
This would extend on the work of Martens et al.~\cite{DBLP:conf/facs2/0001GHHW21} for GPU-based bisimulation, which is limited to a single GPU.

\paragraph{\textbf{Conflicts of interest/Competing interests.}}
The authors declare that they have no conflict of interest.

\paragraph{\textbf{Availability of data and material and Code availability.}}
Our framework, the experimental apparatus, and the results are available on GitHub under an open source license~\cite{our-github}.

\paragraph{\textbf{Authors' contributions.}}
\textbf{Till Blume}: data curation, formal analysis, investigation, methodology, software, validation, visualization, writing –- original draft, and writing -- review \& editing.
\textbf{Jannik Rau}: data curation, formal analysis, investigation, methodology, software, validation, visualization, writing~-- review \& editing.
\textbf{Ansgar Scherp}: conceptualization, investigation, writing -- review \& editing, and supervision.
\textbf{David Richerby}: formal analysis, writing -- review \& editing, and supervision.

\extended{
Statement is based on the Contributor Roles Taxonomy, see: 
\url{http://credit.niso.org/}.

\paragraph{\textbf{Ethics approval.}} Not applicable.

\paragraph{\textbf{Consent to participate.}} Not applicable.

\paragraph{\textbf{Consent for publication.}} Not applicable.
}

\bibliographystyle{spmpsci}      

\bibliography{references-clean}

\end{document}